\keywords{Martin-L\"{o}f type theory, constructive set theory, Mahlo universe types}
\theoremstyle{plain} 
\newcommand{\ov}[1]{\overline{ #1 }}
\newcommand{\tld}[1]{\widetilde{ #1 }}
\newcommand{\wht}[1]{\widehat{ #1 }}
\newcommand{\MLM}{\mathbf{MLM}}
\newcommand{\MLMacc}{\mathbf{MLM_{acc}}}
\newcommand{\MLMext}{\mathbf{MLM_{ext}}}
\newcommand{\CZF}{\mathbf{CZF}}
\newcommand{\CZFM}{\mathbf{CZFM}}
\newcommand{\MLQ}{\mathbf{MLQ}}
\newcommand{\MLTT}{\mathbf{MLTT}}
\newcommand{\KPM}{\mathbf{KPM}}
\newcommand{\Id}{\mathrm{Id}}
\newcommand{\type}{\mathrm{type}}
\newcommand{\ctxt}{\mathrm{ctxt}}
\newcommand{\app}{\mathsf{app}}
\newcommand{\suc}{\mathsf{suc}}
\newcommand{\refl}{\mathsf{refl}}
\newcommand{\res}[2]{\mathsf{res}^{#1}_{#2}}
\newcommand{\Var}[1]{\mathrm{Var}(#1 )}
\newcommand{\ri}{\mathsf{i}}
\newcommand{\rj}{\mathsf{j}}
\newcommand{\rp}{\mathsf{p}}
\newcommand{\rE}{\mathsf{E}}
\newcommand{\rN}{\mathrm{N}}
\newcommand{\rW}{\mathrm{W}}
\newcommand{\rV}{\mathbb{M}}
\newcommand{\rTV}{\mathrm{T}_{\mathbb{M}}}
\newcommand{\rU}{\mathrm{U}}
\newcommand{\rT}{\mathrm{T}}
\newcommand{\Fam}[1]{\mathrm{Fam}( #1 )}
\newcommand{\rO}{\mathrm{O}}
\newcommand{\ru}{\mathrm{u}}
\newcommand{\TI}{\mathrm{TI}}
\newcommand{\idx}{\mathsf{index}}
\newcommand{\pred}{\mathsf{pred}}
\newcommand{\Acc}{\mathrm{Acc}}
\newcommand{\prog}{\mathsf{prog}}
\newcommand{\wcons}{\mathsf{sup}}
\newcommand{\fm}{\mathsf{fm}}
\newcommand{\intro}{\mathsf{in}}
\newcommand{\el}{\mathsf{el}}
\newcommand{\eq}{\mathsf{eq}}
\newcommand{\tc}[1]{#1_{\mathsf{tc}}}
\newcommand{\TC}[1]{\mathsf{TC}(#1)}
\newcommand{\erefl}{\mathsf{r}}
\newcommand{\bV}{\mathbf{V}}
\newcommand{\bbV}{\mathbb{V}}
\newcommand{\cV}{\mathcal{V}}
\newcommand{\iteM}[4]{\mathcal{U}^{#1 , #2}_{(#3 , #4)}}
\newcommand{\iteT}[4]{\mathrm{T}^{#1 , #2}_{(#3 , #4)}}
\newcommand{\iteV}[4]{\mathcal{V}^{#1 , #2}_{(#3 , #4)}}
\newcommand{\itebV}[4]{\mathbf{V}^{#1 , #2}_{(#3 , #4)}}
\newcommand{\iteh}[4]{\mathbf{h}^{#1 , #2}_{(#3 , #4)}}
\newcommand{\iteMh}[4]{\wht{\mathcal{U}}^{#1 , #2}_{(#3 , #4)}}
\newcommand{\iteTh}[4]{\wht{\mathrm{T}}^{#1 , #2}_{(#3 , #4)}}
\newcommand{\iteVh}[4]{\wht{\mathcal{V}}^{#1 , #2}_{(#3 , #4)}}
\newcommand{\iteHh}[4]{\Phi^{#1 , #2}_{(#3 , #4)}}
\def\eg{{\em e.g.}}
\def\cf{{\em cf.}}
\begin{document}

\title[Inaccessible Sets in Martin-L\"{o}f Type Theory with One Mahlo Universe]{Interpretation of Inaccessible Sets in\texorpdfstring{\\}{} Martin-L\"{o}f Type Theory with One Mahlo Universe}

\author[Y.~Takahashi]{Yuta Takahashi\lmcsorcid{0000-0002-5214-7077}}

\address{Aomori University, 2-3-1 Kobata, Aomori City, Aomori 030-0943, Japan}	
\email{y.takahashi@aomori-u.ac.jp}  






\begin{abstract}
  \noindent Rathjen proved that Aczel's constructive set theory $\CZF$ extended with inaccessible sets of all transfinite orders can be interpreted in Martin-L\"{o}f type theory $\MLTT$ extended with Setzer's Mahlo universe and another universe above it. In this paper we show that this interpretation can be carried out bottom-up without the universe above the Mahlo universe, provided we add an accessibility predicate instead. If we work in Martin-L\"{o}f type theory with extensional identity types the accessibility predicate can be defined in terms of $\rW$-types. The main part of our interpretation has been formalised in the proof assistant Agda.
\end{abstract}

\maketitle

\section{Introduction}\label{sec:intro}

\subsection*{Background and Aim}
Martin-L\"{o}f type theory $\MLTT$ was extended by Setzer \cite{setzer2000,setzer2008} with the so-called Mahlo universe types to provide a variant of intensional $\MLTT$ equipped with an analogue of Mahlo cardinals. Specifically, a Mahlo universe $\rV$ has a reflection property similar to that of Mahlo cardinals \cite{mahlo1911} (see also, \eg, \cite{drake1974}): for any function $f$ of type $\Sigma_{(x : \rV)}(\rTV x \to \rV) \to \Sigma_{(x : \rV)}(\rTV x \to \rV)$, where $\rTV$ is the decoding function for $\rV$, the universe $\rV$ contains the subuniverse which is closed under $f$ and denoted by $\rU_f$. The resulting system, $\MLM$, is thus an instance of a constructive system extended with an analogue of Mahlo cardinals. Setzer's purpose of introducing $\MLM$ is to obtain an extension of $\MLTT$ whose proof-theoretic ordinal is slightly greater than that of $\KPM$, \textit{i.e.}, Kripke-Platek set theory with one recursively Mahlo ordinal, where the proof-theoretic ordinals of a system enable measurement of the proof-theoretic strength. The proof-theoretic ordinal of $\MLM$ was determined by the results of \cite{setzer2000,setzer2008}.

Another instance of constructive systems extended with an analogue of Mahlo cardinals was developed in the context of Aczel's constructive set theory: $\CZF$ \cite{aczel1978,aczel1982,aczel1986}. Rathjen \cite{rathjen2003} formulated the system $\CZFM$ by extending $\CZF$ with the axiom asserting the existence of one Mahlo set. He then verified that $\CZFM$ is interpretable in an extension of $\MLM$ which has one usual Tarski universe above the Mahlo universe $\rV$. Though he calls this extension $\MLM$, Setzer's $\MLM$ does not have any universe above $\rV$ and so we call this extension $\MLM^+$ to distinguish it from $\MLM$. The main purpose of introducing $\CZFM$ is a proof-theoretic one, as in the case of Mahlo universes. The interpretation of $\CZFM$ in $\MLM^+$ implies that $\CZFM$ has at most the proof-theoretic strength of $\MLM^+$.

On the other hand, Rathjen, Griffor and Palmgren \cite{RathjenGrifforPalmgren1998} introduced the system $\CZF_{\pi}$, which is an extension of $\CZF$ with the inaccessible sets of all transfinite orders \cite{mahlo1911}. They also introduced an extension of $\MLTT$ with second-order universe operators called $\MLQ$, where the base theory of $\MLQ$ is extensional $\MLTT$ in \cite{martinloef1984}. Then, Rathjen, Griffor and Palmgren showed that $\CZF_{\pi}$ has at most the proof-theoretic strength of $\MLQ$ by verifying that $\CZF_{\pi}$ is interpretable in $\MLQ$. As to the strength of large sets concerned, $\CZF_{\pi}$ is weaker than $\CZFM$: this claim is verified by Rathjen in \cite{rathjen2003}, where he proved that $\CZF_{\pi}$ is interpretable in $\CZFM$.

In summary, Rathjen \cite{rathjen2003} have shown that $\CZF_{\pi}$ can be interpreted in $\MLM^+$ by proving that $\CZF_{\pi}$ and $\CZFM$ are interpretable in $\CZFM$ and $\MLM^+$, respectively. However, it was unknown whether a universe above the Mahlo universe $\rV$ is indispensable for the interpretation of $\CZF_{\pi}$ in $\MLTT$ with the Mahlo universe $\rV$. Rathjen used the universe above $\rV$ to define the type of Aczel's iterative sets, which is a model of $\CZFM$ in $\MLM^+$. So the universe above $\rV$ is crucial to his argument, and adding one universe type makes a system stronger as shown by Setzer \cite{setzer1998,setzer1993}. Are the inaccessible sets of $\CZF_{\pi}$ interpretable by $\rV$ without a universe above it?

In this paper, we show that $\CZF_{\pi}$ is interpretable in $\MLM$ with the accessibility predicate $\Acc$: one can dispense with a universe above the Mahlo universe, but the trade-off is the use of $\Acc$. The accessibility predicate $\Acc$ provides the transfinite induction principle on Aczel's iterative sets. This predicate can be defined in Dybjer-Setzer's theory of induction-recursion \cite{DybjerSetzer2006}, which is a finite axiomatisation of Dybjer's general schema of simultaneous inductive-recursive definitions \cite{dybjer2000}. Specifically, one can define $\Acc$ by indexed inductive definition. To the best of our knowledge, $\MLTT$ without the schema of indexed inductive definition is not sufficient to define indexed inductive types such as $\Acc$, though some progress was made by, \eg, Hugunin \cite{hugunin2021}. We thus add $\Acc$ as a base type to $\MLM$, while we show in Appendix~\ref{sec:accproof} that \textit{extensional} $\MLTT$ in \cite{martinloef1984} can define $\Acc$. So, in the extensional setting, $\Acc$ is not necessary in our interpretation of $\CZF_{\pi}$ as in that of \cite{RathjenGrifforPalmgren1998}.

Contrary to Rathjen's elegant but ``top-down'' argument, the Mahlo universe $\rV$ together with $\Acc$ enables to provide a ``bottom-up'' argument to interpret $\CZF_{\pi}$. Below we first define a higher-order (in fact, second-order) universe operator using the reflection property of $\rV$ in $\MLM$. Next, the transfinite hierarchy of Aczel's iterative sets is constructed by iterating this universe operator along the accessibility given by $\Acc$. Finally, we verify that these sets are indeed type-theoretic counterparts of inaccessible sets. Besides this aim, we provide an explicit construction of super universes as subuniverses of $\rV$ for the expository purpose. In particular, it will be shown that the version of super-universe construction which is given by the higher-order universe operator above can be iterated.

Moreover, we formalised the main part of our proof in the proof assistant Agda \cite{AgdaTeam2024}. The Mahlo universe $\rV$ is simulated by the \textit{external} Mahlo universe in Agda: the external Mahlo universe is a proof-theoretically weak variant of $\rV$, and can be defined with induction-recursion as Dybjer and Setzer have shown \cite{DybjerSetzer2003}. So Agda is equipped with the external Mahlo universe because Agda supports even indexed induction-recursion, which is a generalisation of induction-recursion. This means that Agda also has the accessibility predicate $\Acc$. In Agda we defined the higher-order universe operator above by using the reflection property of the external Mahlo universe, and formulated our type-theoretic interpretation of $\CZF_{\pi}$. At the end of Section~\ref{sec:inacc}, we will comment on the part of our proof which is not formalised yet. 

Note that neither our argument in this paper nor our formalisation postulates extensional principles such as the rules of extensional identity types, except that Appendix~\ref{sec:accproof} derives all inference rules for $\Acc$ by using extensional $\MLTT$ in \cite{martinloef1984}. The Agda code for the formalisation is available online \cite{takahashi2025}. The clickable symbol \texttt{Link}, which is attached to some statements in the main part of our proof, will open the Agda code in \cite{takahashi2025} corresponding to those statements. All files in this repository can be typechecked with Agda version 2.7.0.1.

In Table~\ref{tab:systems} we list all of the systems discussed in this paper.
\begin{table}[t]
  \centering
  \scalebox{0.95}{
  \begin{tabular}{@{}ll@{}}
    \toprule
    \multicolumn{1}{@{}c@{}}{Acronym} & \multicolumn{1}{@{}c@{}}{Description} \\
    \midrule
    $\MLM$ & intensional Martin-L\"{o}f type theory with Setzer's Mahlo universe \cite{setzer2000,setzer2008}\\
    $\MLM^+$ & $\MLM$ with a universe above its Mahlo universe \cite{rathjen2003} \\
    $\MLQ$ & extensional Martin-L\"{o}f type theory with second-order universe operators \cite{RathjenGrifforPalmgren1998} \\
    $\MLMacc$ & $\MLM$ with an accessibility predicate \\
    $\MLMext$ & $\MLM$ with extensional identity types \\
    $\CZF$ & Aczel's constructive Zermelo-Fraenkel set theory \cite{aczel1978,aczel1982,aczel1986}\\
    $\CZF_{\pi}$ & $\CZF$ with inaccessible sets of all transfinite orders \cite{RathjenGrifforPalmgren1998} \\
    $\CZFM$ & $\CZF$ with one Mahlo set \cite{rathjen2003} \\
    \bottomrule
  \end{tabular}
  }
  \caption{The Systems Discussed in the Present Paper}
  \label{tab:systems}
\end{table}

\subsection*{Related Work}
The interpretation of $\CZF$ in $\MLTT$ was first discussed in \cite{aczel1978}. Here Aczel set out $\CZF$ and interpreted it in the fragment $\mathbf{ML_1 V}$ of $\MLTT$, which has dependent function types (\textit{i.e.}, $\Pi$-types), dependent pair types (\textit{i.e.}, $\Sigma$-types), disjoint union types, the empty type, the unit type, the natural number type, the universe type and an instance of $\rW$-types. Iterative sets in $\CZF$ were interpreted by this instance of $\rW$-types. Later, the set-theoretic axiom of dependent choice was interpreted in the extension of $\mathbf{ML_1 V}$ by extensional identity types \cite{aczel1982}. Moreover, the regular extension axiom ($\mathbf{REA}$), which states that any set is a subset of a regular set, was interpreted by further adding $\rW$-types \cite{aczel1986}.

Griffor and Rathjen \cite{GrifforRathjen1994} showed that $\CZF$ and $\mathbf{ML_1 V}$ have the same proof-theoretic strength and there is a fragment of $\MLTT$ with the same proof-theoretic strength as $\CZF + \mathbf{REA}$. Setzer \cite{setzer1998,setzer1993} determined the proof-theoretic strength of the fragment of $\MLTT$ obtained from $\mathbf{ML_1 V}$ by adding intensional identity types and $\rW$-types: he computed its proof-theoretic ordinal without appealing to Aczel's interpretation of $\CZF$ in $\MLTT$.

As aforementioned, Rathjen, Griffor and Palmgren \cite{RathjenGrifforPalmgren1998} introduced the extension $\CZF_{\pi}$ of $\CZF$ by the axiom asserting the existence of inaccessible sets of all transfinite orders. The type system in which $\CZF_{\pi}$ was interpreted is $\MLQ$, which is obtained from extensional $\MLTT$ in \cite{martinloef1984} by adding the inductive type $\mathrm{Q}$ and universe type $\mathrm{M}$: $\mathrm{Q}$ is an inductive type of codes for operators which provides universes closed under universe operators constructed previously, and $\mathrm{M}$ is a universe closed under the operators whose codes belong to $\mathrm{Q}$. Our higher-order universe operator is inspired by the one defined with the types $\mathrm{Q}$ and $\mathrm{M}$, and our interpretation of $\CZF_{\pi}$ is the same as that of \cite{RathjenGrifforPalmgren1998}, except for the interpretation of inaccessible sets. While $\MLM$ is proof-theoretically stronger than $\MLQ$, it is an open question whether the intensional variant of $\MLQ$ can be interpreted in $\MLM$. There is a difficulty in simulating the type of \textit{codes} for operators of $\MLQ$ by $\MLM$ because such a type of codes for higher-order objects is absent from $\MLM$. Rathjen \cite{rathjen2000} introduced the fragment $\mathbf{MLF}$ of $\MLQ$ and showed that two weaker variants of $\CZF_{\pi}$ have the same proof-theoretic strength as $\mathbf{MLF}$.

The relationship between $\MLTT$ and $\CZF$ in the context of an analogue for Mahlo cardinals has already been examined by Rathjen \cite{rathjen2003}, as shown above. He introduced the notion of a Mahlo set into $\CZF$ and verified that the resulting system $\CZFM$ is interpretable in $\MLM^+$, which is an extension of $\MLM$ with one Tarski universe above the Mahlo universe. He also showed that $\CZF_{\pi}$ is interpretable in $\CZFM$, but this result is not explicitly stated and so we briefly recall it in Section~\ref{sec:czf}.

\subsection*{Contributions}
Our contributions are summarised as follows:
\begin{enumerate}
\item The principal contribution of this paper is the construction of inaccessible sets of transfinite orders in $\MLM$ with the accessibility predicate $\Acc$. To describe this interpretation, we use both the reflection property of $\rV$ and the accessibility predicate $\Acc$. We define a higher-order universe operator with the reflection property of $\rV$ and iterate this operator along the accessibility given by $\Acc$.

\item A secondary contribution is the Agda formalisation of the main proof for the result above. We utilise indexed induction-recursion in Agda for formalisation: the Mahlo universe of $\MLM$ is simulated by the external Mahlo universe defined with induction-recursion, and the predicate $\Acc$ is defined by using indexed inductive definition.
  
\end{enumerate}

\subsection*{Organisation of the Paper}
In Section~\ref{sec:unim}, we describe the system $\MLM$ and define a higher-order universe operator using the reflection property of the Mahlo universe $\rV$. Section~\ref{sec:large} first provides a brief introduction to $\CZF_{\pi}$ and then shows that $\CZF_{\pi}$ is interpretable in $\MLM$ with the accessibility predicate $\Acc$. Specifically, we iterate the application of our higher-order operator along the accessibility given by $\Acc$ to construct the hierarchy of inaccessible sets in $\MLM$.


\section{A Higher-Order Universe Operator by the Mahlo Universe in $\MLM$}\label{sec:unim}
In this section, we first present the system $\MLM$ of Martin-L\"{o}f type theory with one Mahlo universe (Section~\ref{sec:mlm}). Next, we formulate a higher-order universe operator using the reflection property of the Mahlo universe (Section~\ref{sec:high}). This operator will be used to construct the transfinite hierarchy of inaccessible sets.


\subsection{Martin-L\"{o}f Type Theory with One Mahlo Universe}\label{sec:mlm}
We follow Setzer's work \cite{setzer2000,setzer2008,setzer2008a} in defining the rules of universes in the system $\MLM$ of Martin-L\"{o}f type theory with one Mahlo universe, while we follow \cite{hofmann1997,setzer1998} in defining the inference rules other than those of universes. The system $\MLM$ is an intensional $\MLTT$ (\textit{i.e.}, an $\MLTT$ with the intensional identity types) formulated in the polymorphic way. See \cite{NPS1990,DybjerPalmgren2020} for the distinction between intensional $\MLTT$'s and extensional ones; see \cite{NPS1990} for the distinction between the polymorphic formulation and the monomorphic one.

In this paper, syntactic variables are used possibly with suffixes. Let an infinite set of variables be given. Variables are denoted by $x,y,z,v,w$, and numerals $0,1,\ldots$ are denoted by $n , m , k$. \textit{Pre-terms} $a , b , c , \ldots$ and \textit{pre-types} $A , B , C , \ldots$ are defined by simultaneous induction in Table \ref{tab:pretypes}. The intuitive meanings of pre-terms and pre-types will be explained by the inference rules of $\MLM$.
\begin{table}[t]
\begin{align*}
  a , b , c , d , f , g , h , l ::= &\: x \mid 0 \mid m_n \mid \lambda x . a \mid \app ( a , b ) \mid ( a , b ) \mid \rE_{\Sigma} ( a , b ) \mid  \\
  &\: \ri ( a ) \mid \rj ( a ) \mid \rE_+ ( a , b , c ) \mid \suc ( a ) \mid \rE_{\rN} ( a , b , c ) \mid \\
  &\: \rE_n ( a_0 , \ldots , a_n ) \mid \wcons ( a , b ) \mid \rE_{\rW} ( a , b ) \mid \refl ( a ) \mid \rE_{\Id} ( a , b ) \mid \\
  &\: \wht{\rN}_{\rV} \mid \wht{\rN_n}_{\rV} \mid \wht{\Pi}_{\rV} ( a , b ) \mid \wht{\Sigma}_{\rV} ( a , b ) \mid a \wht{+}_{\rV} b \mid \wht{\rW}_{\rV} ( a , b ) \mid \\
  &\:  \wht{\Id}_{\rV} ( a , b , c ) \mid \wht{\rU}_{f} \mid \wht{\rT}_{f} (a) \mid \res{0}{f} ( a , b ) \mid \res{1}{f} ( a , b , c ) \\
  &\: \wht{\rN}_{f} \mid \wht{\rN_n}_{f} \mid \wht{\Pi}_{f} ( a , b ) \mid \wht{\Sigma}_{f} ( a , b ) \mid a \wht{+}_{f} b \mid \wht{\rW}_{f} ( a , b ) \mid \wht{\Id}_{f} ( a , b , c ) \\
  A , B , C , D ::= &\: \rN \mid \rN_n \mid \rV \mid \rU_{f} \mid \Pi_{(x : A)}B \mid \Sigma_{(x : A)}B \mid A + B \mid \\
  &\: \rW_{(x : A)} B \mid \Id ( A , a , b ) \mid \rTV (a)
\end{align*}
\caption{Pre-Terms and Pre-Types}
\label{tab:pretypes}
\end{table}
The pre-terms of the form $\app (a , b)$ are abbreviated as $a \: b$. When a variable $x$ is not free in $B$, we write $\Pi_{(x : A)} B$ as $A \to B$, and $\Sigma_{(x : A)} B$ as $A \times B$. The occurrences of $x$ are bound in the expressions of the form $\Pi_{(x : A)} B , \Sigma_{(x : A)} B , \rW_{(x : A)} B$ or $\lambda x . a$. Consecutive applications $k_1 (k_2 (\cdots k_n (a) \cdots ))$ of $\ri , \rj$ with $k_1 , k_2 , \ldots , k_n \in \{ \ri , \rj \}$ are abbreviated as $k_1 k_2 \cdots k_n a$. Any two $\alpha$-equivalent expressions are identified. In addition, we adopt Barendregt's variable condition: no variable occurs both as a free one and as a bound one in an expression, and all bound variables in an expression are distinct. We denote by $e [e_1 / x_1 ,\ldots , e_n / x_n]$ the expression obtained by substituting $e_1 ,\ldots , e_n$ for $x_1 ,\ldots , x_n$ in $e$ consecutively. We always assume that no bound variable in $e$ occurs freely in any of $e_1 , \ldots , e_n$, by applying the $\alpha$-conversion if necessary.

\textit{Judgements} are expressions which have one of the following forms:
\begin{itemize}
\item $\vdash \Gamma \; \ctxt$, which means that $\Gamma$ is a valid context,

\item $\Gamma \vdash A \; \type$, which means that $A$ is a type in a context $\Gamma$,

\item $\Gamma \vdash a : A$, which means that $a$ is a term of type $A$ in a context $\Gamma$,

\item $\Gamma \vdash A = B$, which means that $A$ and $B$ are definitionally equal types in a context $\Gamma$,

\item $\Gamma \vdash a = b : A$, which means that $a$ and $b$ are definitionally equal terms of type $A$ in a context $\Gamma$.
\end{itemize}
According to the propositions-as-types conception, $\Gamma \vdash a : A$ means that $a$ is a proof of the proposition $A$ in the context $\Gamma$. One can consider two definitionally equal expressions to be two expressions whose results of computation are equal. In an extensional $\MLTT$ a judgement of the form $\Gamma \vdash A = B$ or $\Gamma \vdash a = b : A$ no longer means that the two expressions mentioned are definitionally equal: an extensional $\MLTT$ treats these judgement forms in a more extensional manner (for the rules of extensional identity types, see Appendix~\ref{sec:accproof}). We denote by $\theta$ an expression of one of the following forms:
\[
A \; \type , \qquad a : A , \qquad A = B , \qquad a = b : A .
\]

The system $\MLM$ is a formal system to derive judgements of the forms above by means of inference rules. Its inference rules are classified as follows.
\begin{itemize}
\item General rules: those of contexts, assumptions, substitutions and the definitional equality,

\item Rules for standard types: those of $\Pi$-types, $\Sigma$-types, $+$-types, $\rN$-type, $\rN_n$-types, $\rW$-types and $\Id$-types,

\item Rules for universe types: those of Mahlo universe types and non-Mahlo universe types.
\end{itemize}
Each collection of rules for standard types includes the \textit{formation rule}, the \textit{introduction rules}, the \textit{elimination rules} and the \textit{equality rules}. The formation rule explains how to form the types, and the introduction rules enable the construction of canonical elements of the types. On the other hand, the elimination rules provide executable elements of the types, and the equality rules specify how to execute (or compute) these elements in an equational way. Here we present the rules for Mahlo universe only, and the other rules of $\MLM$ are presented in Appendix~\ref{sec:mlmrules}. For detailed explanations of the above rules except those of Mahlo universes, see, \eg, \cite{martinloef1984,NPS1990,hofmann1997,DybjerPalmgren2020}. Below we denote the Mahlo universe in $\MLM$ by $\rV$, while it is denoted by $\mathrm{V}$ in \cite{setzer2000,setzer2008,setzer2008a}.

For the Mahlo Universe $(\rV , \rTV)$, we have the $\rV$-formation rule $\rV \fm$, the $\rV$-introduction rules $\rV \intro$ and the $\rTV$-formation rule $\rTV \fm$ with the inference rules which define the decoding function $\rTV$ recursively. Congruence rules are defined for the $\rTV$-formation rule and all $\rV$-introduction rules except $\rV \intro_{\rN}$ and $\rV \intro_{\rN_n}$, though we omit to write down them. Some examples of congruence rules can be found in the list of the rules for $\Pi$-types in Appendix~\ref{sec:mlmrules}.
    \[
    \infer[\rV \fm]{\Gamma \vdash \rV \; \type}{\vdash \Gamma \; \ctxt}
    \qquad
    \infer[\rTV \fm]{\Gamma \vdash \rTV (a) \; \type}{\Gamma \vdash a : \rV}
    \]
    \[
    \infer[\rV \intro_{\rN}]{\Gamma \vdash \wht{\rN}_{\rV} : \rV}{\vdash \Gamma \; \ctxt}
    \qquad
    \infer{\Gamma \vdash \rTV ( \wht{\rN}_{\rV} ) = \rN}{\vdash \Gamma \; \ctxt}
    \]
    \[
    \infer[\rV \intro_{\rN_n}]{\Gamma \vdash \wht{\rN_n}_{\rV} : \rV}{\vdash \Gamma \; \ctxt}
    \qquad
    \infer{\Gamma \vdash \rTV ( \wht{\rN_n}_{\rV} ) = \rN_n}{\vdash \Gamma \; \ctxt}
    \]
    \[
    \infer[\rV \intro_{\Pi}]{\Gamma \vdash \wht{\Pi}_{\rV} ( a , b ) : \rV}{
      \deduce{\Gamma \vdash b : \rTV ( a ) \to \rV}{\Gamma \vdash a : \rV}
    }
    \qquad
    \infer{\Gamma \vdash \rTV ( \wht{\Pi}_{\rV} ( a , b ) ) = \Pi_{(x : \rTV ( a ))} \rTV ( b \: x ) }{
      \deduce{\Gamma \vdash b : \rTV ( a ) \to \rV}{\Gamma \vdash a : \rV}
    }
    \]
    \[
    \infer[\rV \intro_{\Sigma}]{\Gamma \vdash \wht{\Sigma}_{\rV} ( a , b ) : \rV}{
      \deduce{\Gamma \vdash  b : \rTV ( a ) \to \rV}{\Gamma \vdash a : \rV}
    }
    \qquad
    \infer{\Gamma \vdash \rTV ( \wht{\Sigma}_{\rV} ( a , b ) ) = \Sigma_{(x : \rTV ( a ))} \rTV ( b \: x) }{
      \deduce{\Gamma \vdash b : \rTV ( a ) \to \rV}{\Gamma \vdash a : \rV}
    }
    \]
    \[
    \infer[\rV \intro_{+}]{\Gamma \vdash a \wht{+}_{\rV} b : \rV}{
      \deduce{\Gamma \vdash b : \rV}{\Gamma \vdash a : \rV}
    }
    \qquad
    \infer{\Gamma \vdash \rTV ( a \wht{+}_{\rV} b ) = \rTV ( a ) + \rTV ( b ) }{
      \deduce{\Gamma \vdash b : \rV}{\Gamma \vdash a : \rV}
    }
    \]
    \[
    \infer[\rV \intro_{\rW}]{\Gamma \vdash \wht{\rW}_{\rV} ( a , b ) : \rV}{
      \deduce{\Gamma \vdash b : \rTV ( a ) \to \rV}{\Gamma \vdash a : \rV}
    }
    \qquad
    \infer{\Gamma \vdash \rTV ( \wht{\rW}_{\rV} ( a , b ) ) = \rW_{(x : \rTV ( a ))} \rTV ( b \: x) }{
      \deduce{\Gamma \vdash b : \rTV ( a ) \to \rV}{\Gamma \vdash a : \rV}
    }
    \]
    \[
    \infer[\rV \intro_{\Id}]{\Gamma \vdash \wht{\Id}_{\rV} ( a , b , c ) : \rV}{
      \deduce{\Gamma \vdash c : \rTV ( a )}{
        \deduce{\Gamma \vdash b : \rTV ( a )}{\Gamma \vdash a : \rV}
      }
    }
    \qquad
    \infer{\Gamma \vdash \rTV ( \wht{\Id}_{\rV} ( a , b , c ) ) = \Id ( \rTV ( a ) , b , c  ) }{
      \deduce{\Gamma \vdash c : \rTV ( a )}{
        \deduce{\Gamma \vdash b : \rTV ( a )}{\Gamma \vdash a : \rV}
      }
    }
    \]
\pagebreak[5] 

The rules concerning the codes $\wht{\rU}_{f}$ for subuniverses of $\rV$ are also included as follows.
    \[
    \infer[\rV \intro_{\rU}]{\Gamma \vdash \wht{\rU}_{f} : \rV}{
      \Gamma \vdash f : \Sigma_{(x : \rV)} (\rTV ( x ) \to \rV) \to \Sigma_{(x : \rV)} (\rTV ( x ) \to \rV)
    }
    \]
    \[
    \infer[\rTV \mathsf{uni}]{\Gamma  \vdash \rTV ( \wht{\rU}_{f} ) = \rU_{f}}{
      \Gamma \vdash f : \Sigma_{(x : \rV)} (\rTV ( x ) \to \rV) \to \Sigma_{(x : \rV)} (\rTV ( x ) \to \rV)
    }
    \]
    \[
    \infer[\rV \intro_{\wht{\rT}}]{\Gamma  \vdash \wht{\rT}_{f} (a) : \rV}{
      \deduce{\Gamma  \vdash a : \rU_{f}}{
        \Gamma \vdash f : \Sigma_{(x : \rV)} (\rTV ( x ) \to \rV) \to \Sigma_{(x : \rV)} (\rTV ( x ) \to \rV)
      }
    }
    \]
    
As we have seen in Section~\ref{sec:intro}, the Mahlo universe $\rV$ reflects functions on families of small types in $\rV$. The system $\MLM$ expresses this feature as follows. First of all, the $\rV \intro_{\rU}$ rule states that there is a subuniverse $\rU_{f}$ of $\rV$ closed under $f : \Sigma_{(x : \rV)} (\rTV (x) \to \rV ) \to \Sigma_{(x : \rV)} (\rTV (x) \to \rV )$ with its code $\wht{\rU}_{f} : \rV$. The closure under $f$ is then expressed by $\res{0}{f}$ and $\res{1}{f}$, which provide the restriction of $f$ to $\rU_{f}$: we have $\res{0}{f} ( a , b ) : \rU_{f}$ and $\res{1}{f} ( a , b , c) : \rU_{f}$ for any $a : \rU_{f}$, $b : \rT_{f} ( a ) \to \rU_{f}$ and $c : \rT_{f} (\res{0}{f} ( a , b ))$. The following computation rules given by the corresponding inference rules of $\MLM$ show that $\res{0}{f}$ and $\res{1}{f}$ actually provide the restriction of $f$ to $\rU_{f}$ via the ``injection'' $\wht{\rT}_{f}$ from $\rU_{f}$ to $\rV$.
\begin{align*}
  \wht{\rT}_{f} ( \res{0}{f} ( a , b ) ) &= \rp_1 (f \: (\wht{\rT}_{f} ( a ) , \lambda x . \wht{\rT}_{f} (b \: x))) : \rV \\
  \wht{\rT}_{f} ( \res{1}{f} ( a , b , c ) ) &= \rp_2 (f \: (\wht{\rT}_{f} ( a ) , \lambda x . \wht{\rT}_{f} ( b \: x))) \: c : \rV
\end{align*}
That is, $\wht{\rT}_{f}$ maps $\res{0}{f} ( a , b )$ to the left projection of $f \: (\wht{\rT}_{f} ( a ) , \lambda x . \wht{\rT}_{f} (b \: x))$, and $\res{1}{f} ( a , b , c )$ to the right projection of $f \: (\wht{\rT}_{f} ( a ) , \lambda x . \wht{\rT}_{f} (b \: x))$ applied to $c$.

We work informally in $\MLM$ to improve readability. For any context $\Gamma$, put
\[
\Var{\Gamma} := \{ x \mid \text{$x : A$ belongs to $\Gamma$ for some type $A$} \}.
\]
The $\Sigma$-type $\Sigma_{(x : \rV)} (\rTV (x) \to \rV )$ is abbreviated as $\Fam{\rV}$. The type constructor $+$ for sum types is treated as left associative, \textit{i.e.},
\[
A_1 + A_2 + \cdots + A_n := (\cdots (A_1 + A_2 ) + \cdots + A_n ).
\]
Below we write $\Id (A , a , b )$ as $a =_{A} b$.
\begin{exa}[Universes above Families of Small Types]\label{exa:uni}
  The Mahlo universe $\rV$ provides universes above families of small types as follows. Let $a_0 : \rV$ and $b_0 : \rTV (a_0) \to \rV$ be given, and assume that neither $a_0$ nor $b_0$ has any free occurrence of a variable $x$. Defining $f_0 : \Fam{\rV} \to \Fam{\rV}$ as $f_0 := \lambda x . (a_0 , b_0)$, we have the universe $\rU_{f_0}$ with the decoding function $\wht{\rT}_{f_0}$. The universe $\rU_{f_0}$ can be considered as a universe above $(a_0 , b_0)$, as the following equations show:
  \begin{align*}
    \wht{\rT}_{f_0} ( \res{0}{f_0} ( \wht{\rN_0}_{f_0} , \lambda v . \rE_0 (v) ) ) &= \rp_1 \Bigl( f_0 \: (\wht{\rT}_{f_0} ( \wht{\rN_0}_{f_0} ) , \lambda v . \wht{\rT}_{f_0} (\rE_0 (v))) \Bigr) = \rp_1 ((a_0 , b_0)) = a_0 \\
    \wht{\rT}_{f_0} ( \res{1}{f_0} ( \wht{\rN_0}_{f_0} , \lambda v . \rE_0 (v) , c ) ) &= \rp_2 \Bigl( f_0 \: (\wht{\rT}_{f_0} ( \wht{\rN_0}_{f_0} ) , \lambda v . \wht{\rT}_{f_0} ( \rE_0 (v) )) \Bigr) \: c = \rp_2 ((a_0 , b_0)) \: c = b_0 \: c
  \end{align*}
  That is, $\res{0}{f_0} ( \wht{\rN_0}_{f_0} , \lambda v . \rE_0 (v) )$ is a code for $a_0$ in $\rU_{f_0}$, and $\res{1}{f_0} ( \wht{\rN_0}_{f_0} , \lambda v . \rE_0 (v) , c )$ is a code for $b_0 \: c$ in $\rU_{f_0}$, where $\wht{\rN_0}_{f_0}$ and $\lambda v . \rE_0 (v)$ are dummy arguments.
\end{exa}

\begin{exa}[Super Universe]\label{exa:suni}
  A super universe, which is a universe closed under a universe operator, can be defined by reflecting a function with a parameter. Put $f_1 : \Fam{\rV} \to \Fam{\rV}$ as $f_1 := \lambda z . y$, where the free variable (or the parameter) $y$ is of type $\Fam{\rV}$. We have $(\wht{\rU}_{f_1} , \wht{\rT}_{f_1}) : \Fam{\rV}$ by reflecting $f_1$. Then, we define $g := \lambda y . (\wht{\rU}_{f_1} , \wht{\rT}_{f_1})$, and obtain the universe $\rU_{g}$ by reflecting $g$. Note that $g : \Fam{\rV} \to \Fam{\rV}$ is a universe operator because it takes a family of small types in $\rV$ and returns a universe above this family. The universe $\rU_{g}$ is thus a super universe, because $\res{0}{g}$ and $\res{1}{g}$ show that $\rU_{g}$ is actually closed under $g$ as the equations below show: for any $a : \rU_{g}$ and $b : \rT_{g} (a) \to \rU_{g}$,
  \begin{align*}
    \wht{\rT}_{g} ( \res{0}{g} ( a , b ) ) &= \rp_1 \Bigl( g \: (\wht{\rT}_{g} ( a ) , \lambda v . \wht{\rT}_{g} (b \: v)) \Bigr) \\
    &= \rp_1 (\wht{\rU}_{\lambda z . (\wht{\rT}_{g} ( a ) , \lambda v . \wht{\rT}_{g} (b \: v))} , \wht{\rT}_{\lambda z . (\wht{\rT}_{g} ( a ) , \lambda v . \wht{\rT}_{g} (b \: v))}) \\
    &= \wht{\rU}_{\lambda z . (\wht{\rT}_{g} ( a ) , \lambda v . \wht{\rT}_{g} (b \: v))} \\
    \wht{\rT}_{g} ( \res{1}{g} ( a , b , c ) ) &= \rp_2 \Bigl( g \: (\wht{\rT}_{g} (a) , \lambda v . \wht{\rT}_{g} (b \: v )) \Bigr) \: c \\
    &= \rp_2 (\wht{\rU}_{\lambda z . (\wht{\rT}_{g} ( a ) , \lambda v . \wht{\rT}_{g} (b \: v))} , \wht{\rT}_{\lambda z . (\wht{\rT}_{g} ( a ) , \lambda v . \wht{\rT}_{g} (b \: v))}) \: c \\
    &= \wht{\rT}_{\lambda z . (\wht{\rT}_{g} ( a ) , \lambda v . \wht{\rT}_{g} (b \: v))} \: c
  \end{align*}
  That is, $\res{0}{g} ( a , b )$ is a code of the universe above the family $(a , b)$, and $\res{1}{g} ( a , b )$ is a code of its decoding function. Notice that $\rU_g$ also has a code $\res{0}{g} ( \res{0}{g} ( a , b ) , \res{1}{g} ( a , b ) )$ of the universe above the family $(\res{0}{g} ( a , b ) , \res{1}{g} ( a , b ))$, and so on.
\end{exa}


\subsection{Construction of A Higher-Order Universe Operator}\label{sec:high}
Following Palmgren \cite{palmgren1998}, we define the type $\rO$ of \textit{first-order operators} and the type $\Fam{\rO}$ of \textit{families of first-order operators} as
\[
\rO := \Fam{\rV} \to \Fam{\rV}, \qquad \qquad \Fam{\rO} := \Sigma_{(x : \rV)} (\rTV (x) \to \rO ) .
\]
For instance, let $f$ be of type $\rN_3 \to \rO$. Then, $(\wht{\rN_3}_{\rV} , f) : \Fam{\rO}$ is a family of first-order operators which consists of the operators $f \; 0_3 , f \; 1_3 , f \; 2_3 : \rO$. We note that $\rO$ and $\Fam{\rO}$ are just the lowest fragment of Palmgren's higher-order operators defined in \cite{palmgren1998}: he introduced the notion of $n$-order operator for an arbitrary natural number $n$.

We define the higher-order operator $\ru^{\mathbb{M}} : \Fam{\rO} \to \rO$ by using the reflection property of $\rV$. Typically, this operator takes a family of universe operators as an input, and returns a universe operator which provides a universe closed under all of the operators in this family. So an output of $\ru^{\mathbb{M}}$ can be its input again, and this process will be iterated transfinitely to construct the type-theoretic counterparts of inaccessible sets.
\begin{defi}[Operator $\ru^{\mathbb{M}}$, \href{https://github.com/takahashi-yt/czf-in-mahlo/blob/67c9dbf50bee67b73c1ea009929f4838458b8edc/src/ExternalMahlo.agda\#L101}{\texttt{Link}}]\label{def:uop}
  The higher-order operator $\ru^{\mathbb{M}} : \Fam{\rO} \to \rO$ is defined as follows. Let $(z , v) : \Fam{\rO}$ and $(x , y) : \Fam{\rV}$ be given. Then, by using $+ \el$ repeatedly, one can define
  \[
  h^{\mathbb{M}} : \Pi_{(w : \Fam{\rV})} \Bigl( \rN_1 + \rTV (x) + \rTV (z) + \Sigma_{(w' : \rTV (z))} \rTV (\rp_1 (v \: w' \: w)) \to \rV \Bigr)
  \]
  such that
  \begin{itemize}
  \item $h^{\mathbb{M}} \: w \: (\ri \ri \ri \: x_1) = x$ with $x_1 : \rN_1$,

  \item $h^{\mathbb{M}} \: w \: (\ri \ri \rj \: x_2) = y \: x_2$ with $x_2 : \rTV (x)$,

  \item $h^{\mathbb{M}} \: w \: (\ri \rj \: y_1) = \rp_1 (v \: y_1 \: w)$ with $y_1 : \rTV (z)$,

  \item $h^{\mathbb{M}} \: w \: (\rj (y_1 , z_1)) = \rp_2 (v \: y_1 \: w) \: z_1$ with $y_1 : \rTV (z)$ and $z_1 : \rTV (\rp_1 (v \: y_1 \: w))$.
    
  \end{itemize}
  Next, put $f^{\mathbb{M}} [z , v , x , y] : \rO$ as
  \[
  f^{\mathbb{M}} [z , v , x , y] := \lambda w . (\wht{\rN_1}_{\rV} \: \wht{+}_{\rV} \: x \: \wht{+}_{\rV} \: z \: \wht{+}_{\rV} \: \wht{\Sigma}_{\rV} (z , \lambda w' . \rp_1 (v \: w' \: w)) , \: h^{\mathbb{M}} \: w) .
  \]
  Then, we define $\ru^{\mathbb{M}} \: (z , v) \: (x , y) := (\wht{\rU}_{f^{\mathbb{M}} [z , v , x , y]} , \wht{\rT}_{f^{\mathbb{M}} [z , v , x , y]} )$. By applying $\Sigma \el$ twice, we have $\ru^{\mathbb{M}} : \Fam{\rO} \to \rO$.
\end{defi}

Roughly speaking, $h^{\mathbb{M}} \; w$ in the definition above is a family of small types in $\rV$ whose index set is
\[
\rN_1 + \rTV (x) + \rTV (z) + \Sigma_{(w' : \rTV (z))} \rTV (\rp_1 (v \: w' \: w)) .
\]
That is,
\begin{itemize}
\item $h^{\mathbb{M}} \: w \: (\ri \ri \ri \: x_1)$ is the small type $x : \rV$, where $x_1$ is of type $\rN_1$,

\item $h^{\mathbb{M}} \: w \: (\ri \ri \rj \: x_2)$ is the small type $y \: x_2 : \rV$ given by the family $y : \rTV (x) \to \rV$ of small types and the index $x_2 : \rTV (x)$,

\item $h^{\mathbb{M}} \: w \: (\ri \rj \: y_1)$ is the small type $\rp_1 (v \: y_1 \: w) : \rV$ given by the family $v : \rTV (z) \to \rO$ of first-order operators, the index $y_1 : \rTV (z)$ and $w : \Fam{\rV}$,

\item $h^{\mathbb{M}} \: w \: (\rj (y_1 , z_1))$ is the small type $\rp_2 (v \: y_1 \: w) \: z_1 : \rV$ given by the family $v : \rTV (z) \to \rO$ of first-order operators, the index $y_1 : \rTV (z)$, $w : \Fam{\rV}$ and $z_1 : \rTV (\rp_1 (v \: y_1 \: w))$.
    
\end{itemize}
In particular, the two small types $h^{\mathbb{M}} \: w \: (\ri \rj \: y_1)$ and $h^{\mathbb{M}} \: w \: (\rj (y_1 , z_1))$ are the outputs of the first-order operator $v \: y_1$ from the family $v : \rTV (z) \to \rO$.

For comparison, we show that super universes can be defined by using the operator $\ru^{\mathbb{M}}$ and this construction can be iterated. In \cite{palmgren1998} Palmgren formulated the family of theories for higher-order universe operators, and showed that it provides not only a construction of super universes, but also the iteration of this construction. The difference between our example and Palmgren's consists in the use of the Mahlo universe $\rV$: we defined the operator $\ru^{\mathbb{M}}$ by means of $\rV$, which is absent in Palmgren's theories for higher-order universe operators. Our example explains how to iterate the super-universe construction in $\MLM$.
\begin{exas}[Super Universe and Universes above Families of Small Types]\label{exa:opu}
  Take $g$ as defined in Example \ref{exa:suni}, and let $(a , b) : \Fam{\rV}$ be the case. Using the operator $\ru^{\mathbb{M}}$, we have the subuniverse of $\rV$ and its decoding function
  \[
  \ru^{\mathbb{M}} \: (\wht{\rN_1}_{\rV} , \lambda x .\rE_1 (x , g)) \: (a , b) = (\wht{\rU}_{g^{\ast}} , \wht{\rT}_{g^{\ast}} ) ,
  \]
  where we abbreviate $f^{\mathbb{M}} [\wht{\rN_1}_{\rV} , \lambda x .\rE_1 (x , g) , a , b]$ as $g^*$. This subuniverse has the codes for $a$ and $b \: c$ with $c : \rTV (a)$, as shown below.
  \begin{align*}
    \wht{\rT}_{g^*} (\res{1}{g^*} (\wht{\rN_0}_{g^*} , \lambda x . \rE_0 (x) , \ri \ri \ri \: 0_1 )) &= h^{\mathbb{M}} \: (\wht{\rN_0}_{\rV} , \lambda x . \wht{\rT}_{g^*} (\rE_0 (x))) \: (\ri \ri \ri \: 0_1) = a \\
    \wht{\rT}_{g^*} (\res{1}{g^*} (\wht{\rN_0}_{g^*} , \lambda x . \rE_0 (x) , \ri \ri \rj \: c )) &= h^{\mathbb{M}} \: (\wht{\rN_0}_{\rV} , \lambda x . \wht{\rT}_{g^*} (\rE_0 (x))) \: (\ri \ri \rj \: c) = b \: c
  \end{align*}
  Moreover, the subuniverse $\ru^{\mathbb{M}} \: (\wht{\rN_1}_{\rV} , \lambda x .\rE_1 (x , g)) \: (a , b)$ is closed under the universe operator $g$. Let $a' : \rU_{g^*}$ and $b' : \rT_{g^*} (a') \to \rU_{g^*}$ be given. We then have
  \begin{align*}
    \wht{\rT}_{g^*} (\res{1}{g^*} (a' , b' , \ri \rj \: 0_1)) &= h^{\mathbb{M}} \: (\wht{\rT}_{g^*}(a') , \lambda x . \wht{\rT}_{g^*} (b' \: x)) \: (\ri \rj \: 0_1) \\
    &= \rp_1 \Bigl( (\lambda y . \rE_1 (y , g)) \: 0_1 \: (\wht{\rT}_{g^*}(a') , \lambda x . \wht{\rT}_{g^*} (b' \: x)) \Bigr) \\
    &= \rp_1 \Bigl( g \: (\wht{\rT}_{g^*}(a') , \lambda x . \wht{\rT}_{g^*} (b' \: x)) \Bigr) = \wht{\rU}_{\lambda z . (\wht{\rT}_{g^*}(a') , \lambda x . \wht{\rT}_{g^*} (b' \: x))}\\
    \wht{\rT}_{g^*} (\res{1}{g^*} (a' , b' , \rj (0_1 , c))) &= h^{\mathbb{M}} \: (\wht{\rT}_{g^*}(a') , \lambda x . \wht{\rT}_{g^*} (b' \: x)) \: (\rj (0_1 , c)) \\
    &= \rp_2 \Bigl( (\lambda y . \rE_1 (y , g)) \: 0_1 \: (\wht{\rT}_{g^*}(a') , \lambda x . \wht{\rT}_{g^*} (b' \: x)) \Bigr) \: c \\
    &= \rp_2 \Bigl( g \: (\wht{\rT}_{g^*}(a') , \lambda x . \wht{\rT}_{g^*} (b' \: x)) \Bigr) \: c = \wht{\rT}_{\lambda z . (\wht{\rT}_{g^*}(a') , \lambda x . \wht{\rT}_{g^*} (b' \: x))} (c)
  \end{align*}
  Thus, $\res{1}{g^*} (a' , b' , \ri \rj \: 0_1)$ and $\lambda w' . \res{1}{g^*} (a' , b' , \rj (0_1 , w'))$ provide the codes in $\rU_{g^*}$ for the universe obtained by applying the universe operator $g$ to $\wht{\rT}_{g^*}(a')$ and $\lambda x . \wht{\rT}_{g^*} (b'\: x)$. This construction can be iterated as
  \[
  \res{1}{g^*} (\res{1}{g^*} (a' , b' , \ri \rj \: 0_1) , \lambda w' . \res{1}{g^*} (a' , b' , \rj (0_1 , w')) , \ri \rj \: 0_1) ,
  \]
  and so on.

  Note that $g^*$ is a first-order super-universe operator. So one can construct a universe closed under this super-universe operator by replacing $g$ with $g^*$ in the argument above, and this process can be iterated.
\end{exas}


\section{Inaccessible Sets in the Mahlo Universe of $\MLM$}\label{sec:large}
Below we first recall Aczel's constructive set theory $\CZF$ and then present Rathjen-Griffor-Palmgren's extension of $\CZF$ by inaccessible sets of all transfinite orders (Section~\ref{sec:czf}). Next, we prove our main theorem that inaccessible sets of $\CZF_{\pi}$ are interpretable in $\MLM$ with the accessibility predicate $\Acc$ (Section~\ref{sec:inacc}).


\subsection{$\CZF$ with Inaccessible Sets of All Transfinite Orders}\label{sec:czf}
The language of $\CZF$ is a usual first-order language with the equality $=$ and the membership relation $\in$ except that the restricted quantifiers $\forall x \in y$ and $\exists x \in y$ are primitive symbols. The symbol $\in$ is the only non-logical symbol in this language. The underlying logic of $\CZF$ is intuitionistic logic with equality, and the following set-theoretic axioms are postulated (\cf~\cite{AczelRathjen2001}).
\begin{description}
\item[Pairing]
  \[
  \forall a \forall b \exists c \forall x (x \in c \leftrightarrow x = a \lor x = b)
  \]

\item[Union]
  \[
  \forall a \exists b \forall x (x \in b \leftrightarrow \exists y \in a (x \in y))
  \]

\item[Infinity]
  \[
  \exists a (\exists x (x \in a) \land \forall x \in a \exists y \in a (x \in y))
  \]

\item[Extensionality]
  \[
  \forall a \forall b (\forall x (x \in a \leftrightarrow x \in b) \to a = b)
  \]

\item[Set Induction] For any formula $\varphi [x]$,
  \[
  \forall a (\forall x \in a \varphi [x] \to \varphi [a]) \to \forall a \varphi [a] .
  \]

\item[Restricted Separation] For any restricted formula $\varphi [x]$,
  \[
  \forall a \exists b \forall x (x \in b \leftrightarrow x \in a \land \varphi [x]) ,
  \]
  where a formula is \textit{restricted} if all of its quantifiers are restricted.

\item[Subset Collection] For any formula $\psi [x , y , u]$,
  \begin{align*}
    &\forall a \forall b \exists c \forall u \\
    &(\forall x \in a \exists y \in b \psi [x , y , u] \to \exists z \in c (\forall x \in a \exists y \in z \psi [x , y , u] \land \forall y \in z \exists x \in a \psi [x , y , u])) .
  \end{align*}

\item[Strong Collection] For any formula $\psi [x , y]$,
  \[
  \forall a (\forall x \in a \exists y \psi [x , y] \to \exists b (\forall x \in a \exists y \in b \psi [x , y] \land \forall y \in b \exists x \in a \psi [x , y])) .
  \]
\end{description}
Informally, one may suppose that the range of bound variables in $\CZF$ forms a collection of objects. We call these objects Aczel's \textit{iterative sets} or \textit{sets} shortly.

To state the axiom that there are inaccessible sets of all transfinite orders, we need to define the notion of regular set. For any set $a$, $a$ is \textit{transitive} if $\forall x \in a \forall y \in x (y \in a)$ holds.
\begin{defi}[Regular Sets]
  A set $a$ is regular if the following conditions are satisfied:
  \begin{enumerate}
  \item $a$ is inhabited, i.e., $\exists y (y \in a)$,

  \item $a$ is transitive,

  \item $\forall b \in a \forall R (R \subseteq b \times a \land \forall x \in b \exists y (\langle x , y \rangle \in R) \to$

    $\exists c \in a (\forall x \in b \exists y \in c (\langle x , y \rangle \in R) \land \forall y \in c \exists x \in b (\langle x , y \rangle \in R))$.
  \end{enumerate}
\end{defi}

The notion of inaccessible set is captured by \textit{set-inaccessibility}.
\begin{defi}
  A set $a$ is set-inaccessible if the following conditions are satisfied:
  \begin{enumerate}
  \item $a$ is regular,

  \item $\forall x \in a \exists y \in a (x \subseteq y \land y \text{ is regular})$,

  \item $a$ is a model of $\CZF$, i.e., the structure $\langle a , \in \upharpoonright (a \times a) \rangle$ is a model of $\CZF$.
  \end{enumerate}
\end{defi}

As in classical Zermelo-Fraenkel set theory, one can talk about \textit{classes} in $\CZF$. A class $A$ is said to be \textit{unbounded in} a class $B$ if $\forall x \in B \exists y \in A (x \in y \land y \in B)$ holds. Let $\TC{a}$ be the \textit{transitive closure} of a set $a$ with the following property (see \cite[Proposition 4.2]{AczelRathjen2001}):
\[
\TC{a} = a \cup \bigcup \{ \TC{x} \mid x \in a \} .
\]
Inaccessible sets of transfinite orders are defined along the hierarchy of Aczel's iterative sets.
\begin{defi}\label{def:ainacc}
  Let $a$ and $b$ be sets, then $b$ is $a$-set-inaccessible if $b$ is set-inaccessible and there is a family $(x_c)_{c \in \TC{a}}$ of sets such that for any $c \in \TC{a}$,
  \begin{itemize}
  \item $x_c$ is unbounded in $b$,

  \item $x_c$ consists of sets which are set-inaccessible,

  \item for any $y \in x_c$ and any $d \in \TC{c}$, $x_d$ is unbounded in $y$.
  \end{itemize}
\end{defi}
Informally speaking, a set $a$ represents a transfinite ``order'' in the hierarchy of iterative sets, so $a$-set-inaccessibles can be considered as inaccessible sets of order $a$. Note that $x_d$ in the third condition is well-defined because for any sets $a , c$ and $y$, if $c \in \TC{a}$ and $y \in \TC{c}$ hold then $y \in \TC{a}$ holds. This claim is proved by $\mathsf{TC}$-induction (see \cite[Proposition 4.3]{AczelRathjen2001}) on $a$: if $c \in a$ holds, then we have $y \in \TC{a}$ by $y \in \bigcup \{ \TC{x} \mid x \in a \}$. If $c \in \bigcup \{ \TC{x} \mid x \in a \}$ holds, then $c$ is an element of $\TC{x}$ for some $x \in \TC{a}$. By the hypothesis of $\mathsf{TC}$-induction, we have $y \in \TC{x}$ and so $y \in \TC{a}$ holds.

The first lemma below follows from the definition of $a$-set-inaccessibility, and the second lemma follows from the first (\cf~\cite[Lemma 2.11]{RathjenGrifforPalmgren1998}). For readers' convenience, we recall their proofs below.
\begin{lem}[$\CZF$]\label{lem:tcinacc}
  For any sets $a , b$ and $c$, if $a$ is $b$-set-inaccessible and $c \in \TC{b}$ holds, then $a$ is $c$-set-inaccessible.
\end{lem}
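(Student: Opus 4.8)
The plan is to unfold the definition of $b$-set-inaccessibility and exploit the idempotency-like property of the transitive closure operator, namely that $c \in \TC{b}$ together with $d \in \TC{c}$ implies $d \in \TC{b}$ — precisely the fact established just above the statement by $\mathsf{TC}$-induction. Since $a$ is $b$-set-inaccessible, $a$ is set-inaccessible and there is a family $(x_e)_{e \in \TC{b}}$ witnessing this, where each $x_e$ is unbounded in $a$, consists of set-inaccessibles, and satisfies the nesting condition that for any $y \in x_e$ and any $e' \in \TC{e}$, the set $x_{e'}$ is unbounded in $y$. To show $a$ is $c$-set-inaccessible for $c \in \TC{b}$, I must exhibit a family indexed by $\TC{c}$ with the same three properties.

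\emph{The natural witness} is the restriction of the given family: define $(x_e)_{e \in \TC{c}}$ by reusing exactly the same sets $x_e$, which is legitimate precisely because $e \in \TC{c}$ and $c \in \TC{b}$ give $e \in \TC{b}$, so $x_e$ is already defined by the family witnessing $b$-set-inaccessibility. First I would check that $a$ is set-inaccessible — this is immediate, as it is part of the hypothesis that $a$ is $b$-set-inaccessible. Then, for each $e \in \TC{c}$, the first two bullet conditions (that $x_e$ is unbounded in $a$ and consists of set-inaccessibles) hold verbatim, since they held for all indices in $\TC{b} \supseteq \TC{c}$. For the third condition, given $y \in x_e$ and $d \in \TC{e}$, I need $x_d$ to be unbounded in $y$; but $e \in \TC{c} \subseteq \TC{b}$, so this is exactly an instance of the third bullet of the original family, and it holds directly.

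\emph{The only real content} is verifying that the index set shrinks consistently, i.e.\ that every index $e \in \TC{c}$ needed for the restricted family genuinely lies in $\TC{b}$ so that $x_e$ is available and inherits its properties. This reduces to the transitivity property $\TC{c} \subseteq \TC{b}$ whenever $c \in \TC{b}$, which follows from the fact recalled before the lemma; I would simply cite it. I do not anticipate a genuine obstacle here — the lemma is essentially a bookkeeping argument showing that the defining data for $b$-set-inaccessibility restricts to defining data for $c$-set-inaccessibility along the containment $\TC{c} \subseteq \TC{b}$. The proof is short and proceeds by checking each clause of Definition~\ref{def:ainacc} against the corresponding clause of the hypothesis.
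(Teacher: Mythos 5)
Your proposal is correct and matches the paper's own proof: both restrict the witnessing family $(x_d)_{d \in \TC{b}}$ to indices in $\TC{c}$, using the transitivity property of $\TC{}$ (that $c \in \TC{b}$ and $d \in \TC{c}$ imply $d \in \TC{b}$) to ensure the restriction is well-defined, and then verify the three clauses of Definition~\ref{def:ainacc} directly. No gaps; this is the same bookkeeping argument.
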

\begin{proof}
  By assumption, there is a family $(x_d)_{d \in \TC{b}}$ of sets witnessing the $b$-set-inaccessibility of $a$. To show its $c$-set-inaccessibility, we define a family $(y_d)_{d \in \TC{c}}$ of sets as follows: for any $d \in \TC{c}$, we have $d \in \TC{b}$, so we put $y_d := x_d$. It immediately follows that $y_d$ is unbounded in $a$ and that $y_d$ consists of sets which are set-inaccessible. Let $z \in y_d$ and $d' \in \TC{d}$ be the case. It follows that $z \in x_d$ holds and so $x_{d'}$, namely, $y_{d'}$ is unbounded in $z$.
\end{proof}

\begin{lem}[$\CZF$]\label{lem:setinacc}
  Let $a$ be set-inaccessible. Then, $a$ is $b$-set-inaccessible if and only if for all $c \in b$, the class of $c$-set-inaccessibles is unbounded in $a$.
\end{lem}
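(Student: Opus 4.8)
The plan is to prove the two directions separately, both of them exploiting the coherence clause in the definition of $b$-set-inaccessibility (Definition~\ref{def:ainacc}) and the fact, established after that definition, that $e \in \TC{b}$ whenever $c \in \TC{b}$ and $e \in \TC{c}$ (so that $\TC{c} \subseteq \TC{b}$). The backward direction will additionally invoke Lemma~\ref{lem:tcinacc}, which is presumably why the statement is advertised as a consequence of the first lemma.

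For the forward direction I would assume $a$ is $b$-set-inaccessible and fix the witnessing family $(x_d)_{d \in \TC{b}}$. Given $c \in b$ (hence $c \in \TC{b}$), I want the class of $c$-set-inaccessibles to be unbounded in $a$. For $z \in a$, unboundedness of $x_c$ in $a$ yields $y \in x_c$ with $z \in y \in a$, and the point is that this $y$ is already $c$-set-inaccessible: it is set-inaccessible because $x_c$ consists of set-inaccessibles, and the subfamily $(x_e)_{e \in \TC{c}}$ witnesses its $c$-set-inaccessibility. Indeed each $x_e$ is unbounded in $y$ by the coherence clause applied to $y \in x_c$, consists of set-inaccessibles since $e \in \TC{c} \subseteq \TC{b}$, and satisfies the nested coherence again by the coherence clause of the original family (using $\TC{e} \subseteq \TC{b}$). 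This gives a $c$-set-inaccessible $y$ with $z \in y \in a$, as required.

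For the backward direction I would assume that for every $c \in b$ the class of $c$-set-inaccessibles is unbounded in $a$, and first upgrade this to: for every $d \in \TC{b}$ the class of $d$-set-inaccessibles is unbounded in $a$. If $d \in b$ this is the hypothesis; otherwise $d \in \TC{x}$ for some $x \in b$, and by Lemma~\ref{lem:tcinacc} every $x$-set-inaccessible is $d$-set-inaccessible, so the unbounded class of $x$-set-inaccessibles is contained in that of $d$-set-inaccessibles, which is therefore unbounded in $a$ too. I would then take $x_d$ to be the set of those $y \in a$ that are $d$-set-inaccessible, for each $d \in \TC{b}$, and verify the three clauses of Definition~\ref{def:ainacc}. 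Unboundedness of $x_d$ in $a$ is immediate from the upgraded hypothesis, and $x_d$ consists of set-inaccessibles because $d$-set-inaccessibility entails set-inaccessibility. For the coherence clause, given $y \in x_d$ and $e \in \TC{d}$, I would unfold the family $(z_{e'})_{e' \in \TC{d}}$ witnessing the $d$-set-inaccessibility of $y$: for $p \in y$, unboundedness of $z_e$ in $y$ supplies $q \in z_e$ with $p \in q \in y$, whence $q \in a$ by transitivity of $a$; and $q$ is in fact $e$-set-inaccessible, its witness being the tail $(z_f)_{f \in \TC{e}}$, whose clauses follow from the nested coherence of $(z_{e'})$ applied to $q \in z_e$ together with $\TC{e} \subseteq \TC{d}$. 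Thus $q \in x_e$ with $p \in q \in y$, giving unboundedness of $x_e$ in $y$.

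I expect the main obstacle to be precisely this coherence verification in the backward direction: one must extract the family internal to each $y \in x_d$ and reassemble its tails into families certifying that the chosen witnesses $q$ are $e$-set-inaccessible and not merely set-inaccessible, while tracking the $\TC$-inclusions throughout. A secondary point deserving care is the formation of $x_d$ as a set, since it is separated out of $a$ by the predicate ``$y$ is $d$-set-inaccessible''; as $\CZF$ offers only Restricted Separation, one should either relativise this predicate or appeal to the separation legitimised by the surrounding development to ensure $x_d$ is genuinely a set and that $(x_d)_{d \in \TC{b}}$ forms a set-indexed family.
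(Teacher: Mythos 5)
Your forward direction coincides with the paper's proof: from the witnessing family $(x_d)_{d \in \TC{b}}$ you extract, for each $c \in \TC{b}$, the subfamily indexed by $\TC{c} \subseteq \TC{b}$ and check the three clauses of Definition~\ref{def:ainacc}, so that every member of $x_c$ is in fact $c$-set-inaccessible; unboundedness of $x_c$ in $a$ then finishes. No issues there, and your upgrade of the hypothesis from $c \in b$ to all $d \in \TC{b}$ via Lemma~\ref{lem:tcinacc} at the start of the backward direction also matches the paper exactly.

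The gap is the point you set aside as ``a secondary point deserving care'': in $\CZF$ you cannot define $x_d := \{ y \in a \mid y \text{ is $d$-set-inaccessible} \}$ at all. The formula ``$y$ is $d$-set-inaccessible'' is not restricted: it begins with an unbounded existential quantifier over witnessing families (and already regularity quantifies unboundedly over relations $R$), so \textbf{Restricted Separation} does not apply, and in an intuitionistic theory with neither Power Set nor Full Separation there is no apparent way to prove that this class is a set, let alone that $d \mapsto x_d$ exists as a set-indexed family, which Definition~\ref{def:ainacc} demands. Neither of your suggested escape routes is carried out: relativising the quantifiers to $a$ would make the formula restricted, but you would then owe a far-from-obvious absoluteness argument that relativised $d$-set-inaccessibility agrees with the genuine notion for $y \in a$; and there is no ``separation legitimised by the surrounding development'' --- avoiding exactly this separation is what the paper's backward direction is structured around. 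The paper never forms the full class: it applies \textbf{Strong Collection} twice to $\forall d \in \TC{b}\, \forall z \in a\, \exists v \in a\, (z \in v \wedge v \text{ is $d$-set-inaccessible})$ to collect a set $S$ of triples $\langle d , v , z \rangle$, carves out $S_d \subseteq a$ by a now-legitimate Restricted Separation (membership in $S$ is bounded), applies Strong Collection twice more to collect a set $\mathcal{F}$ of witnessing families for the members of the sets $S_d$, and finally puts $y_d := S_d \cup \bigcup \{ x_d \mid (x_{d_2})_{d_2 \in \TC{d_1}} \in \mathcal{F} \wedge d \in \TC{d_1} \text{ for some } d_1 \in \TC{b} \}$. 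The coherence clause is then verified by a two-case analysis (either $u \in S_d$, or $u$ lies in an internal family from $\mathcal{F}$), replacing your one-case argument via transitivity of $a$ --- which would indeed suffice if your $x_d$ existed as a set, but it is precisely that object which $\CZF$ does not provide. Repairing your proof means replacing each $x_d$ by such collected sets, i.e., reconstructing the paper's argument.
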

\begin{proof}
  Let $a$ be $b$-set-inaccessible, so there is a family $(x_c )_{c \in \TC{b}}$ witnessing the $b$-set-inaccessibility of $a$. We first show that for any $c \in \TC{b}$, $x_c$ consists of sets being $c$-set-inaccessible. Suppose that $c \in \TC{b}$ holds and $z$ is an element of $x_c$, then we verify that $z$ is $c$-set-inaccessible. Since $z$ is an element of $x_c$, $z$ is set-inaccessible. Next, we define a family $(y_d )_{d \in \TC{c}}$ of sets by putting $y_d := x_d$ for any $d \in \TC{c} \subseteq \TC{b}$. By definition, $y_d$ consists of sets being set-inaccessible. We have $z \in x_c$ and $d \in \TC{c}$, hence $x_d$ is unbounded in $z$ and so $y_d$ is too. For any $v \in y_d$ and any $d' \in \TC{d}$, we have $v \in x_d$ and so $x_{d'}$, \textit{i.e.}, $y_{d'}$ is unbounded in $v$. This concludes that $x_c$ consists of sets which are $c$-set-inaccessible for any $c \in \TC{b}$.

  If $c \in b$ holds then $c \in \TC{b}$ also holds. To show that the class of $c$-set-inaccessibles is unbounded in $a$, let $z \in a$ be the case. Since $x_c$ is unbounded in $a$, there is a set $v \in x_c$ such that $z \in v$ and $v \in a$ hold. This implies that the class of $c$-set-inaccessibles is unbounded in $a$, because $v$ is $c$-set-inaccessible.

  Conversely, suppose that for all $c \in b$, the class of $c$-set-inaccessibles is unbounded in $a$. It follows from Lemma~\ref{lem:tcinacc} that for all $d \in \TC{b}$, the class of $d$-set-inaccessibles is unbounded in $a$: if $d \in b$ holds then the class of $d$-set-inaccessibles is unbounded in $a$ by assumption. If $d \in \bigcup \{ \TC{x} \mid x \in b \}$ holds, then $d \in \TC{x}$ holds and the class of $x$-set-inaccessibles is unbounded in $a$ for some $x \in b$. Thus, for any $z \in a$, there is a set $v$ such that $v$ is $x$-set-inaccessible, $z$ is an element of $v$ and $v \in a$ holds. But $v$ is also $d$-set-inaccessible by Lemma~\ref{lem:tcinacc}.

  So we have $\forall d \in \TC{b} \forall z \in a \exists v \in a (z \in v \land v \text{ is $d$-set-inaccessible})$, that is,
  \begin{align*}
    &\forall d \in \TC{b} \forall z \in a \exists w \\
    &\quad \exists v \in a (w = \langle d , v , z \rangle \land z \in v \land v \text{ is $d$-set-inaccessible}) .
  \end{align*}
  Apply \textbf{Strong Collection} twice to this statement, and take the union of the set whose existence follows from the two applications of \textbf{Strong Collection}. We then have a set $S$ of triples $\langle d , v , z \rangle$ such that
  \begin{itemize}
  \item $z \in v \in a$ holds, $d \in \TC{b}$ holds and $v$ is $d$-set-inaccessible,

  \item for any $d \in \TC{b}$ and any $z \in a$, there is a set $v \in a$ with $\langle d , v , z \rangle \in S$.
    
  \end{itemize}
  Using \textbf{Restricted Separation}, we define
  \[
  S_d := \{ v \in a \mid \exists z \in a \langle d , v , z \rangle \in S \}
  \]
  for any $d \in \TC{b}$. Then, it follows that
  \begin{align*}
    &\forall d \in \TC{b} \forall v \in S_d \text{ there exists a family $(x_{d'})_{d' \in \TC{d}}$ of sets such that} \\
    &\quad \text{$(x_{d'})_{d' \in \TC{d}}$ witnesses the $d$-set-inaccessibility of $v$}.
  \end{align*}
  Again, apply \textbf{Strong Collection} twice to this statement, and take the union of the resulting set. We then obtain a set $\mathcal{F}$ of families of sets such that
  \begin{itemize}
  \item for any $d \in \TC{b}$ and any $v \in S_d$, there is a family $(x_{d'})_{d' \in \TC{d}} \in \mathcal{F}$ witnessing the $d$-set-inaccessibility of $v$,

  \item any $f \in \mathcal{F}$ is a family $(x_{d'})_{d' \in \TC{d}}$ witnessing the $d$-set-inaccessibility of $v$ for some $d \in \TC{b}$ and some $v \in S_d$.
    
  \end{itemize}
  We verify that $a$ is $b$-set-inaccessible. It follows from the assumption that $a$ is set-inaccessible. We define a family $(y_d)_{d \in \TC{b}}$ of sets: for any $d \in \TC{b}$,
  \[
  y_d := S_d \cup \bigcup \{ x_d \mid (x_{d_2})_{d_2 \in \TC{d_1}} \in \mathcal{F} \land d \in \TC{d_1} \text{ for some $d_1 \in \TC{b}$} \} .
  \]
  By definition, any element of $y_d$ is set-inaccessible. Moreover, $y_d$ is unbounded in $a$ because for any $z \in a$, there is a set $v \in a$ with $z \in v$ and $v \in S_d \subseteq y_d$.

  Finally, we show that for any $u \in y_d$ and any $d' \in \TC{d}$, $y_{d'}$ is unbounded in $u$. Let $s \in u$ be the case. If $u \in S_d$ holds, then there is a family $(x_{d'})_{d' \in \TC{d}} \in \mathcal{F}$ witnessing the $d$-set-inaccessibility of $u$. Since $x_{d'}$ is unbounded in $u$, we have $s \in s' \in u$ for some $s' \in x_{d'} \subseteq y_{d'}$, hence $y_{d'}$ is unbounded in $u$. If $u$ is an element of $x_{d}$ such that $d \in \TC{d_1}$ and $(x_{d_2})_{d_2 \in \TC{d_1}} \in \mathcal{F}$ hold for some $d_1 \in \TC{b}$, then $x_{d'}$ is unbounded in $u$ because the family $(x_{d_2})_{d_2 \in \TC{d_1}} \in \mathcal{F}$ witnesses the $d_1$-set-inaccessibility of some set (\cf~the last condition of Definition~\ref{def:ainacc}). So $s \in s' \in u$ holds for some $s' \in x_{d'} \subseteq y_{d'}$, hence $y_{d'}$ is unbounded in $u$ also in this case.
\end{proof}

The system $\CZF_{\pi}$ is obtained by adding the following axiom to $\CZF$.
\begin{description}
\item[Pi-Numbers]
  \[
  \forall x \forall a \exists b (x \in b \land b \text{ is $a$-set-inaccessible}).
  \]
\end{description}
That is, $\CZF_{\pi}$ is the extension of $\CZF$ by the axiom that for any sets $x$ and $a$, where $a$ represents a transfinite ``order'' in the hierarchy of iterative sets, there is a set $b$ which has $x$ as its element and is $a$-set-inaccessible. This axiom says that for any set $a$, the $a$-set-inaccessibles are unbounded.

As mentioned in Section~\ref{sec:intro}, Rathjen \cite{rathjen2003} showed that $\CZF_{\pi}$ is interpretable in the system $\CZFM$, which is the extension of $\CZF$ obtained by adding the axiom asserting the existence of one Mahlo set to $\CZF$. In the remainder of this subsection, we briefly recall Rathjen's result that $\CZF_{\pi}$ is interpretable in $\CZFM$, since this result is not explicitly stated in \cite{rathjen2003}.

Let $\mathbf{mv}(^a b)$ be the class of all subsets $R$ of $a \times b$ such that $\forall x \in a \exists y \in b (\langle x , y \rangle \in R)$ holds. The class $\mathbf{mv}(^a b)$ is thus the class of all multi-valued functions from $a$ to $b$. In \cite{rathjen2003}, the term ``set-inaccessibles'' means regular sets which are models of the fragment $\CZF^-$ of $\CZF$ obtained by omitting \textbf{Set Induction}. So we reformulate some definitions and lemmas in \cite{rathjen2003} as follows, but the lemmas are provable exactly in the same way as \cite{rathjen2003}.
\begin{defi}[Mahlo Sets]
  A set $M$ is Mahlo if $M$ is a regular set satisfying the following conditions:
  \begin{itemize}
  \item $M$ is a model of $\CZF^-$, and

  \item for any $R \in \mathbf{mv}(^M M)$, there is a regular model $I \in M$ of $\CZF^-$ such that $\forall x \in I \exists y \in I \langle x , y \rangle \in R$ holds.
    
  \end{itemize}
\end{defi}

\begin{lem}[$\CZF^-$]\label{lem:rone}
  Let $M$ be Mahlo. For any $R \in \mathbf{mv}(^M M)$ and any $a \in M$, there is a regular model $I \in M$ of $\CZF^-$ such that $a \in I$ and $\forall x \in I \exists y \in I \langle x , y \rangle \in R$ hold.
\end{lem}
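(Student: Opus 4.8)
The plan is to apply the defining reflection property of the Mahlo set $M$ not to $R$ directly, but to a modified multi-valued function $R'$ that codes the target element $a$ into every successor value, so that any regular $I$ closed under $R'$ is automatically forced to contain $a$ while still witnessing closure under the original $R$. The whole content of the argument is this coding trick together with transitivity of the resulting $I$; Lemma~\ref{lem:rone} is thus a routine strengthening of the bare Mahlo property recorded in the definition of a Mahlo set.

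Concretely, I would first set
\[
R' := \{ \langle x , \langle y , a \rangle \rangle : \langle x , y \rangle \in R \} .
\]
Since $a \in M$ and $M$, being a transitive model of $\CZF^-$, is closed under the pairing operation, for every $\langle x , y \rangle \in R \subseteq M \times M$ the Kuratowski pair $\langle y , a \rangle$ again lies in $M$, so $R' \subseteq M \times M$. Totality of $R'$ is inherited from that of $R$: given $x \in M$, choose $y \in M$ with $\langle x , y \rangle \in R$ and take $\langle y , a \rangle \in M$ as the witness. Hence $R' \in \mathbf{mv}(^M M)$, and the definition of Mahlo set applied to $R'$ yields a regular model $I \in M$ of $\CZF^-$ with $\forall x \in I \exists w \in I \, \langle x , w \rangle \in R'$.

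It then remains to read off the two desired conclusions using transitivity of $I$ (which holds since every regular set is transitive). Because $I$ is regular it is inhabited, so fixing some $x_0 \in I$ and applying closure under $R'$ produces a witness $w \in I$ of the form $w = \langle y_0 , a \rangle$. Unfolding $\langle y_0 , a \rangle = \{ \{ y_0 \} , \{ y_0 , a \} \}$ and applying transitivity of $I$ twice gives first $\{ y_0 , a \} \in I$ and then $a \in I$. The same unfolding applied to an arbitrary $x \in I$ recovers closure under $R$: the $R'$-witness $w = \langle y , a \rangle \in I$ has $\langle x , y \rangle \in R$ by construction, and transitivity of $I$ yields $y \in I$; hence $\forall x \in I \exists y \in I \, \langle x , y \rangle \in R$, as required.

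I expect no conceptual obstacle, only bookkeeping. The two points to handle carefully are, first, that $M$'s closure under pairing is genuinely available so that $R'$ is a legitimate member of $\mathbf{mv}(^M M)$ (this is exactly where we use that $M$ is a \emph{transitive} model of $\CZF^-$ and that $a \in M$), and second, that it is transitivity of $I$ — not any further hypothesis — that converts ``the coded pair $\langle y , a \rangle$ belongs to $I$'' into the usable facts $y \in I$ and $a \in I$. Everything else is the verbatim invocation of the Mahlo reflection clause.
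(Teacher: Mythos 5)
Your proposal is correct and is essentially the paper's own proof: the paper defines $S := \{ \langle x , \langle a , y \rangle \rangle \mid \langle x , y \rangle \in R \}$ (your $R'$ with the coded pair in the opposite order, which is immaterial), applies the Mahlo reflection clause to $S$, and then extracts $a \in I$ and closure under $R$ from transitivity of $I$ exactly as you do. The only differences are cosmetic, including your more explicit unfolding of the Kuratowski pair when applying transitivity twice.
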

\begin{proof}
  Suppose that $R \in \mathbf{mv}(^M M)$ holds and $a$ is an element of $M$. We define $S := \{ \langle x , \langle a , y \rangle \rangle \mid \langle x , y \rangle \in R \}$. Then $S \in \mathbf{mv}(^M M)$ holds because $M$ is closed under pairing, so there is a set $I \in M$ which is a regular model of $\CZF^-$, and satisfies $\forall x \in I \exists y \in I \langle x , y \rangle \in S$. Since $S$ is regular, $S$ is inhabited and so we have a set $x \in I$. We thus have $\langle x , \langle a , y \rangle \rangle \in S$ and $\langle a , y \rangle \in I$ for some $y$, hence $a \in I$ holds because $I$ is transitive. Similarly, for any $x \in I$, there is an ordered pair $\langle a , y \rangle \in I$ with $\langle x , y \rangle \in R$, and we have $y \in I$ because $I$ is transitive.
\end{proof}

\begin{lem}[$\CZF^-$]\label{lem:rtwo}
  Let $M$ be Mahlo. If $\forall x \in M \exists y \in M \: \varphi (x, y)$ holds, then there is a set $S \in \mathbf{mv}(^M M)$ satisfying $\forall x \forall y (\langle x , y \rangle \in S \to \varphi (x , y))$.
\end{lem}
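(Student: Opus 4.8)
The plan is to obtain the lemma as a direct application of \textbf{Strong Collection}, using the Mahlo hypothesis only to the extent that it guarantees $M$ is a set, so that $M$ can serve as the domain in an instance of the collection scheme.

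First I would package the desired pairing into a single formula. Set
\[
\psi [x , w] := \exists y \in M \, (w = \langle x , y \rangle \land \varphi (x , y)) .
\]
The hypothesis $\forall x \in M \exists y \in M \, \varphi (x , y)$ then immediately yields $\forall x \in M \exists w \, \psi [x , w]$: given $x \in M$, choose $y \in M$ with $\varphi (x , y)$ and take $w := \langle x , y \rangle$. Next I would apply \textbf{Strong Collection} with the domain $M$ to obtain a set $S$ for which both $\forall x \in M \exists w \in S \, \psi [x , w]$ and $\forall w \in S \exists x \in M \, \psi [x , w]$ hold. From the second conjunct, every $w \in S$ is of the form $\langle x , y \rangle$ with $x , y \in M$ and $\varphi (x , y)$; hence $S \subseteq M \times M$, and since the Kuratowski pair determines its components, $\forall x \forall y (\langle x , y \rangle \in S \to \varphi (x , y))$, which is the final clause of the lemma. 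From the first conjunct, for each $x \in M$ there is some $w \in S$ with $\psi [x , w]$, so that $\langle x , y \rangle \in S$ for some $y \in M$; this is exactly the totality condition $\forall x \in M \exists y \in M \, (\langle x , y \rangle \in S)$. Together these two facts give $S \in \mathbf{mv}(^M M)$ with the required property.

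There is essentially no hard step here: the only point requiring care is to build the pairing and the restriction $y \in M$ into $\psi$ itself, so that \textbf{Strong Collection} returns a set of pairs lying in $M \times M$ directly, rather than merely a set of second coordinates. The Mahlo property of $M$ plays a marginal role—any set $M$ would suffice—but it is what licenses treating $M$ as the parameter of the collection scheme and keeps the whole argument within $\CZF^-$.
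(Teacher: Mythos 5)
Your proof is correct and is essentially the paper's own argument: the same formula $\psi$, a single application of \textbf{Strong Collection} over the domain $M$, and the same reading-off of the two resulting conjuncts to obtain $S \in \mathbf{mv}(^M M)$ and $\forall x \forall y (\langle x , y \rangle \in S \to \varphi (x , y))$. The only difference is cosmetic: the paper first invokes $M$'s closure under pairing to place the witness pair inside $M$ (deriving $\forall x \in M \exists z \in M \, \psi (x , z)$), which, as you correctly observe, is unnecessary, since \textbf{Strong Collection} only requires the unbounded existential $\forall x \in M \exists z \, \psi (x , z)$.
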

\begin{proof}
  Define $\psi (x , z) := \exists y \in M (z = \langle x , y \rangle \land \varphi (x , y))$, then, by assumption, we have $\forall x \in M \exists z \in M \: \psi (x , z)$ because $M$ is closed under pairing. Using \textbf{Strong Collection}, there is a set $S$ with $\forall x \in M \exists z \in S \: \psi (x , z)$ and $\forall z \in S \exists x \in M \: \psi (x , z)$. The property $\forall x \in M \exists z \in S \: \psi (x , z)$ implies that $S \in \mathbf{mv}(^M M)$ holds, while the property $\forall z \in S \exists x \in M \: \psi (x , z)$ implies that $\forall x \forall y (\langle x , y \rangle \in S \to \varphi (x , y))$ holds.
\end{proof}

The following theorem is a straightforward adaptation of \cite[Theorem 4.10]{rathjen2003}, and this theorem implies that a Mahlo set $M$ models \textbf{Pi-Numbers}. Recall that a Mahlo set is a model of $\CZF^-$ by definition. Moreover, a Mahlo set also models \textbf{Set Induction}, because any Mahlo set is transitive and any transitive set models \textbf{Set Induction} in $\CZF$. It thus follows that $\CZF_{\pi}$ is interpretable in $\CZFM$: a Mahlo set $M$ in $\CZFM$ models all axioms of $\CZF_{\pi}$.
\begin{thm}[$\CZF$]
  Let $M$ be Mahlo. For any set $a \in M$, the class of $a$-set-inaccessibles is unbounded in $M$.
\end{thm}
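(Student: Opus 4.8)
The plan is to argue by $\in$-induction (Set Induction) on $a$, proving the implication $a \in M \to$ (the class of $a$-set-inaccessibles is unbounded in $M$). Since $M$ is regular and hence transitive, for $a \in M$ every $c \in a$ lies in $M$, so the induction hypothesis applies to each such $c$: the class of $c$-set-inaccessibles is unbounded in $M$.

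Fix $a \in M$ and an arbitrary $z_0 \in M$; I must produce an $a$-set-inaccessible $v \in M$ with $z_0 \in v$. The key reduction is Lemma~\ref{lem:setinacc}: if $v$ is set-inaccessible, then $v$ is $a$-set-inaccessible as soon as, for every $c \in a$, the class of $c$-set-inaccessibles is unbounded in $v$. Thus it suffices to build a single regular model $I \in M$ of $\CZF^-$ which (i) contains $z_0$ and $a$, (ii) is closed under passing to regular supersets, and (iii) for each $c \in a$ is closed under passing to $c$-set-inaccessible supersets. From (i)--(iii) I will read off that $I$ is set-inaccessible and satisfies the hypothesis of Lemma~\ref{lem:setinacc}, hence is $a$-set-inaccessible.

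To obtain such an $I$ I package the closure demands into one multivalued relation. Define a formula $\varphi(u,y)$ requiring that $y$ be regular with $u \subseteq y$ and, in addition, whenever $u = \langle c , x \rangle$ with $c \in a$, that $y$ be $c$-set-inaccessible with $x \in y$. I first check $\forall u \in M \exists y \in M \, \varphi(u,y)$: the regular-superset clause is met by applying Lemma~\ref{lem:rone} to the seed $u$ (any regular set containing $u$ is a regular superset, by transitivity), while the inaccessibility clause, for $u = \langle c , x \rangle$ with $c \in a$, is met by the induction hypothesis applied to $c$, taking a $c$-set-inaccessible $y \in M$ above $u \cup \{ x \}$ so that $u \subseteq y$ and $x \in y$ follow from transitivity of $y$. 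Lemma~\ref{lem:rtwo} then yields $R \in \mathbf{mv}(^M M)$ refining $\varphi$, and Lemma~\ref{lem:rone} applied to $R$ with the seed $\{ z_0 , a \} \in M$ produces a regular $I \in M$ modelling $\CZF^-$, containing $z_0$ and $a$, and closed under $R$.

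Finally I verify that $I$ is $a$-set-inaccessible. Being regular, $I$ is transitive, so it models Set Induction and hence all of $\CZF$; closure under $R$ via the first clause of $\varphi$ gives $\forall x \in I \exists y \in I (x \subseteq y \land y \text{ regular})$, so $I$ is set-inaccessible. Since $a \in I$ and $I$ is transitive we have $a \subseteq I$; hence for $c \in a$ and $x \in I$ the pair $\langle c , x \rangle$ lies in $I$ (pairing holds in $I$), and closure under $R$ via the second clause of $\varphi$ provides a $c$-set-inaccessible $y \in I$ with $x \in y$. Thus for every $c \in a$ the class of $c$-set-inaccessibles is unbounded in $I$, and Lemma~\ref{lem:setinacc} gives that $I$ is $a$-set-inaccessible. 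As $z_0 \in I \in M$, this witnesses the required unboundedness, completing the inductive step. The main obstacle is the bookkeeping in this packaging step: one must design $\varphi$ so that a single relation simultaneously enforces the set-inaccessibility closure and the uniform-in-$c$ inaccessibility closure, and then check that closure of $I$ under the refined relation really delivers the membership witnesses \emph{inside} $I$, which relies on seeding with $\{ z_0 , a \}$ together with the transitivity and pairing available in $I$.
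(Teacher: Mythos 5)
Your skeleton is the same as the paper's (Set Induction on $a$, reduction via Lemma~\ref{lem:setinacc}, and Lemmas~\ref{lem:rtwo} and~\ref{lem:rone} to reflect a multivalued relation into a regular $\CZF^-$-model $I \in M$ containing the seed), but your packaging of the closure conditions into the single relation $\varphi$ has a genuine gap: the totality check $\forall u \in M\, \exists y \in M\, \varphi(u,y)$ is not intuitionistically justified. Your $\varphi(u,y)$ imposes on the one witness $y$ a requirement that depends on whether $u$ happens to be an ordered pair $\langle c , x \rangle$ with $c \in a$: if it is, $y$ must be $c$-set-inaccessible for that particular $c$; if it is not, $y$ need only be regular. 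Accordingly your verification supplies \emph{different} witnesses in the two cases (a regular set from Lemma~\ref{lem:rone}, respectively a $c$-set-inaccessible from the induction hypothesis). But $\CZF$ is based on intuitionistic logic, and ``$u = \langle c , x \rangle$ for some $c \in a$'' is (equivalent to) a restricted formula, which cannot be decided: excluded middle, even for restricted formulas, is not available. Nor can the case split be replaced by a uniform witness as the argument stands: what you need has the shape $\exists y\,(C(y) \land (A \to D(y)))$, and inferring this from $\exists y\, C(y)$ together with $A \to \exists y\,(C(y) \land D(y))$ is an independence-of-premise style step that is intuitionistically invalid. So the hypothesis of Lemma~\ref{lem:rtwo} is never actually established for your $\varphi$.

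The paper's proof is organised precisely so that this issue never arises: its relation sends each $x \in M$ to a \emph{set} $y \in M$ of witnesses covering all orders at once, namely $\forall b \in a\, \exists z \in y\, (x \in z \land z \text{ is $b$-set-inaccessible})$. Totality of that relation follows intuitionistically from the induction hypothesis, \textbf{Strong Collection} and the regularity of $M$, with no parsing of elements of $M$ as pairs; closure of $I$ under it plus transitivity of $I$ then places the witnesses $z$ inside $I$, and Lemma~\ref{lem:setinacc} finishes as in your argument. Repairing your proof means replacing the single conditional witness by such a covering set of witnesses, at which point your argument essentially becomes the paper's. (A minor point in your favour: your unconditional regular-superset clause yields the unboundedness of regular supersets in $I$ directly, whereas the paper infers it from the $b$-set-inaccessibles with $b \in a$, which is vacuous when $a$ is empty; but this does not offset the totality gap.)
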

\begin{proof}
  By induction on $a$. Let $x \in M$ be the case, and put a formula $\varphi (b , v)$ as
  \[
  \varphi (b , v) := \exists z \in M (v = \langle b , z \rangle \land x \in z \land \text{$z$ is $b$-set-inaccessible}) .
  \]
  The following argument shows that $\forall b \in a \exists v \varphi (b , v)$ holds. Let $b$ be a member of $a$, then we have
  \[
  b \in M \to \text{ the class of $b$-set-inaccessibles is unbounded in $M$}
  \]
  by IH. We have $b \in M$ by the transitivity of $M$, so the class of $b$-set-inaccessibles is unbounded in $M$. This implies that $\varphi (b , v)$ holds for some $v$.

  By \textbf{Strong Collection}, there is a set $R$ satisfying the following:
  \[
  \forall b \in a \exists v \in R \: \varphi (b , v) \land \forall v \in R \exists b \in a \: \varphi (b , v) .
  \]
  So $R$ is a subset of $a \times M$, and $\forall b \in a \exists z (\langle b , z \rangle \in R)$ holds. Since $M$ is regular, we have
  \[
  \forall b \in a \exists z \in y (\langle b , z \rangle \in R)
  \]
  for some $y \in M$. It then follows that for any $b \in a$, there exists a member $z$ of $y$ satisfying
  \begin{itemize}
  \item $\langle b , z \rangle = \langle b' , z' \rangle$ holds,

  \item $x \in z'$ holds, and

  \item $z'$ is $b'$-set-inaccessible
  \end{itemize}
  for some $b' \in a$ and $z' \in M$. But both $b = b'$ and $z = z'$ hold, so we have
  \[
  \forall x \in M \exists y \in M \forall b \in a \exists z \in y (x \in z \land \text{$z$ is $b$-set-inaccessible}).
  \]
  Lemmas \ref{lem:rtwo} and \ref{lem:rone} imply that for any $w \in M$, there exists a regular model $I \in M$ of $\CZF^-$ satisfying
  \[
  w \in I \land \forall x \in I \exists y \in I \forall b \in a \exists z \in y (x \in z \land \text{$z$ is $b$-set-inaccessible}).
  \]
  By the transitivity of $I$, we have
  \[
  \forall x \in I \forall b \in a \exists z \in I (x \in z \land \text{$z$ is $b$-set-inaccessible}) ,
  \]
  so the class of $b$-set-inaccessibles is unbounded in $I$ for any $b \in a$. By the transitivity of $I$ again, $I$ satisfies \textbf{Set Induction}, hence $I$ is also a model of $\CZF$. Moreover, one can see by definition that any $b$-set-inaccessible is regular, so the class of regular sets is unbounded in $I$. Therefore, $I$ is set-inaccessible.

  It follows by Lemma \ref{lem:setinacc} that $I$ is $a$-set-inaccessible, therefore $a$-set-inaccessibles are unbounded in $M$.
\end{proof}


\subsection{Construction of Inaccessible Sets in Mahlo Universe}\label{sec:inacc}
To show that $\CZF_{\pi}$ is interpretable in $\MLM$ with the accessibility predicate $\Acc$, we recall Aczel's interpretation of $\CZF$ in $\MLTT$.

First of all, we define the type-theoretic counterpart of Aczel's iterative sets. Let a $\rW$-type $\rW_{(x : A)} B$ be given. It is straightforward to define the functions $\idx : \rW_{(x : A)} B \to A$ and $\pred : \Pi_{(y : \rW_{(x : A)} B)} B [\idx \; y / x] \to \rW_{(x : A)} B$ satisfying
\[
\idx \; \wcons (a,b) = a , \qquad \pred \; \wcons (a,b) = b .
\]
The $\rW$-type $\bbV$, which is the type of iterative sets in Aczel's interpretation of $\CZF$, is defined as
\[
\bbV := \rW_{(x : \rV)} \rTV (x).
\]
We call terms of type $\bbV$ in some contexts \textit{sets}, and denote them by lowercase Greek letters such as $\alpha , \beta , \gamma$ possibly with suffixes. For any $\alpha : \bbV$, we write $\rTV (\idx \; \alpha)$ as $\ov{\alpha}$, following \cite{RathjenGrifforPalmgren1998}. Moreover, we adopt the abbreviations below: for any term $F$ such that $\Gamma , \beta : \bbV \vdash F \; \type$ holds for some context $\Gamma$,
\[
\forall \beta \in \alpha F := \Pi_{(x : \ov{\alpha})} F[\pred \: \alpha \: x / \beta] , \qquad \exists \beta \in \alpha F := \Sigma_{(x : \ov{\alpha})} F[\pred \: \alpha \: x / \beta] .
\]
Note that the elimination rule for $\bbV = \rW_{(x : \rV)} \rTV (x)$ provides the transfinite induction principle on $\alpha : \bbV$. That is, we have a term
\[
\TI_F : \Pi_{(\alpha : \bbV)}(\forall \gamma \in \alpha F[\gamma / \beta ] \to F[\alpha / \beta ]) \to \Pi_{(\alpha : \bbV)} F[\alpha / \beta] ,
\]
where $\Gamma , \beta : \bbV \vdash F \; \type$ holds for some context $\Gamma$.

Next, we define the interpretation of the equality $a = b$ and the membership relation $a \in b$ in $\CZF$ as an equivalence relation $\doteq$ and a binary relation $\in$ on $\bbV$, respectively. The relation $\doteq$ on $\bbV$ is defined by applying the elimination rule for $\rW$-types twice:
\[
\wcons (a , f) \doteq \wcons (b , g) := (\Pi_{(x : \rTV (a))}\Sigma_{(y : \rTV (b))} f \; x \doteq g \; y) \times (\Pi_{(y : \rTV (b))}\Sigma_{(x : \rTV (a))} f \; x \doteq g \; y) .
\]
For any $\alpha , \beta : \bbV$, we define $\alpha \in \beta$ as $\alpha \in \beta := \exists \gamma \in \beta (\alpha \doteq \gamma )$. Below we use the following facts about these relations $\doteq$ and $\in$ without mentioning them. Note that by Lemma~\ref{lem:eta}.(1) below we have $\alpha \doteq \wcons (\idx \: \alpha , \pred \: \alpha)$ for any $\alpha : \bbV$.
\begin{lem}[\href{https://github.com/takahashi-yt/czf-in-mahlo/blob/67c9dbf50bee67b73c1ea009929f4838458b8edc/src/CZFBasics.agda}{\texttt{Link}}]\label{lem:eta}
  The following holds:
  \begin{enumerate}
  \item For any $\alpha : \bbV$, we have $\alpha =_{\bbV} \wcons (\idx \: \alpha , \pred \: \alpha)$.

  \item For any $\alpha , \beta : \bbV$, if $\alpha \doteq \beta$ holds then $\Pi_{(\gamma : \bbV)}(\gamma \in \alpha \to \gamma \in \beta)$ holds.

  \item For any $a : \rV$ and $f , g : \rTV a \to \bbV$, if $\Pi_{(x : \rTV a)} f \: x \doteq g \: x$ holds then $\wcons (a , f) \doteq \wcons (a , g)$ holds.

  \item For any $\alpha , \beta : \bbV$, if both $\Pi_{(\gamma : \bbV)}(\gamma \in \alpha \to \gamma \in \beta)$ and $\Pi_{(\gamma : \bbV)}(\gamma \in \beta \to \gamma \in \alpha)$ holds then $\alpha \doteq \beta$ holds.
    
  \end{enumerate}
\end{lem}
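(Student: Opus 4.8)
The plan is to prove the four parts in the order (1), (3), (2), (4), since parts (2) and (4) rely on $\doteq$ being an equivalence relation and reflexivity of $\doteq$ is most cleanly obtained from (3). For part (1) I would apply the elimination rule for $\bbV = \rW_{(x:\rV)}\rTV(x)$ with motive $F[\alpha] := (\alpha =_{\bbV} \wcons(\idx \: \alpha, \pred \: \alpha))$; the induction hypothesis is not needed. On a canonical element $\wcons(a,f)$ the computation rules $\idx \: \wcons(a,f) = a$ and $\pred \: \wcons(a,f) = f$ hold definitionally, so the goal $\wcons(a,f) =_{\bbV} \wcons(\idx \: \wcons(a,f), \pred \: \wcons(a,f))$ reduces definitionally to $\wcons(a,f) =_{\bbV} \wcons(a,f)$, which is inhabited by $\refl(\wcons(a,f))$. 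For part (3), after unfolding $\wcons(a,f)\doteq\wcons(a,g)$ into the product of its two $\Pi\Sigma$-factors, both factors are supplied by diagonal witnesses: given $h : \Pi_{(x:\rTV \, a)} f \: x \doteq g \: x$, the term $\lambda x . (x, h \: x)$ inhabits the first factor and $\lambda y . (y, h \: y)$ the second.

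Before parts (2) and (4) I would record that $\doteq$ is reflexive, symmetric and transitive, each established by transfinite induction via $\TI$ (equivalently, by iterated $\rW$-elimination). Reflexivity is the cleanest: by $\TI$ it suffices to prove $\wcons(a,f)\doteq\wcons(a,f)$ from the hypothesis $\Pi_{(x:\rTV \, a)}(f \: x \doteq f \: x)$, which is exactly part (3) with $g := f$. Symmetry and transitivity follow by the same inductive pattern, by swapping the two projections and by composing the witnesses of the nested $\Sigma$-types, respectively. With transitivity available, part (2) proceeds as follows. Reducing $\alpha$ and $\beta$ to canonical forms $\wcons(a,f)$ and $\wcons(b,g)$ by $\rW$-elimination (so that $\ov{\alpha} = \rTV \, a$ and $\pred \: \alpha = f$, and likewise for $\beta$), the hypothesis $\alpha\doteq\beta$ yields by its first projection a function assigning to each $x:\rTV \, a$ some $y:\rTV \, b$ with $f \: x \doteq g \: y$. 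Given $\gamma$ with $\gamma\in\alpha$, i.e.\ some $x:\rTV \, a$ with $\gamma\doteq f \: x$, transitivity gives $\gamma\doteq g \: y = \pred \: \beta \: y$, hence $\gamma\in\beta$ with witness $y$.

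For part (4) I would again pass to canonical forms $\wcons(a,f)$ and $\wcons(b,g)$ and construct the two factors of $\wcons(a,f)\doteq\wcons(b,g)$ from the two membership hypotheses. For the first factor, fix $x:\rTV \, a$; reflexivity gives $f \: x \doteq f \: x$, so $f \: x = \pred \: \alpha \: x \in \alpha$, and the first hypothesis then yields $y:\rTV \, b$ with $f \: x \doteq g \: y$. For the second factor, fix $y:\rTV \, b$; symmetrically $g \: y\in\beta$, so the second hypothesis yields $x:\rTV \, a$ with $g \: y\doteq f \: x$, and symmetry of $\doteq$ converts this into $f \: x \doteq g \: y$, as required. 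I expect the main obstacle to be not any single part but the auxiliary equivalence-relation facts for $\doteq$ — especially transitivity — whose proofs must thread a transfinite induction through the nested $\Pi\Sigma$ definition while keeping careful track of the existential witnesses. Once these are in place, parts (2) and (4) are short, and parts (1) and (3) are essentially immediate.
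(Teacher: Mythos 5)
Your proposal is correct and follows essentially the same route as the paper, which only sketches the argument: parts (1) and (2) by the induction principle of $\bbV$, part (3) by direct unfolding, and the equivalence-relation facts for $\doteq$ (reflexivity, symmetry, transitivity) supplied by transfinite induction. For part (4) the paper simply cites the fact that \textbf{Extensionality} is interpretable in $\MLTT$, but the direct argument you give — passing to canonical forms and building the two $\Pi\Sigma$-factors from the membership hypotheses via reflexivity and symmetry — is exactly how that cited fact is proved, so there is no substantive difference.
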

\begin{proof}
  The statements (1) and (2) are provable by the induction principle of $\bbV$. The statement (3) follows straightforwardly, and (4) holds because the axiom \textbf{Extensionality} is interpretable in $\MLTT$.
\end{proof}

We interpret all formulas of $\CZF$ in $\MLTT$ by using these relations $\doteq$ and $\in$ on $\bbV$. The formulas $x = y$ and $x \in y$ in the language of $\CZF$ are interpreted as the relations $x \doteq y$ and $x \in y$ on $\bbV$, respectively. Compound formulas of $\CZF$ are interpreted as follows:
\begin{align*}
  (\varphi \land \psi )^{\ast} &:= \varphi^{\ast} \times \psi^{\ast} & (\varphi \lor \psi )^{\ast} &:= \varphi^{\ast} + \psi^{\ast} \\
  (\varphi \to \psi )^{\ast} &:= \varphi^{\ast} \to \psi^{\ast} & \bot^{\ast} &:= \rN_0 \\
  (\forall x \in y \varphi )^{\ast} &:= \forall x \in y \varphi^{\ast} & (\exists x \in y \varphi )^{\ast} &:= \exists x \in y \varphi^{\ast} \\
  (\forall x \varphi )^{\ast} &:= \Pi_{(\alpha : \bbV)} \varphi^{\ast} & (\exists x \varphi )^{\ast} &:= \Sigma_{(\alpha : \bbV)} \varphi^{\ast}
\end{align*}
It is then straightforward to show by induction on the length of a given $\CZF$-formula $\varphi$ that if a list $x_1 , \ldots , x_n$ of variables contains all free variables in $\varphi$ then the judgement $x_1 : \bbV , \ldots , x_n : \bbV \vdash \varphi^{\ast} \; \type$ is derivable in $\MLTT$. Moreover, as suggested in \cite[Remarks 3.5]{aczel1982}, the following useful lemma is provable by induction on the length of a given $\CZF$-formula.
\begin{lem}\label{lem:czfext}
  Let $\varphi [x_1 , \ldots , x_n , x]$ be an arbitrary formula of $\CZF$ such that $x_1 , \ldots , x_n$ and $x$ contain all free variable occurring in $\varphi$. Then, the statement
  \[
  \Pi_{(x_1 : \bbV)} \cdots \Pi_{(x_n : \bbV)} \Pi_{(y : \bbV)} \Pi_{(z : \bbV)} (y \doteq z \to \varphi^{\ast} [x_1 , \ldots , x_n , y] \to \varphi^{\ast} [x_1 , \ldots , x_n , z])
  \]
  is derivable in $\MLTT$.
\end{lem}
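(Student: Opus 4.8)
The plan is to prove the statement by induction on the structure (length) of $\varphi$, but in a slightly strengthened form: I will show that $\varphi^{\ast}$ respects $\doteq$ in \emph{each} of its free-variable positions, not only in the last one. This strengthening is forced by the bounded-quantifier case below, where a bound variable temporarily becomes the position in which one must transport; since the choice of distinguished variable is immaterial up to relabeling, establishing the all-positions version simultaneously and then specialising to $x$ yields the lemma. Throughout I use that $\doteq$ is reflexive, symmetric and transitive, and that the defining clause of $\doteq$ on canonical sets, together with Lemma~\ref{lem:eta}.(1), lets me unfold $y \doteq z$ into the pair of ``forth'' and ``back'' correspondences between the predecessors $\pred \: y$ and $\pred \: z$.

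For the atomic cases, the distinguished variable occurs in a formula of the form $u = w$ or $u \in w$ with $u,w$ among the variables. When it occurs in an equality, the required transport is immediate from symmetry and transitivity of $\doteq$ (and reflexivity when it occurs on both sides). When it occurs on the right of a membership, $x_i \in x$, the transport $x_i \in y \to x_i \in z$ under $y \doteq z$ is exactly Lemma~\ref{lem:eta}.(2); when it occurs on the left, $x \in x_i$, I unfold $y \in x_i$ as $\exists \gamma \in x_i (y \doteq \gamma)$ and replace $y$ by $z$ using transitivity, and the case $x \in x$ combines the two. The propositional connectives are routine: for $\land$ and $\lor$ I transport componentwise (pairing, resp.\ case-splitting with $+ \el$) using the induction hypotheses for the immediate subformulas, and $\bot^{\ast} = \rN_0$ is constant. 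The implication case $(\varphi \to \psi)^{\ast} = \varphi^{\ast} \to \psi^{\ast}$ is contravariant: given $y \doteq z$ and $h : \varphi^{\ast}[\dots,y] \to \psi^{\ast}[\dots,y]$, I feed an input of $\varphi^{\ast}[\dots,z]$ backwards into $\varphi^{\ast}[\dots,y]$ via the induction hypothesis instantiated at $z \doteq y$ (available by symmetry), apply $h$, and transport the output forwards by the induction hypothesis at $y \doteq z$.

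The unbounded quantifiers $\Pi_{(\alpha : \bbV)} \varphi^{\ast}$ and $\Sigma_{(\alpha : \bbV)} \varphi^{\ast}$ cause no difficulty, since the quantified variable ranges over the fixed type $\bbV$: one transports under the binder by the induction hypothesis for $\varphi$ in the distinguished position, with $\alpha$ held fixed. The genuine obstacle, and the reason for the strengthened induction hypothesis, is the bounded quantifier when its bound is the distinguished variable. Recall that $(\forall v \in x\, \psi)^{\ast}$ unfolds to $\Pi_{(w : \ov{x})} \psi^{\ast}[\pred \: x \: w / v]$ and $(\exists v \in x\, \psi)^{\ast}$ to $\Sigma_{(w : \ov{x})} \psi^{\ast}[\pred \: x \: w / v]$, where $\ov{x} = \rTV (\idx \: x)$; substituting $y$ and $z$ changes both the index type (from $\ov{y}$ to $\ov{z}$) and the family (from $\pred \: y$ to $\pred \: z$). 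Here I unfold $y \doteq z$: its ``back'' component supplies, for each $w' : \ov{z}$, some $w : \ov{y}$ with $\pred \: y \: w \doteq \pred \: z \: w'$, and its ``forth'' component does the symmetric job.

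Given a proof of the $\Pi$-statement at $y$, I apply it at such a $w$ to obtain $\psi^{\ast}[\pred \: y \: w / v, \dots, y]$, then transport in the $v$-position using the induction hypothesis along $\pred \: y \: w \doteq \pred \: z \: w'$, and (if $x$ still occurs free in $\psi$) transport in the $x$-position along $y \doteq z$; this yields $\psi^{\ast}[\pred \: z \: w' / v, \dots, z]$ for every $w'$, as required. The existential case is dual, using the ``forth'' component to supply the witness $w'$ from a given $w$. It is precisely this step, transporting in the bound-variable position, that requires the induction hypothesis at a variable other than $x$, and hence justifies the all-positions formulation; the remaining cases only ever reuse the induction hypothesis in the same distinguished position.
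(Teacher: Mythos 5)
Your proof is correct and follows essentially the same route as the paper, which simply asserts the lemma is "provable by induction on the length of a given $\CZF$-formula" (citing Aczel) and gives no details. Your elaboration fills this in faithfully, and in particular correctly identifies the one non-routine point — that the bounded-quantifier case $\forall v \in x\,\psi$ with $x$ the distinguished variable forces the induction hypothesis to be applied in the bound-variable position, so the statement must be proved for all free-variable positions simultaneously (equivalently, the induction on length must quantify over the choice of distinguished variable).
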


In $\MLTT$, one is able to define the corresponding operators to the set-theoretic axioms of $\CZF$ (see \cite{aczel1978,aczel1982}). For instance, the union operator and the binary union operator for elements of $\bbV$ can be defined: let $A \leftrightarrow B$ be an abbreviation of $(A \to B) \times (B \to A)$, then the type-theoretic axiom of choice in $\MLTT$ provides the two terms $\bigcup : \bbV \to \bbV$ and $\cup : \bbV \to \bbV \to \bbV$ satisfying
\begin{align*}
  &\Pi_{(\alpha : \bbV )} \Pi_{(\gamma : \bbV)} (\gamma \in \bigcup \alpha \leftrightarrow \exists \delta \in \alpha (\gamma \in \delta )) ,\\
  &\Pi_{(\alpha_1 : \bbV)} \Pi_{(\alpha_2 : \bbV)} \Pi_{(\gamma : \bbV)} (\gamma \in \alpha_1 \cup \alpha_2 \leftrightarrow (\gamma \in \alpha_1 ) + (\gamma \in \alpha_2 ) ) .
\end{align*}
The term $\bigcup$ enables to interpret the axiom \textbf{Union} in $\MLTT$, that is, one can show that the interpretation of the axiom \textbf{Union} is derivable in $\MLTT$. In a similar manner, one can verify that all axioms of $\CZF$ are interpretable in $\MLTT$ (see \cite{aczel1978,aczel1982} for details).

For our purpose, the axiom \textbf{Pi-Numbers} remains to be interpreted in $\MLM$. Below we first define the transitive closure function $\mathsf{tc}$ on $\bbV$ to provide an interpretation of the set-theoretic transitive closure. Next, we define the accessibility predicate $\Acc$ on $\bbV$ and construct the type-theoretic counterparts $\itebV{\alpha}{t}{a}{f}$ of inaccessible sets by iterating the higher-order universe operator $\ru^{\rV}$ along the accessibility given by $\Acc$. Finally, it is shown that inaccessible sets of all transfinite orders in $\CZF$ are indeed interpreted by these counterparts.

Following \cite{RathjenGrifforPalmgren1998}, we define the transitive closure function on $\bbV$ as follows.
\begin{defi}[Transitive Closure, \href{https://github.com/takahashi-yt/czf-in-mahlo/blob/67c9dbf50bee67b73c1ea009929f4838458b8edc/src/CZFAxioms.agda\#L916}{\texttt{Link}}]
  The transitive closure function $\mathsf{tc} : \bbV \to \bbV$ is defined as
  \[
  \mathsf{tc} (\wcons (a , f)) := \wcons (a , f) \cup (\bigcup \wcons (a , \lambda x . \mathsf{tc} (f \; x) )).
  \]
  We put $\tc{\alpha} := \mathsf{tc} (\alpha )$ as in \cite{RathjenGrifforPalmgren1998}.
\end{defi}

Note that, by definition, $\ov{\tc{\wcons (a , f)}}$ is definitionally equal to $\rTV (a) + \Sigma_{(z : \rTV (a))} \ov{\tc{(f \; z)}}$. In addition, we define the accessibility predicate on $\bbV$ as an inductive predicate. This predicate is used to iterate the higher-order operator $\ru^{\rV}$ along a given $\alpha : \bbV$ (\cf~Definition \ref{def:phi}).
\begin{defi}[Accessibility Predicate, \href{https://github.com/takahashi-yt/czf-in-mahlo/blob/67c9dbf50bee67b73c1ea009929f4838458b8edc/src/TypeHierarchy.agda\#L19}{\texttt{Link}}]\label{def:acc}
  The predicate $\Acc$ on $\bbV$ is defined inductively: the formation, introduction, elimination and computation rules are as follows.
  \[
  \infer[\fm]{\Gamma \vdash \Acc \: \alpha \; \type}{
    \Gamma \vdash \alpha : \bbV
  }
  \qquad
  \infer[\intro]{\Gamma \vdash \prog \: f : \Acc \: \alpha}{
    \deduce{\Gamma \vdash f : \Pi_{(x : \ov{\tc{\alpha}})} \Acc \: (\pred \: \tc{\alpha} \: x)}{\Gamma \vdash \alpha : \bbV}
  }
  \]
  \[
  \infer[\el]{\Gamma \vdash \rE_{\Acc}(\alpha , t , g) : C [\alpha / \beta , t / x]}{
    \deduce{\Gamma \vdash g : \Pi_{(\gamma : \bbV)}\Pi_{(f : \Pi_{(y : \ov{\tc{\gamma}})} \Acc \: (\pred \: \tc{\gamma} \: y))}((\Pi_{(y : \ov{\tc{\gamma}})} C[\pred \: \tc{\gamma} \: y / \beta , f \: y / x]) \to C[\gamma / \beta , \prog \: f / x])}{
      \deduce{\Gamma , \beta : \bbV , x : \Acc \: \beta \vdash C \; \type}{
        \deduce{\Gamma \vdash t : \Acc \: \alpha}{\Gamma \vdash \alpha : \bbV}
      }
    }
  }
  \]
  \[
  \infer[\eq]{\Gamma \vdash \rE_{\Acc}(\alpha , \prog \: f , g) = g \: \alpha \: f \: (\lambda y . \rE_{\Acc} (\pred \: \tc{\alpha} \: y , f \: y , g)) : C [\alpha / \beta , \prog \: f / x]}{
    \deduce{\Gamma \vdash g : \Pi_{(\gamma : \bbV)}\Pi_{(f : \Pi_{(y : \ov{\tc{\gamma}})} \Acc \: (\pred \: \tc{\gamma} \: y))}((\Pi_{(y : \ov{\tc{\gamma}})} C[\pred \: \tc{\gamma} \: y / \beta , f \: y / x]) \to C[\gamma / \beta , \prog \: f / x])}{
      \deduce{\Gamma , \beta : \bbV , x : \Acc \: \beta \vdash C \; \type}{
        \deduce{\Gamma \vdash f : \Pi_{(y : \ov{\tc{\alpha}})} \Acc \: (\pred \: \tc{\alpha} \: y)}{\Gamma \vdash \alpha : \bbV}
      }
    }
  }
  \]
\end{defi}

Informally speaking, the family $\bigcup_{x : \ov{\tc{\alpha}}} \pred \: \tc{\alpha} \: x$ of sets exhausts all sets which are needed to construct the set $\alpha$, and the introduction rule of $\Acc$ means that $\alpha$ is accessible if any of such sets is accessible. The elimination rule of $\Acc$ enables to provide an inductive argument for $C[\gamma / \beta , \prog \: f / x]$ with the induction hypothesis that $C [\pred \: \tc{\gamma} \: y / \beta , f \: y / x]$ holds for any $y : \ov{\tc{\gamma}}$.

We denote by $\MLMacc$ the system obtained by extending $\MLM$ with the rules for $\Acc$. In the remainder of this section we work in $\MLMacc$. The induction principle on $\Acc$ provides the iterator of the operator $\ru^{\rV}$ below.
\begin{defi}[Iterator of $\ru^{\rV}$ along Accessibility, \href{https://github.com/takahashi-yt/czf-in-mahlo/blob/67c9dbf50bee67b73c1ea009929f4838458b8edc/src/TypeHierarchy.agda\#L41}{\texttt{Link}}]\label{def:phi}
  We define $\Phi : \Pi_{(\alpha : \bbV)} (\Acc \: \alpha \to \rO)$ by induction on $t : \Acc \: \alpha$:
  \[
  \Phi \: \alpha \: (\prog \: f) \: (c , h) := \ru^{\rV} (\idx \: \tc{\alpha}, \lambda x . \Phi \: (\pred \: \tc{\alpha} \: x) \: (f \: x)) \: (c , h) .
  \]
  We abbreviate $\Phi \: \alpha \: t \: (c , h)$ as $\iteHh{\alpha}{t}{c}{h}$.
\end{defi}
\pagebreak[5]

Here the iterator $\Phi$ takes a set $\alpha$ and a proof $\prog \: f$ of its accessibility with a family $(c , h) : \Fam{\rV}$ and returns the subuniverse $\rU$ of $\rV$ such that
\begin{itemize}
\item $\rU$ is closed under the universe operator $\Phi \: (\pred \: \tc{\alpha} \: x) \: (f \: x)$ for any $x : \ov{\tc{\alpha}}$, which is obtained by the induction hypothesis, and

\item $\rU$ includes the codes of $c$ and $h$.
\end{itemize}
The hierarchy of such subuniverses is defined by applying the left and right projection to the output of the iterator $\Phi$.
\begin{defi}[Transfinite Hierarchy of Subuniverses, \href{https://github.com/takahashi-yt/czf-in-mahlo/blob/67c9dbf50bee67b73c1ea009929f4838458b8edc/src/TypeHierarchy.agda\#L50}{\texttt{Link}}]\label{def:subuni}
  Let $\alpha : \bbV$, $t : \Acc \: \alpha$ and $(a , f) : \Fam{\rV}$ be given. We define the transfinite hierarchy of subuniverses with their decoding functions $\iteTh{\alpha}{t}{a}{f}$ along $\Acc$.
  \begin{align*}
    \iteMh{\alpha}{t}{a}{f} &:= \rp_1 (\iteHh{\alpha}{t}{a}{f}) : \rV & \iteM{\alpha}{t}{a}{f} &:= \rTV (\iteMh{\alpha}{t}{a}{f}) \\
    \iteTh{\alpha}{t}{a}{f} &:= \rp_2 (\iteHh{\alpha}{t}{a}{f}) : \iteM{\alpha}{t}{a}{f} \to \rV & \iteT{\alpha}{t}{a}{f} &:= \lambda x . \rTV (\iteTh{\alpha}{t}{a}{f} (x))
  \end{align*}
\end{defi}

Each of subuniverses $\iteM{\alpha}{t}{a}{f}$ provides the type $\iteV{\alpha}{t}{a}{f}$ of iterative sets on it, as we will define below. The transfinite hierarchy of iterative sets consists of the sets each of which is obtained by embedding the sets in $\iteV{\alpha}{t}{a}{f}$ into $\bbV$ via the mapping $\iteh{\alpha}{t}{a}{f}$ defined below. One can find the Agda code for the definition of this hierarchy of iterative sets in \href{https://github.com/takahashi-yt/czf-in-mahlo/blob/67c9dbf50bee67b73c1ea009929f4838458b8edc/src/IterativeSetHierarchy.agda}{\texttt{IterativeSetHierarchy.agda}} of \cite{takahashi2025}, while the definition of the type $\iteV{\alpha}{t}{a}{f}$ is formalised in \href{https://github.com/takahashi-yt/czf-in-mahlo/blob/67c9dbf50bee67b73c1ea009929f4838458b8edc/src/TypeHierarchy.agda}{\texttt{TypeHierarchy.agda}}.
\begin{defi}[Transfinite Hierarchy of Iterative Sets]\label{def:trans}
  The transfinite hierarchy of iterative sets is defined via the type $\iteV{\alpha}{t}{a}{f}$ and the mapping $\iteh{\alpha}{t}{a}{f} : \iteV{\alpha}{t}{a}{f} \to \bbV$:
  \begin{align*}
    \iteVh{\alpha}{t}{a}{f} &:= \wht{\rW}_{\rV} ( \iteMh{\alpha}{t}{a}{f} , \iteTh{\alpha}{t}{a}{f} ) : \rV \\
    \iteV{\alpha}{t}{a}{f} &:= \rTV (\iteVh{\alpha}{t}{a}{f}) \\
    \iteh{\alpha}{t}{a}{f} \wcons (b , g) &:= \wcons (\iteTh{\alpha}{t}{a}{f} (b) , \lambda x . \iteh{\alpha}{t}{a}{f} (g \: x) ) \\
    \itebV{\alpha}{t}{a}{f} &:= \wcons (\iteVh{\alpha}{t}{a}{f} , \iteh{\alpha}{t}{a}{f}) : \bbV
  \end{align*}
\end{defi}
Lemma \ref{lem:lastone} below shows that $\itebV{\alpha}{t}{a}{f}$ is $\alpha$-set-inaccessible: it is the hierarchy defined above that validates the axiom \textbf{Pi-Numbers} in $\MLMacc$. The set-theoretic terms ``transitive'', ``regular'', ``$\alpha$-set-inaccessible'' and ``unbounded'' below are in fact their translations in $\MLTT$.

In a way similar to define the relations $\doteq$ and $\in$ on $\bbV$, the equivalence relation $\doteq^{\alpha , t}_{(a , f)}$ and the membership relation $\in^{\alpha , t}_{(a , f)}$ on $\iteV{\alpha}{t}{a}{f}$ can be defined, as remarked in \cite{RathjenGrifforPalmgren1998}. One can prove the following lemma for these relations. The Agda code for the proofs of the statements (1), (3) and (4) can be found in \href{https://github.com/takahashi-yt/czf-in-mahlo/blob/67c9dbf50bee67b73c1ea009929f4838458b8edc/src/TypeHierarchy.agda}{\texttt{TypeHierarchy.agda}} of \cite{takahashi2025}. On the other hand, the code for the proofs of statements (2) and (5) can be found in \href{https://github.com/takahashi-yt/czf-in-mahlo/blob/67c9dbf50bee67b73c1ea009929f4838458b8edc/src/IterativeSetHierarchy.agda}{\texttt{IterativeSetHierarchy.agda}} and \href{https://github.com/takahashi-yt/czf-in-mahlo/blob/67c9dbf50bee67b73c1ea009929f4838458b8edc/src/Inaccessibles/Basics.agda}{\texttt{Inaccessibles/Basics.agda}}, respectively. Note that the Agda code for the statement (5) depends on a postulate on which we will comment at the end of this section.
\begin{lem}\label{lem:lemsone}
  The following holds:
  \begin{enumerate}
  \item For any $x : \iteV{\alpha}{t}{a}{f}$, we have $x \doteq^{\alpha , t}_{(a , f)} \wcons (\idx \: x , \pred \: x)$.
    
  \item $\itebV{\alpha}{t}{a}{f}$ is transitive.

  \item For any $x , y : \iteV{\alpha}{t}{a}{f}$, $x \in^{\alpha , t}_{(a , f)} y$ holds if and only if $\iteh{\alpha}{t}{a}{f} x \in \iteh{\alpha}{t}{a}{f} y$ holds.

  \item For any $x , y : \iteV{\alpha}{t}{a}{f}$, $x \doteq^{\alpha , t}_{(a , f)} y$ holds if and only if $\iteh{\alpha}{t}{a}{f} x \doteq \iteh{\alpha}{t}{a}{f} y$ holds.

  \item $\itebV{\alpha}{t}{a}{f}$ is set-inaccessible.
    
  \end{enumerate}
\end{lem}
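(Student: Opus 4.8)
The plan is to treat $\itebV{\alpha}{t}{a}{f}$ as the set in $\bbV$ collecting the iterative sets built over the subuniverse $\iteM{\alpha}{t}{a}{f}$, and to relativise Aczel's interpretation of $\CZF$ from $\bbV$ over $\rV$ to $\iteV{\alpha}{t}{a}{f}$ over $\iteM{\alpha}{t}{a}{f}$. The three clauses of set-inaccessibility then rest on two structural facts about this subuniverse. First, $\iteM{\alpha}{t}{a}{f}$ is a genuine universe: being the decoding of $\rp_1(\iteHh{\alpha}{t}{a}{f}) = \wht{\rU}_{f^{\rV}[\ldots]}$, it is a subuniverse $\rU_f$ of $\rV$ and hence carries the codes $\wht{\rN}_f , \wht{\rN_n}_f , \wht{\Pi}_f , \wht{\Sigma}_f , \wht{+}_f , \wht{\rW}_f , \wht{\Id}_f$ closing it under all standard type formers. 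Second, by the $\ru^{\rV}$-construction of Definition~\ref{def:phi} together with the restriction terms $\res{0}{}$ and $\res{1}{}$, it is moreover closed under the universe operators $\Phi \: (\pred \: \tc{\alpha} \: x) \: (f \: x)$ for $x : \ov{\tc{\alpha}}$, each of which sends a family to a subuniverse above it; this is what makes $\iteM{\alpha}{t}{a}{f}$ a super-universe. Throughout I would use Lemma~\ref{lem:lemsone}.(3),(4) to pass freely between the internal relations $\doteq^{\alpha , t}_{(a , f)} , \in^{\alpha , t}_{(a , f)}$ and the relations $\doteq , \in$ on the images under $\iteh{\alpha}{t}{a}{f}$, so that the set-theoretic predicates (computed in $\bbV$) agree with their internal counterparts.

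For regularity I would verify the three conditions in turn. Inhabitedness is witnessed by the image of the empty set of $\iteV{\alpha}{t}{a}{f}$, and transitivity is exactly Lemma~\ref{lem:lemsone}.(2). The collection clause is the usual type-theoretic argument: given $\beta \in \itebV{\alpha}{t}{a}{f}$, its extent $\ov{\beta}$ is $\iteM{\alpha}{t}{a}{f}$-small, so from a total multivalued relation one extracts, by the type-theoretic axiom of choice, a function $\ov{\beta} \to \iteV{\alpha}{t}{a}{f}$ whose image, indexed again by $\ov{\beta}$, yields a covering set $c \in \itebV{\alpha}{t}{a}{f}$; this only uses closure of $\iteM{\alpha}{t}{a}{f}$ under $\Sigma$. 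That $\itebV{\alpha}{t}{a}{f}$ is a model of $\CZF$ then follows by rerunning the interpretation of the axioms of $\CZF$ recalled earlier, now with $\iteM{\alpha}{t}{a}{f}$ in the role of $\rV$ and $\iteV{\alpha}{t}{a}{f}$ in the role of $\bbV$: every clause of that interpretation uses only that the ambient universe is closed under the standard type formers, which we have.

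The crux is the second clause of set-inaccessibility, that the regular sets are unbounded in $\itebV{\alpha}{t}{a}{f}$. Given $x \in \itebV{\alpha}{t}{a}{f}$ I would use the super-universe closure to form a subuniverse $U'$ of $\iteM{\alpha}{t}{a}{f}$ lying above a family that encodes the members of $x$ together with their transitive closures: concretely, apply one of the operators $\Phi \: (\pred \: \tc{\alpha} \: x') \: (f \: x')$, via $\res{0}{}$ and $\res{1}{}$, to that family, obtaining a code for $U'$ inside $\iteM{\alpha}{t}{a}{f}$. Taking $y$ to be the set collecting the iterative sets over $U'$, the previous paragraph shows $y$ is regular; because $U'$ carries codes for all members of $x$ one gets $x \subseteq y$; and because the code of $y$ is an element of $\iteM{\alpha}{t}{a}{f}$ one gets $y \in \itebV{\alpha}{t}{a}{f}$.

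I expect this last step to be the main obstacle. The delicate points are: arranging the family fed to the operator so that $U'$ genuinely contains codes for all members of $x$, hence $x \subseteq y$ rather than merely $x \in y$; verifying that the resulting $y$ is a regular element of $\itebV{\alpha}{t}{a}{f}$ with the subset and membership relations correctly transported through the embedding $\iteh{\alpha}{t}{a}{f}$ by Lemma~\ref{lem:lemsone}.(3),(4); and handling the degenerate orders where $\ov{\tc{\alpha}}$ supplies no such operator, for which the subuniverse above $x$ must instead be produced directly from the $\ru^{\rV}$-construction. The remaining clauses, by contrast, are routine relativisations of the standard Aczel interpretation.
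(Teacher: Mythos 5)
Your overall framing---relativise Aczel's interpretation of $\CZF$ to $\iteV{\alpha}{t}{a}{f}$ over the subuniverse $\iteM{\alpha}{t}{a}{f}$---matches the paper, which proves (1) by induction on the $\rW$-type and disposes of (2)--(5) by citing \cite[Lemma~4.7, Corollary~4.8]{RathjenGrifforPalmgren1998}, stressing that those proofs use little more than the fact that each $\iteV{\alpha}{t}{a}{f}$ is an instance of Aczel's type of iterative sets over a universe closed under the standard type formers. Your treatment of regularity and of ``model of $\CZF$'' is exactly this relativisation and is fine.

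The gap is in the clause you yourself call the crux: unboundedness of regular sets in $\itebV{\alpha}{t}{a}{f}$. You witness it with sets of the form ``iterative sets over a subuniverse $U'$'' obtained by applying one of the operators $\Phi \: (\pred \: \tc{\alpha} \: x') \: (f' \: x')$. This needs some $x' : \ov{\tc{\alpha}}$ to exist, but Lemma~\ref{lem:lemsone}.(5) must hold for \emph{every} $\alpha$, including the empty set, for which $\ov{\tc{\alpha}}$ is empty; this degenerate case is not a corner case but precisely the base case consumed by the induction in Lemma~\ref{lem:lastone}. Your fallback---produce the subuniverse above $x$ ``directly from the $\ru^{\rV}$-construction''---does not work: $\ru^{\rV}$ is built from the reflection rule $\rV \intro_{\rU}$ of the Mahlo universe, so it returns codes in $\rV$, and there is no reason such a code lies in $\iteM{\alpha}{t}{a}{f}$; in the degenerate case $\iteM{\alpha}{t}{a}{f}$ is closed only under the standard formers $\wht{\rN}_{f}, \wht{\Pi}_{f}, \wht{\Sigma}_{f}, \wht{+}_{f}, \wht{\rW}_{f}, \wht{\Id}_{f}$ and need contain no universe codes at all. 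Hence the regular set you build would not be an element of $\itebV{\alpha}{t}{a}{f}$ and cannot witness unboundedness. The paper's route avoids universes inside the subuniverse altogether: following \cite[Corollary~4.8]{RathjenGrifforPalmgren1998}, regular supersets are produced by relativising Aczel's $\rW$-type proof of the \textbf{Regular Extension Axiom} \cite{aczel1986}, using only the codes $\wht{\rW}_{f}$ that every subuniverse of $\rV$ carries, uniformly in $\alpha$ (this is also why the paper lists ``$\itebV{\alpha}{t}{a}{f}$ satisfies \textbf{Regular Extension Axiom}'' as the postulate on which its Agda proof of (5) rests). Where your operator-based construction does apply it is sound---it is essentially the construction of Lemma~\ref{lem:main}---but in the paper that construction serves a different purpose, namely producing $(\pred \: \tc{\alpha} \: u)$-set-inaccessibles for the transfinite hierarchy, and it cannot replace the $\rW$-type argument for plain set-inaccessibility.
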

\begin{proof}
  The statement (1) is proved by induction on $\iteV{\alpha}{t}{a}{f}$. For the proofs of (2), (3) and (4), see \cite[Lemma 4.7]{RathjenGrifforPalmgren1998}; for the proof of (5), see \cite[Corollary 4.8]{RathjenGrifforPalmgren1998}. Note that the definition of transfinite hierarchy of iterative sets in \cite{RathjenGrifforPalmgren1998} does not use the accessibility predicate $\Acc$. Moreover, the proofs in \cite{RathjenGrifforPalmgren1998} do not assume much more than that each type $\iteV{\alpha}{t}{a}{f}$ is an instance of Aczel's type of iterative sets (\cf~\cite[\S~1.9]{aczel1982} and \cite[\S~1]{aczel1986}).
\end{proof}

By a standard argument using the induction principle of identity types, we have the so-called indiscernibility of identicals, which is also called the transport lemma: for any $A , B$ with $\Gamma , z : A \vdash B \; \type$,
\[
\mathsf{transport}^B : \Pi_{(x : A)} \Pi_{(y : A)} (x =_A y \to B [x / z] \to B [y / z]) .
\]
For the sake of brevity, we often treat the first two arguments of $\mathsf{transport}^B$ as implicit.
\begin{lem}[\href{https://github.com/takahashi-yt/czf-in-mahlo/blob/67c9dbf50bee67b73c1ea009929f4838458b8edc/src/CZFBasics.agda\#L121}{\texttt{Link}}]\label{lem:transp}
  For any $a , b : \rV$, if $p : a =_{\rV} b$ holds then for any $g : \rTV (b) \to \bbV$,
  \[
  \wcons (a , \lambda x . g \: ( \mathsf{transport}^{\rTV (z)} \: a \: b \: p \: x)) =_{\bbV} \wcons (b , g)
  \]
  holds.
\end{lem}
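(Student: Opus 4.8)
The plan is to prove this by $\Id$-elimination (based path induction) on the hypothesis $p : a =_{\rV} b$. The one point to respect from the outset is that $g$ has type $\rTV (b) \to \bbV$, whose domain depends on $b$; hence one cannot fix $g$ before performing the induction, since its type would no longer typecheck once $b$ is generalised. Instead I would fold the quantification over $g$ into the motive and take, for $b : \rV$ and $q : a =_{\rV} b$, the family
\[
P(b , q) := \Pi_{(g : \rTV (b) \to \bbV)} \bigl( \wcons (a , \lambda x . g \: (\mathsf{transport}^{\rTV (z)} \: a \: b \: q \: x)) =_{\bbV} \wcons (b , g) \bigr) .
\]
By the elimination rule for identity types based at $a$, it then suffices to construct an inhabitant of $P(a , \refl (a))$, and the desired statement is recovered by instantiating the resulting proof of $\Pi_{(b : \rV)} \Pi_{(q : a =_{\rV} b)} P(b , q)$ at the given $b$, $p$ and $g$.

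For the base case I would fix $g : \rTV (a) \to \bbV$ and compute the transport. Since $\mathsf{transport}^{\rTV (z)}$ is itself obtained from $\Id$-elimination, its computation rule yields $\mathsf{transport}^{\rTV (z)} \: a \: a \: \refl (a) \: x = x$ for every $x : \rTV (a)$. The argument of $\wcons$ on the left-hand side therefore reduces to $\lambda x . g \: x$, and by the $\eta$-rule for $\Pi$-types this is definitionally $g$ itself. Both sides of the required identity thus become $\wcons (a , g)$, and the goal is discharged by $\refl (\wcons (a , g))$.

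I expect the only real point of care to be the correct formulation of the motive: because the type of $g$ varies with $b$, the universal quantifier over $g$ must sit inside $P$ rather than outside the induction, and the substitutions of $a$ for $b$ and of $\refl (a)$ for $p$ must be tracked faithfully through the dependent domain $\rTV (b) \to \bbV$ of $g$. Once the motive is set up this way the remainder is purely computational, relying on the definitional computation rule for $\mathsf{transport}$ together with $\eta$ for $\Pi$-types, both of which are available in the intensional system used here and in its Agda rendering. This keeps the base case a single reflexivity term, so no appeal to function extensionality is needed.
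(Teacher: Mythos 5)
Your proposal is correct and follows essentially the same route as the paper's own proof: induction on the identity proof $p$, reducing to the case $\refl(a)$, where the computation rule makes $\mathsf{transport}^{\rTV(z)}\:a\:a\:\refl(a)$ the identity function, so the goal follows from reflexivity together with the $\eta$-rule for $\Pi$-types. Your explicit treatment of the motive (quantifying over $g$ inside it, since its type depends on $b$) is a detail the paper leaves implicit, and is easily rephrased for the paper's unbased $\Id$-elimination rule by letting the motive range over both endpoints.
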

\begin{proof}
  One can prove the lemma by the induction principle of identity types: it suffices to verify that the claim
  \[
  \wcons (a , \lambda x . g \: ( \mathsf{transport}^{\rTV (z)} \: a \: a \: \refl (a) \: x)) =_{\bbV} \wcons (a , g)
  \]
  holds. Since $\mathsf{transport}^{\rTV (z)} \: a \: a \: \refl (a)$ is definitionally equal to the identity function on $\rTV (a)$, the claim holds by the reflexivity of identity types and the $\eta$-rule for $\Pi$-types.
\end{proof}

We prove one more auxiliary lemma below.
\begin{lem}[\href{https://github.com/takahashi-yt/czf-in-mahlo/blob/67c9dbf50bee67b73c1ea009929f4838458b8edc/src/TypeHierarchy.agda\#L56}{\texttt{Link}}]\label{lem:code}
  For any $\alpha : \bbV$, $t : \Acc \: \alpha$, $a : \rV$, $g : \rTV (a) \to \rV$ and $x : \rTV (a)$, there is a term $c$ of type $\iteM{\alpha}{t}{a}{g}$ such that $\iteTh{\alpha}{t}{a}{g} (c) =_{\rV}  g \: x$ holds.
\end{lem}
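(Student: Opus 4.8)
The plan is to reduce the statement to the single case $t = \prog \: f$ and then produce an explicit code, generalising the computation carried out in Examples~\ref{exa:opu}. Since $\Phi \: \alpha \: t \: (a , g)$ only unfolds definitionally when its $\Acc$-argument is of the form $\prog \: f$, I would first apply the elimination rule for $\Acc$ from Definition~\ref{def:acc}, with the motive sending $\gamma : \bbV$ and $s : \Acc \: \gamma$ to
\[
\Pi_{(a : \rV)} \Pi_{(g : \rTV (a) \to \rV)} \Pi_{(x : \rTV (a))} \Sigma_{(c : \iteM{\gamma}{s}{a}{g})} \bigl( \iteTh{\gamma}{s}{a}{g} (c) =_{\rV} g \: x \bigr) .
\]
The induction hypothesis supplied by $\rE_{\Acc}$ will turn out to be unnecessary, so this is in effect a case analysis, and it suffices to construct the required term when $t = \prog \: f$ with $f : \Pi_{(y : \ov{\tc{\alpha}})} \Acc \: (\pred \: \tc{\alpha} \: y)$.

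Writing $(z , v) := (\idx \: \tc{\alpha} , \lambda y . \Phi \: (\pred \: \tc{\alpha} \: y) \: (f \: y)) : \Fam{\rO}$ and $F := f^{\mathbb{M}} [z , v , a , g]$, Definitions~\ref{def:uop} and~\ref{def:phi} give $\iteHh{\alpha}{\prog \: f}{a}{g} = \ru^{\mathbb{M}} \: (z , v) \: (a , g) = (\wht{\rU}_{F} , \wht{\rT}_{F})$, whence $\iteM{\alpha}{\prog \: f}{a}{g} = \rU_{F}$ and $\iteTh{\alpha}{\prog \: f}{a}{g} = \wht{\rT}_{F}$. Following Examples~\ref{exa:opu}, I would then take
\[
c := \res{1}{F} ( \wht{\rN_0}_{F} , \lambda w . \rE_0 (w) , \ri \ri \rj \: x ) .
\]
To see that this is well-typed I unfold $\rT_{F} ( \res{0}{F} ( \wht{\rN_0}_{F} , \lambda w . \rE_0 (w) ) )$ by the decoding rules of $\rU_{F}$ together with the $\rV$-computation rules for $\wht{+}_{\rV}$ and $\wht{\Sigma}_{\rV}$; this produces exactly the index type $\rN_1 + \rTV (a) + \rTV (z) + \Sigma_{(w' : \rTV (z))} \rTV (\rp_1 (v \: w' \: w))$ of $h^{\mathbb{M}}$, into which an element $x : \rTV (a)$ injects precisely as $\ri \ri \rj \: x$. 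Hence $c : \rU_{F} = \iteM{\alpha}{\prog \: f}{a}{g}$.

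It then remains to verify the equation. By the computation rule for $\res{1}{F}$ and the fact that $F = \lambda w . ( \ldots , h^{\mathbb{M}} \: w )$, so that $\rp_2$ of $F$ applied to a family returns $h^{\mathbb{M}}$ on that family, I compute
\[
\wht{\rT}_{F} (c) = h^{\mathbb{M}} \: (\wht{\rN_0}_{\rV} , \lambda w . \wht{\rT}_{F} (\rE_0 (w))) \: (\ri \ri \rj \: x) = g \: x ,
\]
where the last step is the second defining clause of $h^{\mathbb{M}}$ after its parameters $x , y$ have been instantiated to $a , g$. As this chain of equalities is definitional, $\refl (g \: x)$ witnesses $\iteTh{\alpha}{\prog \: f}{a}{g} (c) =_{\rV} g \: x$, which closes the case and hence the lemma. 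The only delicate point I expect is the bookkeeping for the index type of $h^{\mathbb{M}}$ and the check that the third argument $\ri \ri \rj \: x$ of $\res{1}{F}$ genuinely inhabits $\rT_{F} ( \res{0}{F} ( \ldots ) )$; the recursive data encoded by $(z , v)$ never enters the argument, which is exactly why the induction hypothesis is not needed.
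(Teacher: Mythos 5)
Your proposal is correct and follows essentially the same route as the paper: reduce to the case $t = \prog \: f$ via the $\Acc$-eliminator (the paper phrases this as "transfinite induction on $t$" but likewise never uses the induction hypothesis), identify $\iteM{\alpha}{\prog \: f}{a}{g}$ with $\rU_{F}$ for $F = f^{\mathbb{M}}[\idx \: \tc{\alpha} , \lambda y . \Phi \: (\pred \: \tc{\alpha} \: y) \: (f \: y) , a , g]$, and take the very same code $\res{1}{F} ( \wht{\rN_0}_{F} , \lambda w . \rE_0 (w) , \ri \ri \rj \: x )$, whose decoding computes to $g \: x$ by the $\res{1}{F}$-rule and the second defining clause of $h^{\mathbb{M}}$. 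Your additional well-typedness check of the third argument and the explicit $\refl$-witness are just more detailed bookkeeping of what the paper leaves implicit.
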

\begin{proof}
  We prove the lemma by transfinite induction on $t : \Acc \; \alpha$, so let $t$ be $\prog \: f'$. We have
  \begin{align*}
  \iteM{\alpha}{\prog \: f'}{a}{g} &= \rTV (\rp_1 (\Phi \: \alpha \: (\prog \: f') \: (a , g) )) \\
    &= \rTV (\rp_1 (\ru^{\rV} (\idx \: \tc{\alpha}, \lambda y . \Phi \: (\pred \: \tc{\alpha} \: y) \: (f' \: y)) \: (a , g))) \\
    &= \rU_{f^{\rV} [\idx \: \tc{\alpha} , \lambda y . \Phi \: (\pred \: \tc{\alpha} \: y) \: (f' \: y) , a , g]} .
  \end{align*}
  Put $f^{\ast} := f^{\rV} [\idx \: \tc{\alpha} , \lambda y . \Phi \: (\pred \: \tc{\alpha} \: y) \: (f' \: y) , a , g]$. Note that we have $\rU_{f^{\ast}} = \iteM{\alpha}{\prog \: f'}{a}{g}$ and $\wht{\rT}_{f^{\ast}} = \iteTh{\alpha}{\prog \: f'}{a}{g}$. It then follows that
  \[
  \wht{\rT}_{f^{\ast}} (\res{1}{f^{\ast}} (\wht{\rN_0}_{f^{\ast}} , \lambda v . \rE_0 v , \ri \ri \rj \: x)) = \rp_2 (f^{\ast} (\wht{\rT}_{f^{\ast}} (\wht{\rN_0}_{f^{\ast}}) , \lambda v . \wht{\rT}_{f^{\ast}} (\rE_0 v) , \ri \ri \rj \: x)) = g \: x
  \]
  holds. So the term $\res{1}{f^{\ast}} (\wht{\rN_0}_{f^{\ast}} , \lambda v . \rE_0 v , \ri \ri \rj \: x) : \rU_{f^{\ast}}$ satisfies the claim.
\end{proof}

The following is the main lemma in our proof. Informally, this lemma says that $\itebV{\alpha}{\prog \: f'}{a}{f}$ is unbounded in $(\pred \: \tc{\alpha} \: u)$-set-inaccessibles for any $u : \ov{\tc{\alpha}}$.
\begin{lem}[Main Lemma, \href{https://github.com/takahashi-yt/czf-in-mahlo/blob/67c9dbf50bee67b73c1ea009929f4838458b8edc/src/PiNumbersAxiom.agda\#L32}{\texttt{Link}}]\label{lem:main}
  If $\xi \in \itebV{\alpha}{\prog \: f'}{a}{f}$ holds, then for any $u : \ov{\tc{\alpha}}$, there are terms $b : \iteM{\alpha}{\prog \: f'}{a}{f}$ and $g : \iteT{\alpha}{\prog \: f'}{a}{f} (b) \to \iteM{\alpha}{\prog \: f'}{a}{f}$ such that
  \[
  \xi \in \itebV{\pred \: \tc{\alpha} \: u}{f' \: u}{b^{\ast}}{g^{\ast}} \in \itebV{\alpha}{\prog \: f'}{a}{f}
  \]
  holds with $b^{\ast} := \iteTh{\alpha}{\prog \: f'}{a}{f} (b)$ and $g^{\ast} := \lambda x . \iteTh{\alpha}{\prog \: f'}{a}{f}(g \: x)$.
\end{lem}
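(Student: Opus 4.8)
The plan is to exploit that, by Definition~\ref{def:phi}, the subuniverse $\iteM{\alpha}{\prog\:f'}{a}{f}$ equals $\rU_{f^*}$ for $f^* := f^{\rV}[\idx\:\tc{\alpha}, v, a, f]$ with $v := \lambda y.\Phi\:(\pred\:\tc{\alpha}\:y)\:(f'\:y)$ (exactly as in the proof of Lemma~\ref{lem:code}), and that $\rU_{f^*}$ is closed under the first-order operator $v\:u = \Phi\:(\pred\:\tc{\alpha}\:u)\:(f'\:u)$ for each $u : \ov{\tc{\alpha}}$. First I would unfold the hypothesis: since $\itebV{\alpha}{\prog\:f'}{a}{f} = \wcons(\iteVh{\alpha}{\prog\:f'}{a}{f}, \iteh{\alpha}{\prog\:f'}{a}{f})$ has index type $\iteV{\alpha}{\prog\:f'}{a}{f}$, the hypothesis $\xi \in \itebV{\alpha}{\prog\:f'}{a}{f}$ yields some $\eta : \iteV{\alpha}{\prog\:f'}{a}{f}$ with $\xi \doteq \iteh{\alpha}{\prog\:f'}{a}{f}(\eta)$, and by Lemma~\ref{lem:lemsone}.(1) I may take $\eta = \wcons(\idx\:\eta, \pred\:\eta)$.

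Next I would fix the family $(b, g)$ that determines the sub-inaccessible. Its universe $\iteM{\pred\:\tc{\alpha}\:u}{f'\:u}{b^*}{g^*}$ is generated over the family $(b^*, g^*)$ together with the operators indexed by $\ov{\tc{(\pred\:\tc{\alpha}\:u)}}$; crucially, the operator $v\:u$ is no longer available inside it, so any node-type of $\eta$ that was produced by $v\:u$ must be injected through $(b^*, g^*)$. I would therefore choose $b : \iteM{\alpha}{\prog\:f'}{a}{f}$ to be a code for the type enumerating all nodes in the transitive closure of $\eta$ (available because $\rU_{f^*}$ is closed under $+$ and $\Sigma$, mirroring the clause $\ov{\tc{\wcons(a,f)}} = \rTV(a) + \Sigma_{(z:\rTV(a))}\ov{\tc{(f\:z)}}$) and $g$ to send each node of $\eta$ to the code of its index type, which lies in $\rU_{f^*}$ since $\eta$ is built over $\rU_{f^*}$. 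With this choice every node-type of $\eta$ occurs as some $g^*(x)$.

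I would then establish the inner membership $\xi \in \itebV{\pred\:\tc{\alpha}\:u}{f'\:u}{b^*}{g^*}$ by induction on $\eta : \iteV{\alpha}{\prog\:f'}{a}{f}$, reconstructing it as a set $\eta' : \iteV{\pred\:\tc{\alpha}\:u}{f'\:u}{b^*}{g^*}$: at each node, Lemma~\ref{lem:code} supplies a code in $\iteM{\pred\:\tc{\alpha}\:u}{f'\:u}{b^*}{g^*}$ whose decoding is the relevant node-type $g^*(x)$, I set the index of $\eta'$ to this code, and I recurse on the children after transporting along the identification. Combining the transport Lemma~\ref{lem:transp}, Lemma~\ref{lem:lemsone}.(4), and transitivity of $\doteq$, I would obtain $\iteh{\pred\:\tc{\alpha}\:u}{f'\:u}{b^*}{g^*}(\eta') \doteq \iteh{\alpha}{\prog\:f'}{a}{f}(\eta) \doteq \xi$, hence the membership.

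For the outer membership $\itebV{\pred\:\tc{\alpha}\:u}{f'\:u}{b^*}{g^*} \in \itebV{\alpha}{\prog\:f'}{a}{f}$ I would invoke the closure of $\rU_{f^*}$ under $v\:u$. By the computation rules for $\res{1}{f^*}$ (\cf~Example~\ref{exa:opu}), the term $\res{1}{f^*}(b, g, \ri\rj\:u)$ is a code in $\rU_{f^*}$ whose decoding is $\iteMh{\pred\:\tc{\alpha}\:u}{f'\:u}{b^*}{g^*}$, while $\lambda z.\res{1}{f^*}(b, g, \rj(u, z))$ codes its decoding function; feeding these into the universe's $\rW$-code $\wht{\rW}_{f^*}$ produces $b_0 : \iteM{\alpha}{\prog\:f'}{a}{f}$ with $\iteTh{\alpha}{\prog\:f'}{a}{f}(b_0) =_{\rV} \iteVh{\pred\:\tc{\alpha}\:u}{f'\:u}{b^*}{g^*}$. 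Taking $w_0 := \wcons(b_0, g_0)$ for a suitable $g_0$ (defined by transport along this equality so that its children agree with those of $\itebV{\pred\:\tc{\alpha}\:u}{f'\:u}{b^*}{g^*}$) and applying Lemma~\ref{lem:transp} gives $\iteh{\alpha}{\prog\:f'}{a}{f}(w_0) \doteq \itebV{\pred\:\tc{\alpha}\:u}{f'\:u}{b^*}{g^*}$, as required. The main obstacle I anticipate is the coherence bookkeeping common to both memberships: the codes delivered by $\res{1}{f^*}$ and by Lemma~\ref{lem:code} match their intended types only up to propositional equality in $\rV$, so both the reconstruction of $\eta'$ and the verification of the $\doteq$-relations must be threaded through iterated transports, and one must check that these transports cohere with the recursive clause of Definition~\ref{def:trans} defining the embeddings.
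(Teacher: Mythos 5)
Your proposal reaches the right conclusion, and its treatment of the outer membership uses the same codes as the paper ($\res{1}{f^{\rV}} (b , g , \ri \rj \: u)$, $\lambda z . \res{1}{f^{\rV}} (b , g , \rj (u , z))$, and the $\rW$-code), but the first half takes a genuinely different route. The paper never inducts on a code of $\xi$: it first invokes Lemma~\ref{lem:lemsone}.(5) to get that $\itebV{\alpha}{\prog \: f'}{a}{f}$ is set-inaccessible, extracts a \emph{transitive regular} set $\beta$ with $\xi \in \beta \in \itebV{\alpha}{\prog \: f'}{a}{f}$, lets $x$ be a code of $\beta$, and chooses the family only one level deep, $b := \idx \: x$ and $g := \lambda z . \idx \: (\pred \: x \: z)$. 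Writing $\wht{\bV}$ for $\itebV{\pred \: \tc{\alpha} \: u}{f' \: u}{b^{\ast}}{g^{\ast}}$, the inner membership is then proved as $\Pi_{(\delta : \bbV)} (\delta \in \beta \to \delta \in \wht{\bV})$ by transfinite induction on $\delta$: transitivity of $\beta$ keeps the elements of $\delta$ inside $\beta$, so the one-level enumeration plus the type-theoretic axiom of choice, Lemma~\ref{lem:code} and Lemma~\ref{lem:transp} suffice. You instead fix a code $\eta$ of $\xi$ itself, let $(b , g)$ enumerate \emph{all} node index codes of $\eta$ (using closure of the subuniverse under $\rN_1$, $+$, $\Sigma$), and rebuild $\eta$ inside the smaller universe by structural $\rW$-induction. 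This is workable, and it has the interesting feature of not using Lemma~\ref{lem:lemsone}.(5) inside the Main Lemma at all (in the formalisation that lemma rests on the postulated \textbf{Regular Extension Axiom}), though the dependency reappears anyway in Lemma~\ref{lem:lastone}. The price is heavier bookkeeping: since $(b^{\ast} , g^{\ast})$ is fixed once and for all from the whole tree $\eta$, a naive induction on $\eta$ does not typecheck --- the induction statement must be strengthened so that each subtree carries an injection of its node positions into $\rTV (b^{\ast})$ commuting with the index-code assignment, and only then can Lemma~\ref{lem:code} be applied at every node. The paper's transitivity trick is precisely what makes such an invariant unnecessary.

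Two steps in your sketch of the outer membership also need sharpening. A ``suitable $g_0$'' cannot be produced by transport: as in the paper, one must define a re-coding function $l$ by recursion on the $\rW$-type $\iteT{\alpha}{\prog \: f'}{a}{f} (\tld{\cV})$, namely $l \: \wcons (d , q) := \wcons (\tld{\rT} \: d , \lambda y . l \: (q \: y))$, where in fact no transport is needed because $\iteTh{\alpha}{\prog \: f'}{a}{f} (\tld{\rT} \: d)$ and $\iteTh{\pred \: \tc{\alpha} \: u}{f' \: u}{b^{\ast}}{g^{\ast}} (d)$ are definitionally equal. Moreover, the final identification $\iteh{\alpha}{\prog \: f'}{a}{f} \wcons (\tld{\cV} , l) \doteq \wht{\bV}$ is not an instance of Lemma~\ref{lem:transp}: it requires a separate induction establishing $\iteh{\alpha}{\prog \: f'}{a}{f} (l \: v) \doteq \iteh{\pred \: \tc{\alpha} \: u}{f' \: u}{b^{\ast}}{g^{\ast}} \: v$ for all $v$ (the paper's equation~\eqref{eq:ell}), followed by the extensionality property of Lemma~\ref{lem:eta}.(4) to convert the two inclusions into $\doteq$. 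Your closing remark about coherence points in the right direction, but it understates that a genuine recursion and a genuine induction are required there, not iterated transports.
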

\begin{proof}
  Below we abbreviate $\prog \: f'$ as $t$. Since $\itebV{\alpha}{t}{a}{f}$ is set-inaccessible by Lemma \ref{lem:lemsone}.(5), the singleton $\{ \xi \}$ also belongs to $\itebV{\alpha}{t}{a}{f}$. Moreover, there is a regular set $\beta \in \itebV{\alpha}{t}{a}{f}$ such that $\xi \in \beta$ holds and $\beta$ is transitive. It follows that $\beta \doteq \iteh{\alpha}{t}{a}{f} x$ holds for some $x : \iteV{\alpha}{t}{a}{f}$. We have $x \doteq^{\alpha , t}_{(a , f)} \wcons (\idx \: x , \pred \: x)$ by Lemma \ref{lem:lemsone}.(1), so
  \[
  \beta \doteq \iteh{\alpha}{t}{a}{f} x \doteq \iteh{\alpha}{t}{a}{f} \wcons (\idx \: x , \pred \: x) = \wcons (\iteTh{\alpha}{t}{a}{f}(\idx \: x) , \lambda y . \iteh{\alpha}{t}{a}{f} (\pred \: x \: y) )
  \]
  holds by Lemma \ref{lem:lemsone}.(4).

  Put
  \begin{itemize}
  \item $b := \idx \: x : \iteM{\alpha}{t}{a}{f}$ and $g := \lambda z . \idx \: (\pred \: x \: z) : \iteT{\alpha}{t}{a}{f} (b) \to \iteM{\alpha}{t}{a}{f}$, where $\iteT{\alpha}{t}{a}{f} (b) = \rTV (\iteTh{\alpha}{t}{a}{f} (b))$ holds by definition,

  \item $b^{\ast} := \iteTh{\alpha}{t}{a}{f} (b) : \rV$ and $g^{\ast} := \lambda w . \iteTh{\alpha}{t}{a}{f}(g \: w) : \iteT{\alpha}{t}{a}{f} (b) \to \rV$,

  \item $\wht{\mathcal{U}} := \iteMh{\pred \: \tc{\alpha} \: u}{f' \: u}{b^{\ast}}{g^{\ast}}$.
  \end{itemize}
  
  We first show that $\iteM{\alpha}{t}{a}{f}$ has a code for $\wht{\mathcal{U}}$. We have by definition
  \begin{align*}
  \iteM{\alpha}{t}{a}{f} &= \rTV (\rp_1 (\Phi^{\alpha , t}_{(a , f)})) \\
    &= \rTV (\rp_1 (\ru^{\rV}(\idx \: \tc{\alpha} , \lambda y . \Phi \: (\pred \: \tc{\alpha} \: y) \: (f' \: y) ) (a , f) )) \\
    &= \rTV (\wht{\rU}_{f^{\rV} [\idx \: \tc{\alpha} , \lambda y . \Phi \: (\pred \: \tc{\alpha} \: y) \: (f' \: y) , a , f] } ) \\
    &= \rU_{f^{\rV} [\idx \: \tc{\alpha} , \lambda y . \Phi \: (\pred \: \tc{\alpha} \: y) \: (f' \: y) , a , f] } .
  \end{align*}
  In this proof, we abbreviate $f^{\rV} [\idx \: \tc{\alpha} , \lambda y . \Phi \: (\pred \: \tc{\alpha} \: y) \: (f' \: y) , a , f]$ as $f^{\rV}$. Recall that $f^{\rV}$ is of the form
  \[
  \lambda w . (\wht{\rN_1}_{\rV} \: \wht{+}_{\rV} \: a \: \wht{+}_{\rV} \: (\idx \: \tc{\alpha}) \: \wht{+}_{\rV} \: \wht{\Sigma}_{\rV} (\idx \: \tc{\alpha} , \lambda w' . \rp_1 ((\lambda y . \Phi \: (\pred \: \tc{\alpha} \: y) \: (f' \: y)) \: w' \: w)) , \: h^{\mathbb{M}} \: w)
  \]
  by Definition \ref{def:uop}.
  
  On the other hand, the following holds for $\res{1}{f^{\rV}} (b , g , \ri \rj \: u) : \iteM{\alpha}{t}{a}{f}$.
  \begin{align*}
    \iteTh{\alpha}{t}{a}{f} (\res{1}{f^{\rV}} (b , g , \ri \rj \: u)) &= \rp_2 (f^{\rV} (b^{\ast} , g^{\ast})) \: (\ri \rj \: u) \\
    &= h^{\rV} (b^{\ast} , g^{\ast}) \: (\ri \rj \: u) = \wht{\mathcal{U}}
  \end{align*}
  Therefore, $\res{1}{f^{\rV}} (b , g , \ri \rj \: u)$ is a code of $\wht{\mathcal{U}}$ in $\iteM{\alpha}{t}{a}{f}$. Put $\tld{\mathcal{U}} := \res{1}{f^{\rV}} (b , g , \ri \rj \: u)$. Then, by defining
  \[
  \tld{\rT} := \lambda z . \res{1}{f^{\rV}}(b , g , \rj (u , z) ) : \iteT{\alpha}{t}{a}{f} (\tld{\mathcal{U}}) \to \iteM{\alpha}{t}{a}{f} ,
  \]
  we have $\tld{\cV} := \wht{\rW}_{f^{\rV}}(\tld{\mathcal{U}} , \tld{\rT}) : \iteM{\alpha}{t}{a}{f}$.
  
  Next, we show that $\xi \in \itebV{\pred \: \tc{\alpha} \: u}{f' \: u}{b^{\ast}}{g^{\ast}}$ holds. Let $\wht{\bV} := \itebV{\pred \: \tc{\alpha} \: u}{f' \: u}{b^{\ast}}{g^{\ast}}$. It suffices to verify that for any $\delta : \bbV$, if $\delta \in \beta$ holds then $\delta \in \wht{\bV}$ holds, because we have $\xi \in \beta$.

  We prove the claim by transfinite induction on $\delta$. Assume that $\delta \in \beta$ holds, so we have $\delta \doteq \iteh{\alpha}{t}{a}{f}(\pred \: x \: z)$ for some $z : \iteT{\alpha}{t}{a}{f} (b)$. Let $C' := g \: z$. For any $v : \iteT{\alpha}{t}{a}{f} (C')$, we have
  \begin{align*}
    \iteh{\alpha}{t}{a}{f} (\pred \: (\pred \: x \: z) \: v) &\in \wcons (\iteTh{\alpha}{t}{a}{f}(\idx (\pred \: x \: z)) , \lambda y . \iteh{\alpha}{t}{a}{f} (\pred \: (\pred \: x \: z) \: y) ) \\
    &= \iteh{\alpha}{t}{a}{f} \wcons (\idx \: (\pred \: x \: z) , \pred \: (\pred \: x \: z)) \\
    &\doteq \iteh{\alpha}{t}{a}{f} (\pred \: x \: z) \\
    &\doteq \delta .
  \end{align*}
  Since $\beta$ is transitive, we have $\iteh{\alpha}{t}{a}{f} (\pred \: (\pred \: x \: z) \: v) \in \beta$, so $\iteh{\alpha}{t}{a}{f} (\pred \: (\pred \: x \: z) \: v) \in \wht{\bV}$ by IH of induction on $\delta$. Thus,
  \[
  \Pi_{(v : \iteT{\alpha}{t}{a}{f} (C'))} \Sigma_{(y : \iteV{\pred \: \tc{\alpha} \: u}{f' \: u}{b^{\ast}}{g^{\ast}})} \iteh{\alpha}{t}{a}{f} (\pred \: (\pred \: x \: z) \: v) \doteq \iteh{\pred \: \tc{\alpha} \: u}{f' \: u}{b^{\ast}}{g^{\ast}} y
  \]
  holds, hence we have by the type-theoretic axiom of choice provable in $\MLTT$
  \begin{align}
    \Sigma_{(g' : \iteT{\alpha}{t}{a}{f} (C') \to \iteV{\pred \: \tc{\alpha} \: u}{f' \: u}{b^{\ast}}{g^{\ast}})} \Pi_{(v : \iteT{\alpha}{t}{a}{f} (C'))} \iteh{\alpha}{t}{a}{f} (\pred \: (\pred \: x \: z) \: v) \doteq \iteh{\pred \: \tc{\alpha} \: u}{f' \: u}{b^{\ast}}{g^{\ast}} (g' \: v) . \label{eq:byAC}
  \end{align}

  On the other hand, by Lemma \ref{lem:code}, there is a term $C$ of type $\iteM{\pred \: \tc{\alpha} \: u}{f' \: u}{b^{\ast}}{g^{\ast}}$ such that $\iteTh{\pred \: \tc{\alpha} \: u}{f' \: u}{b^{\ast}}{g^{\ast}} (C) =_{\rV} g^{\ast} \: z$ holds. Since $g^{\ast} \: z = \iteTh{\alpha}{t}{a}{f} (g \: z) = \iteTh{\alpha}{t}{a}{f} (C')$ holds, we have a term $p$ of type $\iteTh{\pred \: \tc{\alpha} \: u}{f' \: u}{b^{\ast}}{g^{\ast}} (C) =_{\rV} \iteTh{\alpha}{t}{a}{f} (g \: z)$. Therefore, we also have a term
  \[
  \wcons (C , \lambda y . g' \: (\mathsf{transport}^{\rTV (v)} \: p \: y)) : \iteV{\pred \: \tc{\alpha} \: u}{f' \: u}{b^{\ast}}{g^{\ast}},
  \]
  because for any $y : \iteTh{\pred \: \tc{\alpha} \: u}{f' \: u}{b^{\ast}}{g^{\ast}} (C)$, we have $g' \: (\mathsf{transport}^{\rTV (v)} \: p \: y) : \iteV{\pred \: \tc{\alpha} \: u}{f' \: u}{b^{\ast}}{g^{\ast}}$. So
  \[
  \iteh{\pred \: \tc{\alpha} \: u}{f' \: u}{b^{\ast}}{g^{\ast}} \wcons (C , \lambda y . g' \: (\mathsf{transport}^{\rTV (v)} \: p \: y)) \in \wht{\bV} = \itebV{\pred \: \tc{\alpha} \: u}{f' \: u}{b^{\ast}}{g^{\ast}}
  \]
  holds.

  Then, by \eqref{eq:byAC} and Lemma \ref{lem:transp},
  \begin{align*}
    &\; \iteh{\pred \: \tc{\alpha} \: u}{f' \: u}{b^{\ast}}{g^{\ast}} \wcons (C , \lambda y . g' \: (\mathsf{transport}^{\rTV (v)} \: p \: y)) \\
    &\doteq \wcons (\iteTh{\pred \: \tc{\alpha} \: u}{f' \: u}{b^{\ast}}{g^{\ast}} (C) , \lambda y . \iteh{\pred \: \tc{\alpha} \: u}{f' \: u}{b^{\ast}}{g^{\ast}} (g' \: (\mathsf{transport}^{\rTV (v)} \: p \: y))) \\
    &\doteq \wcons (\iteTh{\pred \: \tc{\alpha} \: u}{f' \: u}{b^{\ast}}{g^{\ast}} (C) , \lambda y . \iteh{\alpha}{t}{a}{f} (\pred \: (\pred \: x \: z) \: (\mathsf{transport}^{\rTV (v)} \: p \: y))) \\
    &\doteq \wcons (\iteTh{\alpha}{t}{a}{f} (C ') , \lambda y . \iteh{\alpha}{t}{a}{f} (\pred \: (\pred \: x \: z) \: y) ) \\
    &\doteq \delta
  \end{align*}
  holds, so we have $\delta \in \wht{\bV}$.

  Finally, we show that $\wht{\bV} \in \itebV{\alpha}{t}{a}{f}$ holds. Recall that $\tld{\cV}$ is of type $\iteM{\alpha}{t}{a}{f}$, and put
  \[
  \cV := \iteT{\alpha}{t}{a}{f} (\tld{\cV}) = \rW_{(v : \wht{\mathcal{U}})} \iteT{\pred \: \tc{\alpha} \: u}{f' \: u}{b^{\ast}}{g^{\ast}} (v) .
  \]
  By recursion on $\cV$, we define $l : \cV \to \iteV{\alpha}{t}{a}{f}$ as
  \[
  l \: \wcons (d , q) := \wcons (\tld{\rT} \: d , \lambda y . l \: (q \: y)) ,
  \]
  and put $\bV ' := \wcons (\tld{\cV} , l) : \iteV{\alpha}{t}{a}{f}$.

  We then define $\beta ' := \iteh{\alpha}{t}{a}{f} \bV ' : \bbV$, so $\beta ' \in \itebV{\alpha}{t}{a}{f}$ holds. It suffices to show that we have $\beta ' \doteq \wht{\bV}$.

  We first show that
  \begin{align}
    \Pi_{(v : \cV)} \iteh{\alpha}{t}{a}{f} (l \: v) \doteq \iteh{\pred \: \tc{\alpha} \: u}{f' \: u}{b^{\ast}}{g^{\ast}} v \label{eq:ell}
  \end{align}
  holds by induction on $v = \wcons (c , q)$. It follows from the induction hypothesis that we have
  \[
  \Pi_{(y : \iteT{\pred \: \tc{\alpha} \: u}{f' \: u}{b^{\ast}}{g^{\ast}} c)} \iteh{\alpha}{t}{a}{f} (l \: (q \: y)) \doteq \iteh{\pred \: \tc{\alpha} \: u}{f' \: u}{b^{\ast}}{g^{\ast}} (q \: y) .
  \]
  So,
  \begin{align*}
    \iteh{\alpha}{t}{a}{f}(l \: v) &= \iteh{\alpha}{t}{a}{f} \wcons (\tld{\rT} \: c , \lambda y . l \: (q \: y)) \\
    &= \wcons (\iteTh{\alpha}{t}{a}{f} (\tld{\rT} \: c) , \lambda y . \iteh{\alpha}{t}{a}{f} (l \: (q \: y))) \\
    &\doteq \wcons (\iteTh{\alpha}{t}{a}{f} (\tld{\rT} \: c) , \lambda y . \iteh{\pred \: \tc{\alpha} \: u}{f' \: u}{b^{\ast}}{g^{\ast}} (q \: y)) \\
    &= \iteh{\pred \: \tc{\alpha} \: u}{f' \: u}{b^{\ast}}{g^{\ast}} v
  \end{align*}
  holds. The last equation here holds because
  \begin{align*}
    \iteTh{\pred \: \tc{\alpha} \: u}{f' \: u}{b^{\ast}}{g^{\ast}} (c) &= \rp_2 (\iteHh{\pred \: \tc{\alpha} \: u}{f' \: u}{b^{\ast}}{g^{\ast}}) \: c \\
    &= h^{\rV} (b^{\ast} , g^{\ast}) \: (\rj (u , c)) \text{ with } u : \ov{\tc{\alpha}} \\
    &= \rp_2 (f^{\rV} (b^{\ast} , g^{\ast})) \: \rj (u , c) \\
    &= \iteTh{\alpha}{t}{a}{f} (\res{1}{f^{\rV}} (b , g , \rj (u , c))) \\
    &= \iteTh{\alpha}{t}{a}{f} (\tld{\rT} \: c) .
  \end{align*}
  Therefore, we have \eqref{eq:ell}.

  Since $\beta ' = \wcons (\iteTh{\alpha}{t}{a}{f} (\tld{\cV}) , \lambda y . \iteh{\alpha}{t}{a}{f} (l \: y))$ holds, if we have $\eta \in \beta '$ then $\eta \doteq \iteh{\alpha}{t}{a}{f} (l \: v)$ holds for some $v : \cV$. We thus have $\eta \doteq \iteh{\alpha}{t}{a}{f} (l \: v) \doteq \iteh{\pred \: \tc{\alpha} \: u}{f' \: u}{b^{\ast}}{g^{\ast}} v \in \wht{\bV}$ by \eqref{eq:ell}.

  Conversely, if $\eta \in \wht{\bV}$ holds then $\eta \doteq \iteh{\pred \: \tc{\alpha} \: u}{f' \: u}{b^{\ast}}{g^{\ast}} v$ holds for some $v : \cV$. So, we have
  \begin{align*}
    \eta &\doteq \iteh{\pred \: \tc{\alpha} \: u}{f' \: u}{b^{\ast}}{g^{\ast}} v \\
    &\doteq \iteh{\alpha}{t}{a}{f}(l \: v) \text{ by } \eqref{eq:ell} \\
    &\in \beta ' .
  \end{align*}
  It follows that $\beta ' \doteq \wht{\bV}$.\qedhere
\end{proof}

We now prove the lemma saying that for any $\alpha : \bbV$ such that $\Acc \: \alpha$ holds with $a : \rV$ and $f : \rTV (a) \to \rV$, $\itebV{\alpha}{t}{a}{f}$ is $\alpha$-set-inaccessible.
\begin{lem}[\href{https://github.com/takahashi-yt/czf-in-mahlo/blob/67c9dbf50bee67b73c1ea009929f4838458b8edc/src/PiNumbersAxiom.agda\#L162}{\texttt{Link}}]\label{lem:lastone}
  For all $\alpha : \bbV$, $t : \Acc \: \alpha$, $a : \rV$ and $f : \rTV (a) \to \rV$, $\itebV{\alpha}{t}{a}{f}$ is $\alpha$-set-inaccessible. 
\end{lem}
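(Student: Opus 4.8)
The plan is to proceed by transfinite induction on $t : \Acc \: \alpha$ via the elimination rule $\el$ of Definition~\ref{def:acc}, taking as motive the type $C$, with free variables $\beta : \bbV$ and $x : \Acc \: \beta$, expressing ``for all $a : \rV$ and $f : \rTV (a) \to \rV$, the set $\itebV{\beta}{x}{a}{f}$ is $\beta$-set-inaccessible''. Writing $t = \prog \: f'$, the hypothesis supplied by $\el$ is then the induction hypothesis (IH) that for every $u : \ov{\tc{\alpha}}$ and all $a' : \rV$, $f'' : \rTV (a') \to \rV$, the set $\itebV{\pred \: \tc{\alpha} \: u}{f' \: u}{a'}{f''}$ is $(\pred \: \tc{\alpha} \: u)$-set-inaccessible. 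The quantification over $a'$ and $f''$ in the motive is exactly what will let me instantiate the IH at the family produced by the Main Lemma.

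By Lemma~\ref{lem:lemsone}.(5), $\itebV{\alpha}{\prog \: f'}{a}{f}$ is set-inaccessible, which discharges the first clause of Definition~\ref{def:ainacc}. It therefore remains to exhibit the witnessing family, and for this I would invoke (the interpretation of) Lemma~\ref{lem:setinacc}: since the axioms of $\CZF$ are interpretable, so is every $\CZF$-theorem, whence the interpretation of that lemma is available, reducing the goal to showing that for every $c \in \alpha$ the class of $c$-set-inaccessibles is unbounded in $\itebV{\alpha}{\prog \: f'}{a}{f}$. Now $c \in \alpha$ gives $c \in \TC{\alpha}$, and since $\ov{\tc{\wcons (a , f)}}$ is $\rTV (a) + \Sigma_{(z : \rTV (a))} \ov{\tc{(f \: z)}}$, we obtain a concrete index $u : \ov{\tc{\alpha}}$ with $c \doteq \pred \: \tc{\alpha} \: u$.

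For the unboundedness, fix $\xi \in \itebV{\alpha}{\prog \: f'}{a}{f}$. The Main Lemma (Lemma~\ref{lem:main}) yields $b$ and $g$, and hence $b^{\ast}$ and $g^{\ast}$, with
\[
\xi \in \itebV{\pred \: \tc{\alpha} \: u}{f' \: u}{b^{\ast}}{g^{\ast}} \in \itebV{\alpha}{\prog \: f'}{a}{f} .
\]
Applying the IH with $a' := b^{\ast}$ and $f'' := g^{\ast}$ shows that $\itebV{\pred \: \tc{\alpha} \: u}{f' \: u}{b^{\ast}}{g^{\ast}}$ is $(\pred \: \tc{\alpha} \: u)$-set-inaccessible; transporting this along $c \doteq \pred \: \tc{\alpha} \: u$ by Lemma~\ref{lem:czfext} makes it $c$-set-inaccessible. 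Thus $\itebV{\pred \: \tc{\alpha} \: u}{f' \: u}{b^{\ast}}{g^{\ast}}$ is a $c$-set-inaccessible element of $\itebV{\alpha}{\prog \: f'}{a}{f}$ containing $\xi$, which is precisely the required unboundedness. Lemma~\ref{lem:setinacc} then delivers the family witnessing that $\itebV{\alpha}{\prog \: f'}{a}{f}$ is $\alpha$-set-inaccessible, closing the induction.

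The Main Lemma does essentially all of the real work, so the principal obstacle here is bookkeeping: one must verify that the motive $C$ is a genuine family over $\beta : \bbV$ and $x : \Acc \: \beta$ so that $\el$ applies, and that the set-theoretic predicate ``$c$-set-inaccessible'' is stable under the equivalence $c \doteq \pred \: \tc{\alpha} \: u$. The latter is exactly where Lemma~\ref{lem:czfext} is indispensable, since the Main Lemma and the IH produce inaccessibility indexed literally by $\pred \: \tc{\alpha} \: u$, whereas Lemma~\ref{lem:setinacc} feeds us a merely $\doteq$-equivalent $c$. I would only need the ``if'' half of Lemma~\ref{lem:setinacc}, whose $\CZF$-proof rests on \textbf{Strong Collection} and \textbf{Restricted Separation}; these are interpretable in $\MLTT$, so the half I use is genuinely available in the interpretation.
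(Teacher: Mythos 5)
Your proof is correct and in substance coincides with the paper's: both discharge plain set-inaccessibility by Lemma~\ref{lem:lemsone}.(5), reduce $\alpha$-set-inaccessibility to unboundedness of $c$-set-inaccessibles via the ``if'' direction of Lemma~\ref{lem:setinacc}, obtain the unboundedness witnesses from the Main Lemma (Lemma~\ref{lem:main}), and transport $(\pred \: \tc{\alpha} \: u)$-set-inaccessibility along $c \doteq \pred \: \tc{\alpha} \: u$ using Lemma~\ref{lem:czfext}. The one point of divergence is the induction scaffolding: the paper runs an induction on $\alpha$ followed by a subinduction on $t$, and cites the IH of the main induction (for that IH to reach $\pred \: \tc{\alpha} \: u$ rather than only $\pred \: \alpha \: x$, the induction on $\alpha$ must be taken in the transitive-closure form of Lemma~\ref{lem:tctrans}); you instead run a single induction on $t : \Acc \: \alpha$ via the eliminator of Definition~\ref{def:acc}, with the motive generalized over $a$ and $f$. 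Both succeed for the same reason: Lemma~\ref{lem:main} hands back the hierarchy equipped with exactly the accessibility proof $f' \: u$, which is precisely what the $\Acc$-eliminator's IH covers, whereas the paper's IH is nominally stronger (uniform in the accessibility proof) but that extra generality is never used. Your variant is marginally leaner---one induction instead of two, using only the recursion principle that $\Acc$ was introduced to supply---and the generalization of the motive over $(a , f)$, which you correctly identify as the key bookkeeping point, is exactly what makes the IH applicable to the pair $(b^{\ast} , g^{\ast})$ produced by the Main Lemma.
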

\begin{proof}
  Induction on $\alpha$ followed by subinduction on $t$. Let $\beta \in \alpha$ be the case. We then have $\beta \in \tc{\alpha}$, so $\beta \doteq \pred \: \tc{\alpha} \: u$ for some $u : \ov{\tc{\alpha}}$.

  If $\xi \in \itebV{\alpha}{\prog \: f'}{a}{f}$ holds then Lemma \ref{lem:main} implies that there are terms $b : \iteM{\alpha}{\prog \: f'}{a}{f}$ and $g : \iteT{\alpha}{\prog \: f'}{a}{f} (b) \to \iteM{\alpha}{\prog \: f'}{a}{f}$ with
  \[
  \xi \in \itebV{\pred \: \tc{\alpha} \: u}{f' \: u}{b^{\ast}}{g^{\ast}} \in \itebV{\alpha}{\prog \: f'}{a}{f} ,
  \]
  where $b^{\ast} := \iteTh{\alpha}{\prog \: f'}{a}{f} (b)$ and $g^{\ast} := \lambda x . \iteTh{\alpha}{\prog \: f'}{a}{f}(g \: x)$.

  By IH of main induction, $\itebV{\pred \: \tc{\alpha} \: u}{f' \: u}{b^{\ast}}{g^{\ast}}$ is $(\pred \: \tc{\alpha} \: u)$-set-inaccessible. Since $\alpha$-set-inaccessibility is the translation of a corresponding set-theoretic formula for any $\alpha : \bbV$, Lemma \ref{lem:czfext} implies that if $\delta \doteq \eta$ holds and $\gamma$ is $\delta$-set-inaccessible then $\gamma$ is also $\eta$-set-inaccessible. Therefore, $\itebV{\pred \: \tc{\alpha} \: u}{f' \: u}{b^{\ast}}{g^{\ast}}$ is also $\beta$-set-inaccessible, and it follows that $\beta$-set-inaccessibles are unbounded in $\itebV{\alpha}{\prog \: f'}{a}{f}$. By Lemmas \ref{lem:setinacc} and \ref{lem:lemsone}.(5), $\itebV{\alpha}{\prog \: f'}{a}{f}$ is $\alpha$-set-inaccessible.
\end{proof}

Below we use the following notations:
  \begin{align*}
    \wht{\mathcal{U}}^{\alpha , t}_{\beta} &:= \rp_1 (\iteHh{\alpha}{t}{\idx \: \beta}{\lambda x . \idx \: (\pred \: \beta \: x)}) & \mathcal{U}^{\alpha , t}_{\beta} &:= \rTV \wht{\mathcal{U}}^{\alpha , t}_{\beta} \\
    \wht{\rT}^{\alpha , t}_{\beta} &:= \rp_2 (\iteHh{\alpha}{t}{\idx \: \beta}{\lambda x . \idx \: (\pred \: \beta \: x)}) & \rT^{\alpha , t}_{\beta} &:= \lambda x . \rTV (\wht{\rT}^{\alpha , t}_{\beta} x) \\
    \wht{\mathcal{V}}^{\alpha , t}_{\beta} &:= \wht{\rW}_{\rV} ( \wht{\mathcal{U}}^{\alpha , t}_{\beta} , \wht{\rT}^{\alpha , t}_{\beta} ) & \mathcal{V}^{\alpha , t}_{\beta} &:= \rTV \wht{\mathcal{V}}^{\alpha , t}_{\beta}
  \end{align*}
We also define $\mathbf{h}^{\alpha , t}_{\beta}$ and $\bV^{\alpha , t}_{\beta}$ as
  \[
  \mathbf{h}^{\alpha , t}_{\beta} \wcons (b , g) := \wcons (\wht{\rT}^{\alpha , t}_{\beta} (b) , \lambda x . \mathbf{h}^{\alpha , t}_{\beta} (g \: x) ), \qquad \bV^{\alpha , t}_{\beta} := \wcons (\wht{\mathcal{V}}^{\alpha , t}_{\beta} , \mathbf{h}^{\alpha , t}_{\beta} ),
  \]
  respectively.
\begin{lem}[\href{https://github.com/takahashi-yt/czf-in-mahlo/blob/67c9dbf50bee67b73c1ea009929f4838458b8edc/src/PiNumbersAxiom.agda\#L218}{\texttt{Link}}]\label{lem:lasttwo}
  For any $\beta : \bbV$, if $\beta$ is transitive, then for any $\alpha : \bbV$ and any $t : \Acc \: \alpha$, we have $\Pi_{(\gamma : \bbV)} (\gamma \in \beta \to \gamma \in \bV^{\alpha , t}_{\beta})$.
\end{lem}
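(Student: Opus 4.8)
The plan is to fix a transitive $\beta : \bbV$ together with $\alpha : \bbV$ and $t : \Acc\:\alpha$, and to prove $\Pi_{(\gamma : \bbV)}(\gamma \in \beta \to \gamma \in \bV^{\alpha , t}_{\beta})$ by transfinite induction on $\gamma$ via $\TI$. First I would unfold the notations and observe that $\bV^{\alpha , t}_{\beta}$ is exactly the set $\itebV{\alpha}{t}{\idx\:\beta}{\lambda x . \idx (\pred\:\beta\:x)}$; in particular $\mathcal{U}^{\alpha , t}_{\beta}$ is $\iteM{\alpha}{t}{\idx\:\beta}{\lambda x . \idx (\pred\:\beta\:x)}$ with decoding function $\wht{\rT}^{\alpha , t}_{\beta}$, and $\gamma \in \bV^{\alpha , t}_{\beta}$ amounts to producing a term $e : \mathcal{V}^{\alpha , t}_{\beta}$ with $\mathbf{h}^{\alpha , t}_{\beta}(e) \doteq \gamma$. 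Assuming $\gamma \in \beta$, transitivity of $\beta$ shows that every member of $\gamma$ is a member of $\beta$, so the induction hypothesis (applied to members of $\gamma$, using that membership in $\beta$ and in $\bV^{\alpha , t}_{\beta}$ is invariant under $\doteq$ by Lemmas~\ref{lem:eta}.(2) and~\ref{lem:czfext}, exactly as in the proof of Lemma~\ref{lem:main}) yields that every member of $\gamma$ lies in $\bV^{\alpha , t}_{\beta}$.

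Next I would locate a code for the index of $\gamma$ inside $\mathcal{U}^{\alpha , t}_{\beta}$. From $\gamma \in \beta$ we get $\gamma \doteq \pred\:\beta\:x_0$ for some $x_0 : \ov{\beta}$. Applying Lemma~\ref{lem:code} with $a := \idx\:\beta$, $g := \lambda x . \idx(\pred\:\beta\:x)$ and index $x_0$ produces a code $c_0 : \mathcal{U}^{\alpha , t}_{\beta}$ together with a proof $p : \wht{\rT}^{\alpha , t}_{\beta}(c_0) =_{\rV} \idx(\pred\:\beta\:x_0)$. Transporting along $p$, each $y : \rT^{\alpha , t}_{\beta}(c_0)$ gives $\mathsf{transport}^{\rTV(z)}\:p\:y : \ov{\pred\:\beta\:x_0}$, and the member $\pred(\pred\:\beta\:x_0)(\mathsf{transport}^{\rTV(z)}\:p\:y)$ of $\pred\:\beta\:x_0 \doteq \gamma$ lies in $\gamma$, hence in $\beta$, hence in $\bV^{\alpha , t}_{\beta}$ by the induction hypothesis. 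The type-theoretic axiom of choice then packages these witnesses into a single function $e : \rT^{\alpha , t}_{\beta}(c_0) \to \mathcal{V}^{\alpha , t}_{\beta}$ with $\mathbf{h}^{\alpha , t}_{\beta}(e\:y) \doteq \pred(\pred\:\beta\:x_0)(\mathsf{transport}^{\rTV(z)}\:p\:y)$ for all $y$, and I set $e_\gamma := \wcons(c_0 , e) : \mathcal{V}^{\alpha , t}_{\beta}$, which typechecks because $\mathcal{V}^{\alpha , t}_{\beta} = \rW_{(x : \mathcal{U}^{\alpha , t}_{\beta})}\rT^{\alpha , t}_{\beta}(x)$.

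Finally I would verify $\mathbf{h}^{\alpha , t}_{\beta}(e_\gamma) \doteq \gamma$ by the following chain. By definition $\mathbf{h}^{\alpha , t}_{\beta}(\wcons(c_0 , e)) = \wcons(\wht{\rT}^{\alpha , t}_{\beta}(c_0) , \lambda y . \mathbf{h}^{\alpha , t}_{\beta}(e\:y))$; by Lemma~\ref{lem:eta}.(3) together with the choice equation this is $\doteq \wcons(\wht{\rT}^{\alpha , t}_{\beta}(c_0) , \lambda y . \pred(\pred\:\beta\:x_0)(\mathsf{transport}^{\rTV(z)}\:p\:y))$; by Lemma~\ref{lem:transp} applied to $p$ the latter equals, in $=_{\bbV}$ and hence in $\doteq$, the set $\wcons(\idx(\pred\:\beta\:x_0) , \pred(\pred\:\beta\:x_0))$, which by Lemma~\ref{lem:eta}.(1) is $\doteq \pred\:\beta\:x_0 \doteq \gamma$. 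This exhibits the required witness and closes the induction. I expect the main obstacle to be precisely the transport bookkeeping: the index type $\rT^{\alpha , t}_{\beta}(c_0)$ is only propositionally, not definitionally, equal to $\ov{\pred\:\beta\:x_0}$, so the witnesses handed back by the induction hypothesis live over the wrong index type and must be realigned by $\mathsf{transport}^{\rTV(z)}\:p$ before the axiom of choice and Lemma~\ref{lem:transp} can be applied — the same delicate manoeuvre already needed in the proof of Lemma~\ref{lem:main}.
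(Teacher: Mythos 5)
Your proof is correct and follows essentially the same route as the paper's: transfinite induction on $\gamma$, Lemma~\ref{lem:code} to obtain the code $c_0$ with the propositional equality $p$, the type-theoretic axiom of choice to collect the witnesses given by the induction hypothesis, and Lemmas~\ref{lem:transp} and~\ref{lem:eta} to close the $\doteq$-chain. The only differences are inessential reorderings (you apply choice after precomposing with $\mathsf{transport}$ and use Lemma~\ref{lem:eta}.(3) before Lemma~\ref{lem:transp}, whereas the paper does the reverse), and your remark about realigning index types along $p$ is exactly the transport bookkeeping the paper performs.
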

\begin{proof}
  We prove $\Pi_{(\gamma : \bbV)} (\gamma \in \beta \to \gamma \in \bV^{\alpha , t}_{\beta})$ by induction on $\gamma$. Suppose that $\gamma \in \beta$ holds. Then, $\gamma \doteq \pred \: \beta \: x$ holds for some $x : \ov{\beta}$. Put $\xi := \pred \: \beta \: x$. If $u : \ov{\xi}$ holds, then we have $\pred \: \xi \: u \in \gamma \in \beta$. Since $\beta$ is transitive, we also have $\pred \: \xi \: u \in \beta$ and so $\pred \: \xi \: u \in \bV^{\alpha , t}_{\beta}$ holds by IH. We thus have
  \[
  \Pi_{(u : \ov{\xi})} \Sigma_{(z : \mathcal{V}^{\alpha , t}_{\beta})} \pred \: \xi \: u \doteq \mathbf{h}^{\alpha , t}_{\beta}z .
  \]
  By the type-theoretic axiom of choice in $\MLTT$,
  \[
  \Sigma_{(f : \ov{\xi} \to \mathcal{V}^{\alpha , t}_{\beta})} \Pi_{(u : \ov{\xi})}  \pred \: \xi \: u \doteq \mathbf{h}^{\alpha , t}_{\beta} (f \: u)
  \]
  holds.

  On the other hand, by Lemma \ref{lem:code}, there is a term $c : \mathcal{U}^{\alpha , t}_{\beta}$ such that
  \[
  p : \wht{\rT}^{\alpha , t}_{\beta} (c) =_{\rV} (\lambda z . \idx \: (\pred \: \beta \: z)) x
  \]
  holds. Then, it follows from Lemma \ref{lem:transp} that
  \[
  \wcons (\wht{\rT}^{\alpha , t}_{\beta} (c) , \lambda x . \mathbf{h}^{\alpha , t}_{\beta} (f \: (\mathsf{transport}^{\rTV (v)} \: p \: x))) \doteq \wcons (\idx \: \xi , \lambda u . \mathbf{h}^{\alpha , t}_{\beta} (f \: u))
  \]
  holds. Note that we have
  \[
  \wcons (c , \lambda x . f \: (\mathsf{transport}^{\rTV (v)} \: p \: x)) : \mathcal{V}^{\alpha , t}_{\beta} , \quad \mathbf{h}^{\alpha , t}_{\beta} \wcons (c , \lambda x . f \: (\mathsf{transport}^{\rTV (v)} \: p \: x)) \in \bV^{\alpha , t}_{\beta} .
  \]
  Moreover,
  \begin{align*}
    \mathbf{h}^{\alpha , t}_{\beta} \wcons (c , \lambda x . f \: (\mathsf{transport}^{\rTV (v)} \: p \: x)) &= \wcons (\wht{\rT}^{\alpha , t}_{\beta}c , \lambda x . \mathbf{h}^{\alpha , t}_{\beta} (f \: (\mathsf{transport}^{\rTV (v)} \: p \: x))) \\
    &\doteq \wcons (\idx \: \xi , \lambda u . \mathbf{h}^{\alpha , t}_{\beta} (f \: u)) \\
    &\doteq \wcons (\idx \: \xi , \pred \: \xi) \\
    &\doteq \xi 
  \end{align*}
  holds, hence we have $\gamma \doteq \xi \in \bV^{\alpha , t}_{\beta}$.
\end{proof}

\begin{lem}[\href{https://github.com/takahashi-yt/czf-in-mahlo/blob/67c9dbf50bee67b73c1ea009929f4838458b8edc/src/TypeHierarchy.agda}{\texttt{Link}}]\label{lem:lastthree}
  The following holds:
  \begin{enumerate}
  \item We have a term $\mathrm{inv} : \Pi_{(\alpha : \bbV)} (\Acc \: \alpha \to \Pi_{(x : \ov{\tc{\alpha}})} \Acc \: (\pred \: \tc{\alpha} \: x))$.

  \item We have a function $\mathrm{acc} :  \Pi_{(\alpha : \bbV)} \Acc \: \alpha$.
    
  \end{enumerate}
\end{lem}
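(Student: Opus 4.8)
The plan is to prove the two parts in order, using part~(1) as the key lemma for part~(2).

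For part~(1), the term $\mathrm{inv}$ is meant to invert the constructor $\prog$, and this can be read off the eliminator $\rE_{\Acc}$ as soon as the motive is chosen to be independent of the accessibility proof. Concretely, I would take the motive $C$ (over $\beta : \bbV$ and $x : \Acc\:\beta$) to be the type $\Pi_{(y : \ov{\tc{\beta}})} \Acc\:(\pred\:\tc{\beta}\:y)$, which does not mention $x$. With this choice, the conclusion $C[\gamma/\beta, \prog\:f/x]$ that the step term must produce is literally $\Pi_{(y : \ov{\tc{\gamma}})}\Acc\:(\pred\:\tc{\gamma}\:y)$, i.e.\ exactly the type of the step's own argument $f$. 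Hence I would set $g := \lambda \gamma.\,\lambda f.\,\lambda h.\, f$, discarding the induction hypothesis $h$, and define $\mathrm{inv}\:\alpha\:t := \rE_{\Acc}(\alpha, t, g)$. The computation rule $\eq$ then yields $\mathrm{inv}\:\alpha\:(\prog\:f) = f$, so $\mathrm{inv}$ is indeed a left inverse of $\prog$.

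For part~(2), I would prove $\Pi_{(\alpha : \bbV)}\Acc\:\alpha$ by the elimination rule for the $\rW$-type $\bbV$. In the step one is given $a : \rV$ and $f : \rTV(a) \to \bbV$ together with the induction hypothesis $\mathrm{IH} : \Pi_{(x : \rTV(a))}\Acc\:(f\:x)$, asserting accessibility of all direct predecessors, and must construct $\Acc\:\wcons(a,f)$. Because $\wcons(a,f)$ is canonical, $\tc{\wcons(a,f)}$ computes and, as recorded after the definition of $\tc{}$, one has $\ov{\tc{\wcons(a,f)}} = \rTV(a) + \Sigma_{(z : \rTV(a))} \ov{\tc{(f\:z)}}$. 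I would return $\prog\:f'$, where $f' : \Pi_{(y : \ov{\tc{\wcons(a,f)}})}\Acc\:(\pred\:\tc{\wcons(a,f)}\:y)$ is defined by $+$-elimination along this decomposition. On the left summand $\ri\:x$ one has $\pred\:\tc{\wcons(a,f)}\:(\ri\:x) = f\:x$, so the value is $\mathrm{IH}\:x$. On the right summand $\rj\:(z,w)$ one has $\pred\:\tc{\wcons(a,f)}\:(\rj\:(z,w)) = \pred\:\tc{(f\:z)}\:w$, so I would feed $\mathrm{IH}\:z : \Acc\:(f\:z)$ into part~(1), obtaining $\mathrm{inv}\:(f\:z)\:(\mathrm{IH}\:z) : \Pi_{(w' : \ov{\tc{(f\:z)}})}\Acc\:(\pred\:\tc{(f\:z)}\:w')$, and apply it to $w$.

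The conceptual point is that the introduction rule of $\Acc$ requires accessibility of every element of the transitive closure $\tc{\alpha}$, whereas the membership induction hypothesis supplies only accessibility of the direct members; part~(1) bridges exactly this gap, propagating accessibility of a set $f\:z$ to accessibility of all elements of $\tc{(f\:z)}$. The main obstacle I anticipate is therefore not conceptual but the bookkeeping in part~(2): one must check that $\pred\:\tc{\wcons(a,f)}$ restricts to $f\:x$ on the left summand and to $\pred\:\tc{(f\:z)}\:w$ on the right, which unfolds the concrete definitions of the binary union $\cup$ and the union $\bigcup$ entering $\tc{}$. Given that the index decomposition of $\ov{\tc{\wcons(a,f)}}$ already holds definitionally, I expect these predecessor identities to hold definitionally as well, so that the $+$-elimination typechecks directly; should any of them hold only up to $=_{\bbV}$, a routine $\mathsf{transport}$ along the corresponding equality closes the gap.
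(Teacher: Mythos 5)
Your proposal is correct and matches the paper's proof essentially step for step: part~(1) is the paper's ``induction on $t : \Acc\:\alpha$'' (your constant-motive use of $\rE_{\Acc}$ is just that induction spelled out), and part~(2) is exactly the paper's construction, with $\rW$-induction on $\alpha$, the definitional decomposition $\ov{\tc{(\wcons(a,f))}} = \rTV(a) + \Sigma_{(z : \rTV(a))}\ov{\tc{(f\:z)}}$, the induction hypothesis on the left summand, and $\mathrm{inv}$ applied to the induction hypothesis on the right summand. Your final hedge about possibly needing $\mathsf{transport}$ is unnecessary: the paper uses the predecessor identities $\pred\:\tc{(\wcons(a,f))}\:(\ri\:z) = f\:z$ and $\pred\:\tc{(\wcons(a,f))}\:(\rj\:w) = \pred\:\tc{(f\:(\rp_1\:w))}\:(\rp_2\:w)$ as definitional equalities, so the construction typechecks directly.
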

\begin{proof}
  (1) By induction on $t : \Acc \: \alpha$. Let $t$ be $\prog \: f$, then we have $f \: x : \Acc \: (\pred \: \tc{\alpha} \: x)$.

  (2) By induction on $\alpha : \bbV$. Let $\alpha$ be $\wcons (a , f)$, and recall that $\ov{\tc{(\wcons (a , f))}}$ is definitionally equal to $\rTV (a) + \Sigma_{(z : \rTV (a))} \ov{\tc{(f \; z)}}$. Define $g : \Pi_{(x : \ov{\tc{(\wcons (a , f))}})} \Acc \: (\pred \: \tc{(\wcons (a , f))} \: x)$ by
  \begin{align*}
    g \: (\ri \: z) &= \mathrm{acc} \: (f \: z) , \\
    g \: (\rj \: w) &= \mathrm{inv} \: (f \: (\rp_1 \: w)) \: (\mathrm{acc} \: (f \: (\rp_1 \: w))) \: (\rp_2 \: w) .
  \end{align*}
  The function $g$ is well-defined because we have
  \[
  \pred \: \tc{(\wcons (a , f))} \: (\ri \: z) = f \: z , \qquad \pred \: \tc{(\wcons (a , f))} \: (\rj \: w) = \pred \: \tc{(f \: (\rp_1 \: w))} \: (\rp_2 \: w) .
  \]
  We then put $\mathrm{acc} \: (\wcons (a , f)) := \prog \: g$.
\end{proof}

We have an interpretation of \textbf{Pi-Numbers} in $\MLMacc$ by the theorem below:
\begin{thm}[\href{https://github.com/takahashi-yt/czf-in-mahlo/blob/67c9dbf50bee67b73c1ea009929f4838458b8edc/src/PiNumbersAxiom.agda\#L267}{\texttt{Link}}]
  For any $\gamma , \alpha : \bbV$, there is a term $\beta : \bbV$ such that $\gamma \in \beta$ holds and $\beta$ is $\alpha$-set-inaccessible.
\end{thm}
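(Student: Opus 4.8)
The plan is to exhibit $\beta$ as one of the type-theoretic inaccessible sets $\bV^{\alpha , t}_{\delta}$ built from a suitable \emph{transitive} auxiliary set $\delta$. Recall from the notations introduced just above that $\bV^{\alpha , t}_{\delta} = \itebV{\alpha}{t}{\idx \: \delta}{\lambda x . \idx \: (\pred \: \delta \: x)}$, so by Lemma~\ref{lem:lastone} (applied with $a := \idx \: \delta$ and $f := \lambda x . \idx \: (\pred \: \delta \: x)$) the set $\bV^{\alpha , t}_{\delta}$ is $\alpha$-set-inaccessible for \emph{any} choice of $\delta : \bbV$ and any $t : \Acc \: \alpha$. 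Meanwhile, Lemma~\ref{lem:lasttwo} tells us that whenever $\delta$ is transitive we have $\Pi_{(\gamma : \bbV)}(\gamma \in \delta \to \gamma \in \bV^{\alpha , t}_{\delta})$. Thus it suffices to produce a transitive set $\delta$ with $\gamma \in \delta$, then set $t := \mathrm{acc} \: \alpha$ (using Lemma~\ref{lem:lastthree}.(2), which supplies a proof of $\Acc \: \alpha$) and $\beta := \bV^{\alpha , t}_{\delta}$.

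The remaining task is the construction of $\delta$. Note that $\tc{\gamma}$ itself does not contain $\gamma$ as an element in general, so I would instead form the singleton $\{ \gamma \}$, definable as $\wcons (\wht{\rN_1}_{\rV} , \lambda x . \gamma )$ (or via the pairing operator, which is interpretable in $\MLTT$), and then take $\delta := \tc{\{ \gamma \}}$. Using the defining property $\tc{a} = a \cup \bigcup \{ \tc{x} \mid x \in a \}$ of the transitive closure together with the binary-union characterisation stated earlier, one gets $\{ \gamma \} \subseteq \tc{\{ \gamma \}}$ and hence $\gamma \in \delta$; and $\tc{\{ \gamma \}}$ is transitive by the standard fact that transitive closures are transitive. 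Both verifications are routine inductions over $\bbV$ of exactly the kind already used to establish the properties of the transitive-closure function.

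With $\delta$ in hand, the theorem follows immediately: Lemma~\ref{lem:lasttwo} gives $\gamma \in \bV^{\alpha , t}_{\delta} = \beta$, while Lemma~\ref{lem:lastone}, read through the identification $\bV^{\alpha , t}_{\delta} = \itebV{\alpha}{t}{\idx \: \delta}{\lambda x . \idx \: (\pred \: \delta \: x)}$, gives that $\beta$ is $\alpha$-set-inaccessible.

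The genuinely substantial work has already been discharged by Lemmas~\ref{lem:lastone} and~\ref{lem:lasttwo}, so I do not expect a serious obstacle in this final step; the only points demanding care are the bookkeeping ones, namely recognising $\bV^{\alpha , t}_{\delta}$ as the instance $\itebV{\alpha}{t}{\idx \: \delta}{\lambda x . \idx \: (\pred \: \delta \: x)}$ of the transfinite hierarchy so that Lemma~\ref{lem:lastone} applies, and confirming that $\tc{\{ \gamma \}}$ is transitive with $\gamma \in \tc{\{ \gamma \}}$.
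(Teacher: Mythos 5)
Your proposal is correct and follows essentially the same route as the paper: the paper likewise picks a set $\gamma'$ with $\gamma \in \gamma'$ (left as a routine $\CZF$-interpretation step rather than spelled out as a singleton), forms the transitive closure $\tc{\gamma'}$, and applies Lemma~\ref{lem:lasttwo} together with Lemma~\ref{lem:lastthree}.(2) to get $\gamma \in \bV^{\alpha , \mathrm{acc} \: \alpha}_{\tc{\gamma '}}$, and Lemma~\ref{lem:lastone} for $\alpha$-set-inaccessibility. Your extra care in identifying $\bV^{\alpha , t}_{\delta}$ with $\itebV{\alpha}{t}{\idx \: \delta}{\lambda x . \idx \: (\pred \: \delta \: x)}$ and in checking $\gamma \in \tc{\{\gamma\}}$ is exactly the bookkeeping the paper leaves implicit.
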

\begin{proof}
  Let $\gamma , \alpha : \bbV$ be given. It is straightforward to define a term $\gamma '$ of $\bbV$ with $\gamma \in \gamma '$, since all axioms of $\CZF$ are interpretable in $\MLTT$. By Lemmas \ref{lem:lasttwo} and \ref{lem:lastthree}.(2), we have
  \[
  \Pi_{(\delta : \bbV)} (\delta \in \tc{\gamma '} \to \delta \in \bV^{\alpha , \mathrm{acc} \: \alpha}_{\tc{\gamma '}}) .
  \]
  Therefore, $\gamma \in \bV^{\alpha , \mathrm{acc} \: \alpha}_{\tc{\gamma '}}$ holds, and $\bV^{\alpha , \mathrm{acc} \: \alpha}_{\tc{\gamma '}}$ is $\alpha$-set-inaccessible by Lemma \ref{lem:lastone}.
\end{proof}

The statements whose proofs are not formalised yet in \cite{takahashi2025} are as follows:
\begin{enumerate}
\item $\itebV{\alpha}{t}{a}{f}$ satisfies \textbf{Regular Extension Axiom} (\cf~\cite{RathjenGrifforPalmgren1998}). \label{todo:one}

\item For any $\alpha , \beta : \bbV$, if $\alpha \doteq \beta$ holds and $\alpha$ is set-inaccessible then $\beta$ is set-inaccessible. \label{todo:two}

\item For any $\alpha , \beta , \gamma : \bbV$, if $\alpha \doteq \beta$ holds and $\gamma$ is $\alpha$-set-inaccessible then $\gamma$ is $\beta$-set-inaccessible. \label{todo:three}
  
\end{enumerate}
The statement~\ref{todo:one} is postulated in \href{https://github.com/takahashi-yt/czf-in-mahlo/blob/67c9dbf50bee67b73c1ea009929f4838458b8edc/src/RegularExtensionAxiom.agda}{\texttt{RegularExtensionAxiom.agda}} of \cite{takahashi2025}, while the other statements are postulated in \href{https://github.com/takahashi-yt/czf-in-mahlo/blob/67c9dbf50bee67b73c1ea009929f4838458b8edc/src/Inaccessibles/Basics.agda}{\texttt{Inaccessibles/Basics.agda}}. The statement~\ref{todo:one} is used to prove Lemma \ref{lem:lemsone}.(5), and we referred to \cite[Corollary 4.8]{RathjenGrifforPalmgren1998} for its proof. Postulating the statements~\ref{todo:two} and \ref{todo:three} is harmless, since we have Lemma \ref{lem:czfext} for the $\MLTT$-translation of the set-theoretic terms ``set-inaccessible'' and ``$\alpha$-set-inaccessible''.


\section{Concluding Remark}
In this paper, we showed that the system $\CZF_{\pi}$ of Aczel's constructive set theory with inaccessible sets is interpretable in a variant of $\MLTT$ equipped with a Mahlo universe and accessibility predicate $\Acc$. Our type-theoretic counterparts of inaccessible sets were constructed by the transfinite iteration of a higher-order universe operator whose definition used the reflection property of the Mahlo universe. The accessibility predicate $\Acc$ provided a recursion principle to iterate this operator. Moreover, we formalised the main part of this interpretation in Agda. For this purpose, the support of indexed induction-recursion in Agda is crucial: the Mahlo universe was simulated by the external Mahlo universe defined with induction-recursion, and the predicate $\Acc$ was defined by indexed inductive definition.

A future research direction is to investigate the extent to which the transfinite hierarchy of inaccessible sets can be extended in $\MLTT$. In set-theoretic terms, the diagonalisation of this hierarchy gives the class $\{ a \mid \text{$a$ is $a$-set-inaccessible}\}$ (\cf~\cite{RichterAczel1974}). Members of this class can be considered as ``fixed points'' of an increasing sequence from inaccessible sets. It would be interesting to examine whether $\MLM$ can construct an element of $\bbV$ corresponding to a member of this class and, if possible, whether such elements of $\bbV$ are unbounded in $\bbV$.

Another direction is to revisit our construction of inaccessible sets in $\MLM$ from the viewpoint of homotopy type theory (HoTT). In \cite{hottbook2013}, Aczel's interpretation of $\CZF$ in $\MLTT$ was refined: the type of Aczel's iterative sets was defined as a higher inductive type $\bbV_h$. In addition to its ordinary introduction rule (\textit{i.e.}, its point constructor rule), which is the same as that of $\bbV$, the higher inductive type $\bbV_h$ also has path constructor rules. Although it is unknown whether the axioms \textbf{Strong Collection} and \textbf{Subset Collection} of $\CZF$ hold in this refined interpretation, several interesting results on this interpretation have already been obtained (see, \eg, \cite{JKFX2023}). Later, a general framework for interpreting constructive set theory into HoTT was proposed in \cite{gallozzi2021}. Importantly, all axioms of $\CZF$ hold in some interpretations of this framework; therefore, it would provide a HoTT framework to revisit our interpretation of $\CZF_{\pi}$ in $\MLMacc$.

\section*{Acknowledgement}
  \noindent The author is grateful to the anonymous reviewers for their comments and suggestions, which improved the contents and presentation of the present paper. The work in this paper is supported by JSPS KAKENHI Grant Number JP21K12822.



\bibliographystyle{alphaurl}
\bibliography{mahlo_czf_final}

\appendix

\section{Inference Rules of $\MLM$}\label{sec:mlmrules}
In this appendix, we define the inference rules of $\MLM$ in detail to make the present paper self-contained as far as possible. We provided the Agda code corresponding to the rules for the Mahlo universe $\rV$ in \href{https://github.com/takahashi-yt/czf-in-mahlo/blob/67c9dbf50bee67b73c1ea009929f4838458b8edc/src/ExternalMahlo.agda}{\texttt{ExternalMahlo.agda}} of \cite{takahashi2025}.
\begin{defi}[Inference Rules and Judgements in $\MLM$]
  Judgements in $\MLM$ are those which are derivable by the following inference rules of $\MLM$:
  \begin{description}
  \item[Rules for Contexts] let $x$ be fresh for $\Gamma$, and $y$ be fresh for $\Delta$.
    \[
    \infer{\vdash \emptyset \; \ctxt}{}
    \qquad
    \infer{\vdash \Gamma , x : A \; \ctxt}{\Gamma \vdash A \; \type}
    \]

  \item[Rule for Assumptions]
    \[
    \infer{ \Gamma , x : A , \Delta \vdash x : A }{ \vdash \Gamma , x : A , \Delta \; \ctxt }
    \]

  \item[Rules for Substitutions]
    \[
    \infer{\Gamma , \Delta [a / x] \vdash \theta [a / x]}{
      \Gamma \vdash a : A
      &
      \Gamma , x : A , \Delta \vdash \theta
    }
    \qquad
    \infer{\Gamma , \Delta [a / x] \vdash B [a / x] = B[b / x]}{
      \Gamma \vdash a = b : A
      &
      \Gamma , x : A , \Delta \vdash B \; \type
    }
    \]
    \[
    \infer{\Gamma , \Delta [a / x] \vdash c [a / x] = c [b / x] : B [a / x]}{
      \Gamma \vdash a = b : A
      &
      \Gamma , x : A , \Delta \vdash c : B
    }
    \]

    Hereafter, we list the premises of an inference rule in the vertical way.

  \item[Rules for Definitional Equality]
    \[
    \infer{ \Gamma \vdash a = a : A }{ \Gamma \vdash a : A }
    \qquad
    \infer{ \Gamma \vdash b = a : A }{ \Gamma \vdash a = b : A }
    \qquad
    \infer{ \Gamma \vdash a = c : A }{
      \deduce{\Gamma \vdash b = c : A}{\Gamma \vdash a = b : A}
    }
    \]
    \[
    \infer{ \Gamma \vdash A = A}{ \Gamma \vdash A \; \type }
    \qquad
    \infer{ \Gamma \vdash B = A}{ \Gamma \vdash A = B}
    \qquad
    \infer{ \Gamma \vdash A = C}{
      \deduce{\Gamma \vdash B = C}{\Gamma \vdash A = B}
    }
    \qquad
    \infer{\Gamma \vdash a : B}{
      \deduce{\Gamma \vdash A = B}{\Gamma \vdash a : A}
    }
    \]
    
  \item[Rules for $\Pi$-Types] we have the formation rule $\Pi \fm$, the introduction rule $\Pi \intro$, the elimination rule $\Pi \el$ and the $\beta$/$\eta$ equality rules $\Pi \eq \beta$/$\Pi \eq \eta$ for $\Pi$-types.
    \[
    \infer[\Pi \fm]{\Gamma \vdash \Pi_{(x : A)} B \; \type}{
      \deduce{\Gamma , x : A \vdash B \; \type}{\Gamma \vdash A \; \type}
    }
    \qquad
    \infer[\Pi \intro]{\Gamma \vdash \lambda x . b : \Pi_{(x : A)} B}{\Gamma , x : A \vdash b : B}
    \]
    \[
    \infer[\Pi \el]{\Gamma \vdash b\: a : B [a / x]}{
      \deduce{\Gamma \vdash a : A}{
        \deduce{\Gamma \vdash b : \Pi_{(x : A)} B}{\Gamma , x : A \vdash B \; \type}
      }
    }
    \qquad
    \infer[\Pi \eq \beta]{\Gamma \vdash (\lambda x . b)a = b[a/x] : B [a/x]}{
      \deduce{\Gamma \vdash a : A}{\Gamma , x : A \vdash b : B}
    }
    \]
    \[
    \infer[\Pi \eq \eta \quad \text{ with $y$ not free in $b$}]{\Gamma \vdash \lambda y . b \: y = b : \Pi_{(x : A)} B}{
      \deduce{\Gamma \vdash b : \Pi_{(x : A)} B}{
        \Gamma , x : A \vdash B \; \type
      }
    }
    \]
    Moreover, we have a congruence rule for each of $\Pi \fm , \Pi \intro , \Pi \el$ as follows.
    \[
    \infer[\Pi \fm_{c}]{\Gamma \vdash \Pi_{(x : A_1)} B_1 = \Pi_{(x : A_2)} B_2}{
      \deduce{\Gamma , x : A_1 \vdash B_1 = B_2}{\Gamma \vdash A_1 = A_2}
    }
    \qquad
    \infer[\Pi \intro_{c}]{\Gamma \vdash \lambda x . b_1 = \lambda x . b_2 : \Pi_{(x : A)} B}{\Gamma , x : A \vdash b_1 = b_2 : B}
    \]
    \[
    \infer[\Pi \el_{c}]{\Gamma \vdash b_1  a_1 = b_2 a_2 : B [a_1 / x]}{
      \deduce{\Gamma \vdash a_1 = a_2 : A}{
        \deduce{\Gamma \vdash b_1 = b_2 : \Pi_{(x : A)} B}{\Gamma , x : A \vdash B \; \type}
      }
    }
    \]

    As in the case of $\Pi$-types, for any $\ast \in \{ \Sigma , + , \rN_n , \rN , \rW , \Id \}$, we use the labels $\ast \fm , \ast \intro , \ast \el$ and $\ast \eq$ possibly with suffixes to denote $\ast$-formation rules, $\ast$-introduction rules, $\ast$-elimination rules and $\ast$-equality rules, respectively. Hereafter, we omit to write down congruence rules: we assume that congruence rules are formulated for the $\ast$-formation, $\ast$-introduction and $\ast$-elimination rules.

  \item[Rules for $\Sigma$-Types] we have the following rules.
    \[
    \infer[\Sigma \fm]{\Gamma \vdash \Sigma_{(x : A)} B \; \type}{
      \deduce{\Gamma , x : A \vdash B \; \type}{\Gamma \vdash A \; \type}
    }
    \quad
    \infer[\Sigma \intro]{\Gamma \vdash ( a , b ) : \Sigma_{(x : A)} B}{
      \deduce{\Gamma \vdash b : B[a / x]}{
        \deduce{\Gamma \vdash a : A}{\Gamma , x : A \vdash B \; \type}
      }
    }
    \quad
    \infer[\Sigma \el]{\Gamma \vdash \rE_{\Sigma} ( b , c ) : C [b / z]}{
      \deduce{\Gamma \vdash b : \Sigma_{(x : A)} B}{
        \deduce{\Gamma \vdash c : \Pi_{(x : A)}\Pi_{(y : B)} C [ ( x , y ) /z] }{\Gamma , z : \Sigma_{(x : A)} B \vdash C \; \type}
      }
    }
    \]
    \[
    \infer[\Sigma \eq]{\Gamma \vdash \rE_{\Sigma} (( a , b ) , c ) = c \:a \: b : C [( a , b ) / z ]}{
      \deduce{\Gamma \vdash c : \Pi_{(x : A)}\Pi_{(y : B)} C [( x , y ) /z] }{
        \deduce{\Gamma , z : \Sigma_{(x : A)} B \vdash C \; \type}{
          \deduce{\Gamma \vdash b : B [ a / x ]}{
            \deduce{\Gamma \vdash a : A}{\Gamma , x : A \vdash B \; \type}
          }
        }
      }
    }
    \]
    It is straightforward to define the left projection $\rp_1$ and the right projection $\rp_2$ satisfying the following:
    \[
    \infer{\Gamma \vdash \rp_1 c : A}{
      \deduce{\Gamma \vdash c : \Sigma_{(x : A)} B}{\Gamma , x : A \vdash B \; \type}
    }
    \qquad
    \infer{\Gamma \vdash \rp_1 ( a , b ) = a : A}{
      \deduce{\Gamma \vdash b : B[a / x]}{
        \deduce{\Gamma \vdash a : A}{\Gamma , x : A \vdash B \; \type}
      }
    }
    \]
    \[
    \infer{\Gamma \vdash \rp_2 c : B[\rp_1 (c) / x]}{
      \deduce{\Gamma \vdash c : \Sigma_{(x : A)} B}{\Gamma , x : A \vdash B \; \type}
    }
    \qquad
    \infer{\Gamma \vdash \rp_2 ( a , b ) = b : B[\rp_1 (a , b) / x]}{
      \deduce{\Gamma \vdash b : B[a / x]}{
        \deduce{\Gamma \vdash a : A}{\Gamma , x : A \vdash B \; \type}
      }
    }
    \]

  \item[Rules for $+$-Types] a sum type $A + B$ corresponds to a disjoint sum of $A$ and $B$.
    \[
    \infer[+ \fm]{\Gamma \vdash A + B \; \type}{
      \deduce{\Gamma \vdash B \; \type}{\Gamma \vdash A \; \type}
    }
    \qquad
    \infer[+ \intro_{\ri}]{\Gamma \vdash \ri ( a ) : A + B}{
      \deduce{\Gamma \vdash B \; \type}{\Gamma \vdash a : A}
    }
    \qquad
    \infer[+ \intro_{\rj}]{\Gamma \vdash \rj ( b ) : A + B}{
      \deduce{\Gamma \vdash b : B}{\Gamma \vdash A \; \type}
    }
    \]
    \[
    \infer[+ \el]{\Gamma \vdash \rE_+ ( d , c_1 , c_2 ) : C[d / x]}{
      \deduce{\Gamma \vdash d : A + B}{
        \deduce{\Gamma \vdash c_2 : \Pi_{(z : B)} C[\rj ( z ) / x]}{
          \deduce{\Gamma \vdash c_1 : \Pi_{(y : A)} C[\ri ( y ) / x]}{\Gamma , x : A + B \vdash C \; \type}
        }
      }
    }
    \]
    \[
    \infer[+ \eq_{\ri}]{\Gamma \vdash \rE_+ ( \ri ( a ) , c_1 , c_2 ) = c_1 \: a : C[\ri ( a )/x]}{
      \deduce{\Gamma \vdash a : A}{
        \deduce{\Gamma \vdash c_2 : \Pi_{(z : B)} C[\rj ( z ) / x]}{
          \deduce{\Gamma \vdash c_1 : \Pi_{(y : A)} C[\ri ( y ) / x]}{\Gamma , x : A + B \vdash C \; \type}
        }
      }
    }
    \qquad
    \infer[+ \eq_{\rj}]{\Gamma \vdash \rE_+ ( \rj ( b ) , c_1 , c_2 ) = c_2 \: b : C[\rj ( b )/x]}{
      \deduce{\Gamma \vdash b : B}{
        \deduce{\Gamma \vdash c_2 : \Pi_{(z : B)} C[\rj ( z ) / x]}{
          \deduce{\Gamma \vdash c_1 : \Pi_{(y : A)} C[\ri ( y ) / x]}{\Gamma , x : A + B \vdash C \; \type}
        }
      }
    }
    \]

  \item[Rules for $\rN$-Type] the $\rN$-type is the type of natural numbers. Its elimination rule is the induction principle on $\rN$.
    \[
    \infer[\rN \fm]{\Gamma \vdash \rN \; \type}{\vdash \Gamma \; \ctxt}
    \qquad
    \infer[\rN \intro_{0}]{\Gamma \vdash 0 : \rN}{\vdash \Gamma \; \ctxt}
    \qquad
    \infer[\rN \intro_{\suc}]{\Gamma \vdash \suc ( a ) : \rN}{\Gamma \vdash a : \rN}
    \]
    \[
    \infer[\rN \el]{\Gamma \vdash \rE_{\rN} ( a , b , c  ) : C [ a / x ]}{
      \deduce{\Gamma \vdash c : \Pi_{(x : \rN)}(C \to C [\suc ( x ) / x])}{
        \deduce{\Gamma \vdash b : C[0 / x]}{
          \deduce{\Gamma \vdash a : \rN}{\Gamma , x : \rN \vdash C \; \type}
        }
      }
    }
    \qquad
    \infer[\rN \eq_{0}]{\Gamma \vdash \rE_{\rN} ( 0 , b , c  ) = b : C [ 0 / x ]}{
      \deduce{\Gamma \vdash c : \Pi_{(x : \rN)}(C \to C [\suc ( x ) / x])}{
        \deduce{\Gamma \vdash b : C[0 / x]}{\Gamma , x : \rN \vdash C \; \type}
      }
    }
    \]
    \[
    \infer[\rN \eq_{\suc}]{\Gamma \vdash \rE_{\rN} ( \suc ( a ) , b , c  ) = c \: a \: (\rE_{\rN} ( a , b , c )) : C [ \suc ( a ) / x ]}{
      \deduce{\Gamma  \vdash c : \Pi_{(x : \rN)}(C \to C [\suc ( x ) / x])}{
        \deduce{\Gamma \vdash b : C[0 / x]}{
          \deduce{\Gamma \vdash a : \rN}{\Gamma , x : \rN \vdash C \; \type}
        }
      }
    }
    \]

  \item[Rules for $\rN_n$-Type] the $\rN_n$-type is the type of finite set with $n$ members.
        \[
    \infer[\rN_n \fm]{\Gamma \vdash \rN_n \; \type}{\vdash \Gamma \; \ctxt}
    \qquad
    \infer[\rN_n \intro \quad \text{for each numeral $m < n$}]{\Gamma \vdash m_n : \rN_n}{\vdash \Gamma \; \ctxt}
    \]
    \[
    \infer[\rN_n \el]{\Gamma \vdash \rE_{n} ( a , b_0 , \ldots , b_{n-1} ) : C [a / x]}{
      \infer*{\Gamma \vdash b_{n-1} : C[ (n-1)_n / x ]}{
        \deduce{\Gamma \vdash b_0 : C[ 0_n / x ]}{
          \deduce{\Gamma \vdash a : \rN_n}{\Gamma , x : \rN_n \vdash C \; \type}
        }
      }
    }
    \quad
    \infer[\rN_n \eq]{\Gamma \vdash \rE_{n} ( m_n , b_0 , \ldots , b_{n-1} ) = b_{m} : C [ m_n / x]}{
      \infer*{\Gamma \vdash b_{n-1} : C[ (n-1)_n / x ]}{
        \deduce{\Gamma \vdash b_0 : C[ 0_n / x ]}{\Gamma , x : \rN_n \vdash C \; \type}
      }
    }
    \]

  \item[Rules for $\rW$-Types] a $\rW$-type $\rW_{(x : A)} B$ corresponds to the type of well founded trees such that each of their branchings is indexed by $B [a / x]$ for some $a : A$. Its elimination rule is the induction principle on a well-founded tree.
    \[
    \infer[\rW \fm]{\Gamma \vdash \rW_{(x : A)} B \; \type}{
      \deduce{\Gamma , x : A \vdash B \; \type}{\Gamma \vdash A \; \type}
    }
    \qquad
    \infer[\rW \intro]{\Gamma \vdash \wcons ( a , b ) : \rW_{(x : A)} B}{
      \deduce{\Gamma \vdash b : B[a/x] \to \rW_{(x : A)} B}{
        \deduce{\Gamma \vdash a : A}{\Gamma , x : A \vdash B \; \type}
      }
    }
    \]
    \[
    \infer[\rW \el]{\Gamma \vdash \rE_{\rW} ( a , b ) : C [a / y']}{
      \deduce{\Gamma \vdash b : \Pi_{(y : A)}\Pi_{(z : B[y/x] \to \rW_{(x : A)} B)}(\Pi_{(v : B[y/x])} C[ z \: v  /y'] \to C [\wcons (y , z) /y'])}{
        \deduce{\Gamma \vdash a : \rW_{(x : A)} B}{
          \deduce{\Gamma , y' : \rW_{(x : A)} B \vdash C \; \type}{\Gamma , x : A \vdash B \; \type}
        }
      }
    }
    \]
    \[
    \infer[\rW \eq]{\Gamma \vdash \rE_{\rW} ( \wcons ( a , b ) , c ) = c \: a \: b \: (\lambda v . \rE_{\rW} ( \app ( b , v) , c )) : C [\wcons ( a , b ) / y']}{
      \deduce{\Gamma  \vdash c : \Pi_{(y : A)}\Pi_{(z : B[y/x] \to \rW_{(x : A)} B)}(\Pi_{(v : B[y/x])} C[z \: v /y'] \to C [\wcons (y , z) /y'])}{
        \deduce{\Gamma \vdash b : B[a / x] \to \rW_{(x : A)} B}{
          \deduce{\Gamma \vdash a : A}{
            \deduce{\Gamma , y' : \rW_{(x : A)} B \vdash C \; \type}{\Gamma , x : A \vdash B \; \type}
          }
        }
      }
    }
    \]

  \item[Rules for $\Id$-Types] an $\Id$-type $\Id ( A , a_1 , a_2 )$ corresponds to an identity statement which says that $a_1$ and $a_2$ are equal elements of $A$.
    \[
    \infer[\Id \fm]{\Gamma \vdash \Id ( A , a_1 , a_2 ) \; \type}{
      \deduce{\Gamma \vdash a_2 : A}{
        \deduce{\Gamma \vdash a_1 : A}{\Gamma \vdash A \; \type}
      }
    }
    \qquad
    \infer[\Id \intro]{\Gamma \vdash \refl ( a ) : \Id ( A , a , a )}{\Gamma \vdash a : A}
    \]
    \[
    \infer[\Id \el]{\Gamma \vdash \rE_{\Id} (d , c): C [ a / x , b / y , d / z]}{
      \deduce{\Gamma \vdash  d : \Id ( A , a , b )}{
        \deduce{\Gamma \vdash b : A}{
          \deduce{\Gamma \vdash a : A}{
            \deduce{\Gamma \vdash c : \Pi_{(v : A)} C[ v / x , v / y , \refl ( v ) / z]}{\Gamma , x : A , y : A , z : \Id ( A , x , y ) \vdash C \; \type}
          }
        }
      }
    }
    \]
    \[
    \infer[\Id \eq]{\Gamma \vdash \rE_{\Id} ( \refl ( a ) , c ) = c \: a : C [a / x , a / y , \refl ( a ) / z]}{
      \deduce{\Gamma \vdash a : A}{
        \deduce{\Gamma \vdash c : \Pi_{(v : A)} C[ v / x , v / y , \refl ( v ) / z]}{\Gamma , x : A , y : A , z : \Id ( A , x , y ) \vdash C \; \type}
      }
    }
    \]

  \item[Rules for Mahlo Universe $( \rV , \rTV )$] we have the rules $\rV \fm$, $\rV \intro$ and $\rTV \fm$ with the inference rules which define the decoding function $\rTV$ recursively. Congruence rules are defined for the $\rTV$-formation rule $\rTV \fm$ and all $\rV$-introduction rules $\rV \intro$ except $\rV \intro_{\rN}$ and $\rV \intro_{\rN_n}$, though we omit to write down them.
    \[
    \infer[\rV \fm]{\Gamma \vdash \rV \; \type}{\vdash \Gamma \; \ctxt}
    \qquad
    \infer[\rTV \fm]{\Gamma \vdash \rTV (a) \; \type}{\Gamma \vdash a : \rV}
    \]
    \[
    \infer[\rV \intro_{\rN}]{\Gamma \vdash \wht{\rN}_{\rV} : \rV}{\vdash \Gamma \; \ctxt}
    \qquad
    \infer{\Gamma \vdash \rTV ( \wht{\rN}_{\rV} ) = \rN}{\vdash \Gamma \; \ctxt}
    \]
    \[
    \infer[\rV \intro_{\rN_n}]{\Gamma \vdash \wht{\rN_n}_{\rV} : \rV}{\vdash \Gamma \; \ctxt}
    \qquad
    \infer{\Gamma \vdash \rTV ( \wht{\rN_n}_{\rV} ) = \rN_n}{\vdash \Gamma \; \ctxt}
    \]
    \[
    \infer[\rV \intro_{\Pi}]{\Gamma \vdash \wht{\Pi}_{\rV} ( a , b ) : \rV}{
      \deduce{\Gamma \vdash b : \rTV ( a ) \to \rV}{\Gamma \vdash a : \rV}
    }
    \qquad
    \infer{\Gamma \vdash \rTV ( \wht{\Pi}_{\rV} ( a , b ) ) = \Pi_{(x : \rTV ( a ))} \rTV ( b \: x ) }{
      \deduce{\Gamma \vdash b : \rTV ( a ) \to \rV}{\Gamma \vdash a : \rV}
    }
    \]
    \[
    \infer[\rV \intro_{\Sigma}]{\Gamma \vdash \wht{\Sigma}_{\rV} ( a , b ) : \rV}{
      \deduce{\Gamma \vdash  b : \rTV ( a ) \to \rV}{\Gamma \vdash a : \rV}
    }
    \qquad
    \infer{\Gamma \vdash \rTV ( \wht{\Sigma}_{\rV} ( a , b ) ) = \Sigma_{(x : \rTV ( a ))} \rTV ( b \: x) }{
      \deduce{\Gamma \vdash b : \rTV ( a ) \to \rV}{\Gamma \vdash a : \rV}
    }
    \]
    \[
    \infer[\rV \intro_{+}]{\Gamma \vdash a \wht{+}_{\rV} b : \rV}{
      \deduce{\Gamma \vdash b : \rV}{\Gamma \vdash a : \rV}
    }
    \qquad
    \infer{\Gamma \vdash \rTV ( a \wht{+}_{\rV} b ) = \rTV ( a ) + \rTV ( b ) }{
      \deduce{\Gamma \vdash b : \rV}{\Gamma \vdash a : \rV}
    }
    \]
    \[
    \infer[\rV \intro_{\rW}]{\Gamma \vdash \wht{\rW}_{\rV} ( a , b ) : \rV}{
      \deduce{\Gamma \vdash b : \rTV ( a ) \to \rV}{\Gamma \vdash a : \rV}
    }
    \qquad
    \infer{\Gamma \vdash \rTV ( \wht{\rW}_{\rV} ( a , b ) ) = \rW_{(x : \rTV ( a ))} \rTV ( b \: x) }{
      \deduce{\Gamma \vdash b : \rTV ( a ) \to \rV}{\Gamma \vdash a : \rV}
    }
    \]
    \[
    \infer[\rV \intro_{\Id}]{\Gamma \vdash \wht{\Id}_{\rV} ( a , b , c ) : \rV}{
      \deduce{\Gamma \vdash c : \rTV ( a )}{
        \deduce{\Gamma \vdash b : \rTV ( a )}{\Gamma \vdash a : \rV}
      }
    }
    \qquad
    \infer{\Gamma \vdash \rTV ( \wht{\Id}_{\rV} ( a , b , c ) ) = \Id ( \rTV ( a ) , b , c  ) }{
      \deduce{\Gamma \vdash c : \rTV ( a )}{
        \deduce{\Gamma \vdash b : \rTV ( a )}{\Gamma \vdash a : \rV}
      }
    }
    \]
    The rules concerning the codes of non-Mahlo universes are as follows:
    \[
    \infer[\rV \intro_{\rU}]{\Gamma \vdash \wht{\rU}_{f} : \rV}{
      \Gamma \vdash f : \Sigma_{(x : \rV)} (\rTV ( x ) \to \rV) \to \Sigma_{(x : \rV)} (\rTV ( x ) \to \rV)
    }
    \]
    \[
    \infer[\rTV \mathsf{uni}]{\Gamma  \vdash \rTV ( \wht{\rU}_{f} ) = \rU_{f}}{
      \Gamma \vdash f : \Sigma_{(x : \rV)} (\rTV ( x ) \to \rV) \to \Sigma_{(x : \rV)} (\rTV ( x ) \to \rV)
    }
    \]
    \[
    \infer[\rV \intro_{\wht{\rT}}]{\Gamma  \vdash \wht{\rT}_{f} (a) : \rV}{
      \deduce{\Gamma  \vdash a : \rU_{f}}{
        \Gamma \vdash f : \Sigma_{(x : \rV)} (\rTV ( x ) \to \rV) \to \Sigma_{(x : \rV)} (\rTV ( x ) \to \rV)
      }
    }
    \]

  \item[Rules for Non-Mahlo Universes $( \rU_{f} , \wht{\rT}_{f} )$] below we define the $\rU_{f}$-formation rule $\rU_{f} \fm$, the $\rU_{f}$-introduction rules $\rU_{f} \intro$ and the inference rules which define the decoding function $\wht{\rT}_{f}$ recursively. Congruence rules are defined as in the case of the Mahlo Universe rules.

    We denote the assumption $\Gamma \vdash f : \Sigma_{(x : \rV)} (\rTV ( x ) \to \rV) \to \Sigma_{(x : \rV)} (\rTV ( x ) \to \rV)$ by $( \ast )$, and put $\rT_{f} (a) := \rTV (\wht{\rT}_{f} (a))$.
    \[
    \infer[\rU_{f} \fm]{\Gamma \vdash \rU_{f} \; \type}{( \ast )}
    \qquad
    \infer[\rU_{f} \intro_{\rN}]{\Gamma \vdash \wht{\rN}_{f} : \rU_{f}}{( \ast )}
    \qquad
    \infer{\Gamma \vdash \wht{\rT}_{f} ( \wht{\rN}_{f} ) = \wht{\rN}_{\rV} : \rV}{( \ast )}
    \]
    \[
    \infer[\rU_{f} \intro_{\Pi}]{\Gamma \vdash \wht{\Pi}_{f} ( a , b ) : \rU_{f}}{
      \deduce{\Gamma \vdash b : \rT_{f} ( a ) \to \rU_{f}}{
        \deduce{\Gamma \vdash a : \rU_{f}}{(\ast )}
      }
    }
    \qquad
    \infer{\Gamma \vdash \wht{\rT}_{f} ( \wht{\Pi}_{f} ( a , b ) ) = \wht{\Pi}_{\rV} ( \wht{\rT}_{f} ( a ) , \lambda x . \wht{\rT}_{f} ( b \: x) ) : \rV}{
      \deduce{\Gamma \vdash  b : \rT_{f} ( a ) \to \rU_{f}}{
        \deduce{\Gamma \vdash a : \rU_{f}}{(\ast )}
      }
    }
    \]
    \[
    \infer[\rU_{f} \intro_{\Id}]{\Gamma \vdash \wht{\Id}_{f} ( a , b , c ) : \rU_{f}}{
      \deduce{\Gamma \vdash c : \rT_{f} ( a )}{
        \deduce{\Gamma \vdash b : \rT_{f} ( a )}{
          \deduce{\Gamma \vdash a : \rU_{f}}{(\ast )}
        }
      }
    }
    \qquad
    \infer{\Gamma \vdash \wht{\rT}_{f} ( \wht{\Id}_{f} ( a , b , c ) ) = \wht{\Id}_{\rV} ( \wht{\rT}_{f} ( a ) , b , c  ) : \rV}{
      \deduce{\Gamma \vdash c : \rT_{f} ( a )}{
        \deduce{\Gamma \vdash b : \rT_{f} ( a )}{
          \deduce{\Gamma \vdash a : \rU_{f}}{(\ast )}
        }
      }
    }
    \]
    Similarly, we define the introduction and decoding function rules for $\wht{\rN_n}_{f} , \wht{\Sigma}_{f} , \wht{+}_{f} , \wht{\rW}_{f}$, but we omit to write down them.

    The following rules are the rules for the restriction of $f$ to $\rU_{f}$:
    \[
    \infer[\rU_{f} \intro_{\mathrm{res}^0}]{\Gamma \vdash \res{0}{f} ( a , b ) : \rU_{f}}{
      \deduce{\Gamma \vdash b : \rT_{f} ( a ) \to \rU_{f}}{
        \deduce{\Gamma \vdash a : \rU_{f}}{( \ast )}
      }
    }
    \quad
    \scalebox{0.90}{
    \infer{\Gamma \vdash \wht{\rT}_{f} ( \res{0}{f} ( a , b ) ) = \rp_1 (f \: (\wht{\rT}_{f} ( a ) , \lambda x . \wht{\rT}_{f} (b \: x))) : \rV}{
      \deduce{\Gamma \vdash b : \rT_{f} ( a ) \to \rU_{f}}{
        \deduce{\Gamma \vdash a : \rU_{f}}{( \ast )}
      }
    }
    }
    \]
    \[
    \infer[\rU_{f} \intro_{\mathrm{res}^1}]{\Gamma \vdash \res{1}{f} ( a , b , c ) : \rU_{f}}{
      \deduce{\Gamma \vdash c : \rT_{f} ( \res{0}{f} ( a , b ))}{
        \deduce{\Gamma \vdash b : \rT_{f} ( a ) \to \rU_{f}}{
          \deduce{\Gamma \vdash a : \rU_{f}}{( \ast )}
        }
      }
    }
    \quad
    \scalebox{0.90}{
    \infer{\Gamma \vdash \wht{\rT}_{f} ( \res{1}{f} ( a , b , c ) ) = \rp_2 (f \: (\wht{\rT}_{f} ( a ) , \lambda x . \wht{\rT}_{f} ( b \: x))) \: c : \rV}{
      \deduce{\Gamma \vdash c : \rT_{f} ( \res{0}{f} ( a , b ) )}{
        \deduce{\Gamma \vdash b : \rT_{f} ( a ) \to \rU_{f}}{
          \deduce{\Gamma \vdash a : \rU_{f}}{( \ast )}
        }
      }
    }
    }
    \]

  \end{description}
\end{defi}

Note that, in the definition above, we defined subuniverses of the Mahlo Universe $\rV$ as \textit{recursive subuniverses} in the sense of \cite{setzer2008,setzer2008a}: the decoding function $\wht{\rT}_{f}$ for the subuniverse $\rU_{f}$ was defined recursively.


\section{Accessibility Predicate in Extensional Type Theory}\label{sec:accproof}
This appendix shows that one can define the accessibility predicate $\Acc$ by using $\rW$-types if the core type theory of $\MLM$ is replaced with the extensional one in \cite{martinloef1984}. We denote this extensional variant of $\MLM$ by $\MLMext$. In fact, the argument below can be repeated in extensional $\MLTT$ of \cite{martinloef1984} itself. So, in $\MLMext$, one can simulate our interpretation of \textbf{Pi-Numbers} without adding $\Acc$ to the core type theory. None of the results given by this appendix is used in the other sections of the present paper. For comparison, recall that the core type theory of \cite{RathjenGrifforPalmgren1998} is the extensional type theory of \cite{martinloef1984} too, and that the interpretation of \textbf{Pi-Numbers} in \cite{RathjenGrifforPalmgren1998} does not use $\Acc$ at all.

First, we note that the transitive closure function $\mathsf{tc}$ enables to formulate a stronger transfinite induction on $\alpha : \bbV$ in \textit{intensional} $\MLTT$. Since we have this transfinite induction principle also in extensional $\MLTT$, we can use it to define the predicate $\Acc$ by using $\rW$-types in $\MLMext$.
\begin{lem}[Transfinite Induction on Transitive Closure, \href{https://github.com/takahashi-yt/czf-in-mahlo/blob/67c9dbf50bee67b73c1ea009929f4838458b8edc/src/CZFAxioms.agda\#L973}{\texttt{Link}}]\label{lem:tctrans}
  Let $F$ be a type with $\Gamma , \beta : \bbV \vdash F \; \type$. We then have a term
  \[
  \TI^{\mathsf{tc}} : \Pi_{(\alpha : \bbV)}(\forall \gamma \in \tc{\alpha} F[\gamma / \beta ] \to F[\alpha / \beta ]) \to \Pi_{(\alpha : \bbV)} F[\alpha / \beta ] .
  \]
\end{lem}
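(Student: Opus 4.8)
The plan is to derive $\TI^{\mathsf{tc}}$ from the ordinary transfinite induction principle $\TI$ on $\bbV$ (already available as the eliminator for $\bbV = \rW_{(x : \rV)} \rTV (x)$) by strengthening the goal. Writing $\prog$ for the hypothesis $\Pi_{(\alpha : \bbV)}(\forall \gamma \in \tc{\alpha} F[\gamma / \beta ] \to F[\alpha / \beta ])$, I would introduce the auxiliary family $G$ with $\Gamma , \beta : \bbV \vdash G \; \type$ given by $G := \forall \gamma \in \tc{\beta} F[\gamma / \beta]$ and prove $\Pi_{(\alpha : \bbV)} G[\alpha / \beta]$ by $\TI$. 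Once this is in hand, $\TI^{\mathsf{tc}}$ follows immediately: for any $\alpha : \bbV$ we have $G[\alpha / \beta] = \forall \gamma \in \tc{\alpha} F[\gamma / \beta]$, and feeding this to $\prog \: \alpha$ yields $F[\alpha / \beta]$.

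The core of the argument is the $\TI$-step establishing $\Pi_{(\alpha : \bbV)} G[\alpha / \beta]$. Fixing $\alpha = \wcons (a , f)$, the induction hypothesis supplied by $\TI$ is $\forall \delta \in \alpha\: G[\delta / \beta]$, that is, $G[f \: x / \beta] = \forall \eta \in \tc{(f \: x)} F[\eta / \beta]$ for every $x : \rTV (a)$. To prove $G[\wcons (a , f) / \beta] = \forall \gamma \in \tc{\wcons (a , f)} F[\gamma / \beta]$, I would use the defining fact that $\ov{\tc{\wcons (a , f)}}$ is definitionally equal to $\rTV (a) + \Sigma_{(z : \rTV (a))} \ov{\tc{(f \: z)}}$, so that by $+ \el$ it suffices to treat the two summands of the index type separately, recalling the computation equations $\pred \: \tc{(\wcons (a , f))} \: (\ri \: z) = f \: z$ and $\pred \: \tc{(\wcons (a , f))} \: (\rj \: w) = \pred \: \tc{(f \: (\rp_1 \: w))} \: (\rp_2 \: w)$ from the proof of Lemma~\ref{lem:lastthree}.(2).

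For the left-injection index $\ri \: x$, the corresponding member of $\tc{\wcons (a , f)}$ is the direct member $f \: x$, and I would obtain $F[f \: x / \beta]$ by applying $\prog \: (f \: x)$, whose antecedent $\forall \gamma' \in \tc{(f \: x)} F[\gamma' / \beta]$ is precisely the instance $G[f \: x / \beta]$ delivered by the induction hypothesis. For the right-injection index $\rj \: (z , w')$ with $z : \rTV (a)$ and $w' : \ov{\tc{(f \: z)}}$, the member is $\pred \: \tc{(f \: z)} \: w'$, a member of $\tc{(f \: z)}$, so $F$ on it is again furnished directly by the instance $G[f \: z / \beta]$ of the induction hypothesis. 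Combining the two cases yields $G[\wcons (a , f) / \beta]$, completing the $\TI$-step.

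The point I expect to require the most care is not any single computation but the conceptual matching between the two layers of recursion: the ordinary $\TI$ must deliver the \emph{strengthened} predicate $G$ (not merely $F$) at direct members, and it is exactly this that allows $\prog$ to close the loop at a direct member $f \: x$ while the transitive-closure members of direct members are simultaneously absorbed into the same hypothesis. Making the substitutions line up—particularly checking that $\pred \: \tc{\alpha}$ acts on the two summands of $\ov{\tc{\alpha}}$ as recorded above, so that the two $+ \el$ branches land in the intended instances of $F$—is the main bookkeeping obstacle, but it is routine given the computation rules already established for $\mathsf{tc}$ and $\pred$.
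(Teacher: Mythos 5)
Your proof is correct and is essentially the intended argument: the paper itself gives no in-text proof of this lemma (it defers to the linked Agda code), and the formalised proof is exactly your strengthening of the motive to $G[\alpha/\beta] = \forall \gamma \in \tc{\alpha}\, F[\gamma/\beta]$, established by $\rW$-induction with the two branches matching the sum decomposition $\ov{\tc{\wcons(a,f)}} = \rTV(a) + \Sigma_{(z : \rTV(a))} \ov{\tc{(f\:z)}}$, followed by one final application of the step hypothesis. You also correctly handle the one delicate point, namely that the case split on $\ov{\tc{\alpha}}$ only computes at canonical elements $\wcons(a,f)$, by invoking the $\rW$-eliminator itself (whose step case is stated at $\wcons(a,f)$) rather than treating the packaged $\TI$ as a black box over arbitrary $\alpha$.
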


We replace the rules $\Id \intro$, $\Id \el$ and $\Id \eq$ with $\Id^e \intro$, $\Id^e \el$ and $\Id^e \eq$, respectively.
\[
\infer[\Id^e \intro]{\Gamma \vdash \erefl : \Id (A , a , b)}{
  \Gamma \vdash a = b : A
}
\qquad
\infer[\Id^e \el]{\Gamma \vdash a = b : A}{
  \deduce{\Gamma \vdash d : \Id (A , a , b)}{
    \deduce{\Gamma \vdash b : A}{
      \Gamma \vdash a : A
    }
  }
}
\qquad
\infer[\Id^e \eq]{\Gamma \vdash d = \erefl : \Id (A , a , b)}{
  \deduce{\Gamma \vdash d : \Id (A , a , b)}{
    \deduce{\Gamma \vdash b : A}{
      \Gamma \vdash a : A
    }
  }
}
\]
We call the resulting $\Id$-types the \textit{extensional} $\Id$-\textit{types}.

As is well known, the $\eta$-rule for $\Pi$-types with the rules for extensional $\Id$-types implies the principle of function extensionality.
\begin{description}
\item[Function Extensionality] for any $A , B$ with $\Gamma , x : A \vdash B \; \type$,
\[
\Gamma \vdash \Pi_{(f : \Pi_{(x : A)} B)}\Pi_{(g : \Pi_{(x : A)} B)} (\Pi_{(x : A)} f \: x =_{B} g \: x \to f =_{\Pi_{(x : A)} B} g)
\]
holds.
\end{description}
We then obtain the propositional induction principle for $\TI^{\mathsf{tc}}$, which we state as the lemma below. Note that \textbf{Function Extensionality} is the only extensional principle on which the lemma depends; so the lemma holds also in intensional $\MLTT$ with it.
\begin{lem}[Propositional Induction Principle for $\TI^{\mathsf{tc}}$, \href{https://github.com/takahashi-yt/czf-in-mahlo/blob/67c9dbf50bee67b73c1ea009929f4838458b8edc/src/Extensional.agda\#L104}{\texttt{Link}}]\label{lem:tcind}
  Let $F$ be a type with $\Gamma , \beta : \bbV \vdash F \; \type$. For any $g : \Pi_{(\alpha : \bbV)}(\forall \gamma \in \tc{\alpha} F[\gamma / \beta ] \to F[\alpha / \beta ])$ and any $\alpha : \bbV$, we have
  \[
  \TI^{\mathsf{tc}} \: g \: \alpha =_{F[\alpha / \beta ]} g \: \alpha \: (\lambda x . \TI^{\mathsf{tc}} \: g \: (\pred \: \tc{\alpha} \: x)) .
  \]
\end{lem}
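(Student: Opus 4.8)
The plan is to unfold the construction of $\TI^{\mathsf{tc}}$ from Lemma~\ref{lem:tctrans}, reduce the claimed identity to an equality of two inductive-hypothesis functions, and close the remaining gap with \textbf{Function Extensionality}. Recall that $\TI^{\mathsf{tc}}$ is obtained by first building, via ordinary transfinite induction $\TI$ on $\bbV$, a term $q : \Pi_{(\alpha : \bbV)} \forall \gamma \in \tc{\alpha} F[\gamma / \beta]$ (where $q$ depends on $g$), and then setting $\TI^{\mathsf{tc}} \: g \: \alpha := g \: \alpha \: (q \: \alpha)$. Using the $\rW$-computation rule for $\TI$ together with $+ \eq$ and the definitional equations $\pred \: \tc{(\wcons (a , f))} \: (\ri \: z) = f \: z$ and $\pred \: \tc{(\wcons (a , f))} \: (\rj \: (z , x')) = \pred \: \tc{(f \: z)} \: x'$, the helper $q$ satisfies the definitional computation rules
\[
q \: \wcons (a , f) \: (\ri \: z) = g \: (f \: z) \: (q \: (f \: z)) , \qquad q \: \wcons (a , f) \: (\rj \: (z , x')) = q \: (f \: z) \: x' .
\]
Since $\TI^{\mathsf{tc}} \: g \: \alpha = g \: \alpha \: (q \: \alpha)$ and $\TI^{\mathsf{tc}} \: g \: (\pred \: \tc{\alpha} \: x) = g \: (\pred \: \tc{\alpha} \: x) \: (q \: (\pred \: \tc{\alpha} \: x))$ hold definitionally, it suffices to exhibit an identity between $q \: \alpha$ and $\lambda x . g \: (\pred \: \tc{\alpha} \: x) \: (q \: (\pred \: \tc{\alpha} \: x))$ and then transport along it.

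First I would establish the pointwise equation
\[
(\star) \qquad q \: \alpha \: x =_{F[\pred \: \tc{\alpha} \: x / \beta]} g \: (\pred \: \tc{\alpha} \: x) \: (q \: (\pred \: \tc{\alpha} \: x)) \qquad \text{for all } \alpha : \bbV, \ x : \ov{\tc{\alpha}},
\]
by ordinary transfinite induction on $\alpha$, followed by $+ \el$ on $x : \ov{\tc{(\wcons (a , f))}} = \rTV (a) + \Sigma_{(z : \rTV (a))} \ov{\tc{(f \: z)}}$. In the $\ri$-case both sides reduce definitionally to $g \: (f \: z) \: (q \: (f \: z))$, so $\refl$ suffices. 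In the $\rj$-case the left-hand side reduces to $q \: (f \: z) \: x'$ and the right-hand side to $g \: (\pred \: \tc{(f \: z)} \: x') \: (q \: (\pred \: \tc{(f \: z)} \: x'))$, which is exactly the instance of $(\star)$ at the immediate predecessor $f \: z$ evaluated at $x'$; hence it is discharged by the induction hypothesis. Note that $(\star)$ itself needs no extensional principle, being a purely pointwise statement.

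Finally, from $(\star)$ and \textbf{Function Extensionality} I obtain a term of type $q \: \alpha =_{\forall \gamma \in \tc{\alpha} F[\gamma / \beta]} \lambda x . g \: (\pred \: \tc{\alpha} \: x) \: (q \: (\pred \: \tc{\alpha} \: x))$, and applying the action of $g \: \alpha$ on this path (derivable from $\mathsf{transport}$) yields $g \: \alpha \: (q \: \alpha) =_{F[\alpha / \beta]} g \: \alpha \: (\lambda x . g \: (\pred \: \tc{\alpha} \: x) \: (q \: (\pred \: \tc{\alpha} \: x)))$. Rewriting both sides by the definitional equations recorded above converts this into the desired identity $\TI^{\mathsf{tc}} \: g \: \alpha =_{F[\alpha / \beta]} g \: \alpha \: (\lambda x . \TI^{\mathsf{tc}} \: g \: (\pred \: \tc{\alpha} \: x))$. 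The hard part is precisely this last bridge: the computation rule fails to hold definitionally because the canonical inductive-hypothesis bundle $q \: \alpha$ and the reconstructed one $\lambda x . \TI^{\mathsf{tc}} \: g \: (\pred \: \tc{\alpha} \: x)$ agree only pointwise, so \textbf{Function Extensionality} is unavoidable — which is exactly why the lemma is stated propositionally and flagged as depending on that principle.
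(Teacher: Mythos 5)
Your proof is correct and follows essentially the same route as the paper's (Agda-formalised) proof: unfold $\TI^{\mathsf{tc}} \: g \: \alpha$ as $g \: \alpha \: (q \: \alpha)$ for an auxiliary bundle $q : \Pi_{(\alpha : \bbV)} \forall \gamma \in \tc{\alpha} F[\gamma / \beta]$ defined by $\rW$-induction, prove the pointwise agreement $(\star)$ by induction using the definitional computation rules for $q$ and for $\pred \: \tc{(\wcons (a , f))}$, and close with \textbf{Function Extensionality} plus congruence of $g \: \alpha$. This also matches the paper's remark that \textbf{Function Extensionality} is the only extensional principle on which the lemma depends.
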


By the propositional induction principle above, one can derive the following computation rule for $\TI^{\mathsf{tc}}$ in extensional $\MLTT$:
\[
\infer{\Gamma \vdash \TI^{\mathsf{tc}} \: g \: \alpha = g \: \alpha \: (\lambda x . \TI^{\mathsf{tc}} \: g \: (\pred \: \tc{\alpha} \: x)) : F[\alpha / \beta ]}{
  \deduce{\Gamma \vdash \alpha : \bbV}{
    \deduce{\Gamma \vdash g : \Pi_{(\alpha : \bbV)}(\forall \gamma \in \tc{\alpha} F[\gamma / \beta ] \to F[\alpha / \beta ])}{
      \Gamma , \beta : \bbV \vdash F \; \type
    }
  }
}
\]
This computation rule enables to define the accessibility predicate $\Acc$. Let $F$ be $\rV$, and $g_{\Acc}$ be
\[
\lambda \alpha . \lambda f . \wht{\Pi}_{\rV} (\idx \: \tc{\alpha} , \lambda x . f \: x) : \Pi_{(\alpha : \bbV)} (\forall \beta \in \tc{\alpha} \rV \to \rV) ,
\]
then we define $\Acc \: \gamma := \rTV (\TI^{\mathsf{tc}} \: g_{\Acc} \: \gamma)$.

Next, though we do not use this lemma below, we show in extensional $\MLTT$ that for every $\alpha : \bbV$, any two terms of type $\Acc \: \alpha$ are propositionally equal, hence they are also judgementally equal.
\begin{lem}[Extensional $\MLTT$]\label{lem:accprop}
  Let $\alpha$ be of type $\bbV$. We then have $d =_{\Acc \: \alpha} e$ for any two terms $d , e$ of type $\Acc \: \alpha$.
\end{lem}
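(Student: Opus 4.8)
The plan is to prove the statement by transfinite induction on the transitive closure, exploiting the fact that $\Acc\:\alpha$ \emph{unfolds definitionally} to a dependent product. First I would record this unfolding. Using the judgemental computation rule for $\TI^{\mathsf{tc}}$ established just above (available since we work in $\MLMext$), then $\beta$-reduction, and finally the decoding rule $\rV\intro_{\Pi}$ for $\wht{\Pi}_{\rV}$, one obtains the chain of definitional equalities
\[
\Acc\:\alpha = \rTV(\TI^{\mathsf{tc}}\:g_{\Acc}\:\alpha) = \rTV(\wht{\Pi}_{\rV}(\idx\:\tc{\alpha}, \lambda x.\TI^{\mathsf{tc}}\:g_{\Acc}\:(\pred\:\tc{\alpha}\:x))) = \Pi_{(x:\ov{\tc{\alpha}})}\Acc\:(\pred\:\tc{\alpha}\:x).
\]
Consequently, by the conversion rule for terms, every $d:\Acc\:\alpha$ is also a term of the displayed $\Pi$-type, so $d\:x:\Acc\:(\pred\:\tc{\alpha}\:x)$ is defined for each $x:\ov{\tc{\alpha}}$.

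Next I would set $F := \Pi_{(d:\Acc\:\beta)}\Pi_{(e:\Acc\:\beta)}(d =_{\Acc\:\beta} e)$, a type in the context $\Gamma,\beta:\bbV$, and prove $\Pi_{(\alpha:\bbV)}F[\alpha/\beta]$ by applying $\TI^{\mathsf{tc}}$ (Lemma~\ref{lem:tctrans}) to $F$. So fix $\alpha:\bbV$ together with the induction hypothesis, which provides for every $x:\ov{\tc{\alpha}}$ a proof that any two terms of $\Acc\:(\pred\:\tc{\alpha}\:x)$ are propositionally equal, and let $d,e:\Acc\:\alpha$ be given. By the unfolding above, $d\:x$ and $e\:x$ both inhabit $\Acc\:(\pred\:\tc{\alpha}\:x)$, so the induction hypothesis at $x$ yields $d\:x =_{\Acc(\pred\:\tc{\alpha}\:x)} e\:x$. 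Hence
\[
\Pi_{(x:\ov{\tc{\alpha}})}(d\:x =_{\Acc(\pred\:\tc{\alpha}\:x)} e\:x)
\]
is inhabited, and applying \textbf{Function Extensionality} to the family $\lambda x.\Acc\:(\pred\:\tc{\alpha}\:x)$ (whose dependent product is exactly $\Acc\:\alpha$), with $d$ and $e$ as the two functions, produces a term of type $d =_{\Acc\:\alpha} e$. This discharges the inductive step, and $\TI^{\mathsf{tc}}$ then gives the conclusion for all $\alpha$.

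The main obstacle I anticipate is bookkeeping rather than conceptual: one must apply the definitional unfolding of $\Acc$ consistently so that the types of $d\:x$, $e\:x$, the codomain family in \textbf{Function Extensionality}, and the codomain $\Acc\:\alpha$ of the resulting identification all refer to the same family $\lambda x.\Acc\:(\pred\:\tc{\alpha}\:x)$; here it is essential that the computation rule for $\TI^{\mathsf{tc}}$ holds \emph{judgementally} in $\MLMext$, so that $\Acc\:\alpha$ and $\Pi_{(x:\ov{\tc{\alpha}})}\Acc\:(\pred\:\tc{\alpha}\:x)$ are interchangeable without any further transport. Finally, since we work with extensional $\Id$-types, the rule $\Id^e\el$ turns the propositional equality $d =_{\Acc\:\alpha} e$ just obtained into the judgemental equality $d = e:\Acc\:\alpha$ also asserted in the lemma.
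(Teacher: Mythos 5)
Your proof is correct and follows essentially the same route as the paper's: transfinite induction on the transitive closure via $\TI^{\mathsf{tc}}$ (Lemma~\ref{lem:tctrans}), the definitional unfolding $\Acc \: \alpha = \Pi_{(x : \ov{\tc{\alpha}})} \Acc \: (\pred \: \tc{\alpha} \: x)$ obtained from the judgemental computation rule for $\TI^{\mathsf{tc}}$, and an application of \textbf{Function Extensionality} to the pointwise identifications supplied by the induction hypothesis. Your closing observation that $\Id^e \el$ upgrades the propositional equality to a judgemental one matches the paper's remark immediately preceding the lemma.
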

\begin{proof}
  By transfinite induction on transitive closure (Lemma \ref{lem:tctrans}). The induction hypothesis says that
  \[
  \Pi_{(x : \ov{\tc{\alpha}})} \Pi_{(d' : \Acc \: (\pred \: \tc{\alpha} \: x))} \Pi_{(e' : \Acc \: (\pred \: \tc{\alpha} \: x))} d' =_{\Acc \: (\pred \: \tc{\alpha} \: x)} e'
  \]
  holds. Consider any arbitrary $d , e : \Acc \: \alpha$, and we show that $d =_{\Acc \: \alpha} e$ holds, where we have $d , e : \forall \beta \in \tc{\alpha} \Acc \: \beta$ by the $\TI^{\mathsf{tc}}$-computation rule:
  \begin{align*}
    \Acc \: \alpha &= \rTV (\TI^{\mathsf{tc}} \: g_{\Acc} \: \alpha ) = \rTV (g_{\Acc} \: \alpha \: (\lambda x . \TI^{\mathsf{tc}} \: g_{\Acc} \: (\pred \: \tc{\alpha} \: x))) \\
    &= \Pi_{(x : \ov{\tc{\alpha}})} \rTV (\TI^{\mathsf{tc}} \: g_{\Acc} \: (\pred \: \tc{\alpha} \: x)) = \Pi_{(x : \ov{\tc{\alpha}})} \Acc \: (\pred \: \tc{\alpha} \: x) = \forall \beta \in \tc{\alpha} \Acc \: \beta .
  \end{align*}
  By \textbf{Function Extensionality}, it suffices to verify that $\Pi_{(x : \ov{\tc{\alpha}})} d \: x =_{\Acc \: (\pred \: \tc{\alpha} \: x)} e \: x$ holds, but this claim follows from IH.
\end{proof}

In the remainder of this section, we derive the introduction, elimination and computation rules for $\Acc$ (cf. Definition \ref{def:acc}) in extensional $\MLTT$.
\begin{prop}[Extensional $\MLTT$]
  There are two terms $\prog$ and $\rE_{\Acc}$ satisfying the rules below.
  \[
  \infer[\intro]{\Gamma \vdash \prog \: f : \Acc \: \alpha}{
    \deduce{\Gamma \vdash f : \Pi_{(x : \ov{\tc{\alpha}})} \Acc \: (\pred \: \tc{\alpha} \: x)}{\Gamma \vdash \alpha : \bbV}
  }
  \]
  \[
  \infer[\el]{\Gamma \vdash \rE_{\Acc} \: \alpha \: t \: g : C [\alpha / \beta , t / x]}{
    \deduce{\Gamma \vdash g : \Pi_{(\gamma : \bbV)}\Pi_{(f : \Pi_{(y : \ov{\tc{\gamma}})} \Acc \: (\pred \: \tc{\gamma} \: y))}((\Pi_{(y : \ov{\tc{\gamma}})} C[\pred \: \tc{\gamma} \: y / \beta , f \: y / x]) \to C[\gamma / \beta , \prog \: f / x])}{
      \deduce{\Gamma , \beta : \bbV , x : \Acc \: \beta \vdash C \; \type}{
        \deduce{\Gamma \vdash t : \Acc \: \alpha}{\Gamma \vdash \alpha : \bbV}
      }
    }
  }
  \]
  \[
  \infer[\eq]{\Gamma \vdash \rE_{\Acc} \: \alpha \: (\prog \: f) \: g = g \: \alpha \: f \: (\lambda y . \rE_{\Acc} \: (\pred \: \tc{\alpha} \: y)  \: (f \: y) \: g ) : C [\alpha / \beta , \prog \: f / x]}{
    \deduce{\Gamma \vdash g : \Pi_{(\gamma : \bbV)}\Pi_{(f : \Pi_{(y : \ov{\tc{\gamma}})} \Acc \: (\pred \: \tc{\gamma} \: y))}((\Pi_{(y : \ov{\tc{\gamma}})} C[\pred \: \tc{\gamma} \: y / \beta , f \: y / x]) \to C[\gamma / \beta , \prog \: f / x])}{
      \deduce{\Gamma , \beta : \bbV , x : \Acc \: \beta \vdash C \; \type}{
        \deduce{\Gamma \vdash f : \Pi_{(y : \ov{\tc{\alpha}})} \Acc \: (\pred \: \tc{\alpha} \: y)}{\Gamma \vdash \alpha : \bbV}
      }
    }
  }
  \]
\end{prop}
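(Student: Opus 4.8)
The plan is to exploit the judgemental unfolding of $\Acc$ that was already exhibited inside the proof of Lemma~\ref{lem:accprop}: by the $\TI^{\mathsf{tc}}$-computation rule derived above, the type $\Acc \: \alpha = \rTV(\TI^{\mathsf{tc}} \: g_{\Acc} \: \alpha)$ is judgementally equal to $\Pi_{(x : \ov{\tc{\alpha}})} \Acc \: (\pred \: \tc{\alpha} \: x)$. Given this, I would simply set $\prog \: f := f$. A term $f : \Pi_{(x : \ov{\tc{\alpha}})} \Acc \: (\pred \: \tc{\alpha} \: x)$ is then literally a term of $\Acc \: \alpha$ by the type-conversion rule, so the $\intro$ rule holds on the nose; dually, every $t : \Acc \: \alpha$ may be applied to $y : \ov{\tc{\alpha}}$ to yield $t \: y : \Acc \: (\pred \: \tc{\alpha} \: y)$, and $\prog \: t = t$ holds definitionally.

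For the eliminator I would again invoke transfinite induction on the transitive closure (Lemma~\ref{lem:tctrans}), but with a motive that internalises the dependence of $C$ on the proof argument. Concretely, I take $F := \Pi_{(x : \Acc \: \beta)} C$, so that $\Gamma , \beta : \bbV \vdash F \; \type$, and define the step term
\[
\Theta := \lambda \alpha . \lambda H . \lambda s . \: g \: \alpha \: s \: (\lambda y . H \: y \: (s \: y))
\]
of type $\Pi_{(\alpha : \bbV)}(\forall \gamma \in \tc{\alpha} F[\gamma / \beta] \to F[\alpha / \beta])$. Here the bound variable $s : \Acc \: \alpha$ is used as a function through the judgemental equality above, and $g \: \alpha \: s$ typechecks because $\prog \: s = s$ makes the target $C[\alpha / \beta , \prog \: s / x]$ of $g$ coincide with $C[\alpha / \beta , s / x]$. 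I then put $\rE_{\Acc} \: \alpha \: t \: g := \TI^{\mathsf{tc}} \: \Theta \: \alpha \: t$; since $\TI^{\mathsf{tc}} \: \Theta \: \alpha : \Pi_{(s : \Acc \: \alpha)} C[\alpha / \beta , s / x]$, this has exactly the type demanded by the $\el$ rule.

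The $\eq$ rule then falls out of the $\TI^{\mathsf{tc}}$-computation rule. Unfolding $\rE_{\Acc} \: \alpha \: (\prog \: f) \: g = \TI^{\mathsf{tc}} \: \Theta \: \alpha \: f$ and rewriting $\TI^{\mathsf{tc}} \: \Theta \: \alpha$ to $\Theta \: \alpha \: (\lambda y . \TI^{\mathsf{tc}} \: \Theta \: (\pred \: \tc{\alpha} \: y))$, one $\beta$-reduces $\Theta$ to obtain
\[
g \: \alpha \: f \: (\lambda y . (\TI^{\mathsf{tc}} \: \Theta \: (\pred \: \tc{\alpha} \: y)) \: (f \: y)),
\]
and the inner application is by definition $\rE_{\Acc} \: (\pred \: \tc{\alpha} \: y) \: (f \: y) \: g$, which is precisely the right-hand side of the $\eq$ rule.

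The one genuinely delicate point is the proof-relevance of the motive: the bare $\TI^{\mathsf{tc}}$ supplies induction over $\alpha : \bbV$ only, whereas the $\Acc$-eliminator must range over a motive $C$ that also depends on $x : \Acc \: \beta$. Absorbing this dependence into $F := \Pi_{(x : \Acc \: \beta)} C$ is what lets the recursion close up, and it is here that the definitional behaviour $\prog \: f = f$, together with the $\TI^{\mathsf{tc}}$-computation rule obtained from \textbf{Function Extensionality}, is indispensable: without judgemental (as opposed to merely propositional) equalities the substituted instances of $C$ would fail to match and the eliminator would not typecheck.
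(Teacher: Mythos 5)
Your proposal is correct and follows essentially the same route as the paper's proof: you define $\prog$ as the identity via the judgemental unfolding $\Acc \: \alpha = \Pi_{(x : \ov{\tc{\alpha}})} \Acc \: (\pred \: \tc{\alpha} \: x)$, absorb the proof-argument dependence into the motive $\Pi_{(x : \Acc \: \beta)} C$ for $\TI^{\mathsf{tc}}$, and your step term $\Theta$ and the derivation of the $\eq$ rule coincide (up to renaming) with the paper's term $u$ and its chain of judgemental equalities. Your closing remark on why the judgemental (not merely propositional) computation rule for $\TI^{\mathsf{tc}}$ is indispensable also matches the paper's reasoning.
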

\begin{proof}
  As shown in the above proof of Lemma \ref{lem:accprop}, we have $\Acc \: \alpha = \Pi_{(x : \ov{\tc{\alpha}})} \Acc \: (\pred \: \tc{\alpha} \: x)$. So we can define $\prog$ as the identity function $\lambda f . f$ on $\Pi_{(x : \ov{\tc{\alpha}})} \Acc \: (\pred \: \tc{\alpha} \: x)$, and then obtain the introduction rule for $\Acc$.

  Next, we derive the $\Acc$-elimination rule. We first provide a term of type
  \[
  \Pi_{(\alpha : \bbV)} \Pi_{(c : \Acc \: \alpha)} C [\alpha / \beta , c / x].
  \]
  Defining $F := \lambda \alpha . \Pi_{(c : \Acc \: \alpha)} C [\alpha / \beta , c / x]$, we use Lemma \ref{lem:tctrans}, namely, transfinite induction on transitive closure to construct a term of this type. Take variables $\alpha , c$ and $g'$ of types
  \begin{align*}
    &\alpha : \bbV , \\
    &c : \Acc \: \alpha \text{ and } \\
    &g' : \Pi_{(\gamma : \bbV)}\Pi_{(f : \Pi_{(y : \ov{\tc{\gamma}})} \Acc \: (\pred \: \tc{\gamma} \: y))}(\Pi_{(y : \ov{\tc{\gamma}})} C[\pred \: \tc{\gamma} \: y / \beta , f \: y / x] \to C[\gamma / \beta , \prog \: f / x]) ,
  \end{align*}
  respectively. The induction hypothesis is the following assumption:
  \[
  f' : \forall \gamma \in \tc{\alpha} \Pi_{(d : \Acc \: \gamma)} C [\gamma / \beta , d / x] .
  \]
  Since we have $c : \Acc \: \alpha = \forall \gamma \in \tc{\alpha} \Acc \: \gamma$, it follows from IH that
  \[
  \lambda y . f' \: y \: (c \: y) : \Pi_{(y : \ov{\tc{\alpha}})} C[\pred \: \tc{\alpha} \: y / \beta , c \: y / x]
  \]
  holds, and we denote the term $\lambda y . f' \: y \: (c \: y)$ by $s$. We thus obtain $g' \: \alpha \: c \: s : C [\alpha / \beta , \prog \: c / x]$, and we also have
  \[
  u := \lambda \alpha . \lambda f' . \lambda c . g' \: \alpha \: c \: s : \Pi_{(\alpha : \bbV)} \Pi_{(f' : \forall \gamma \in \tc{\alpha} \Pi_{(d : \Acc \: \gamma)} C [\gamma / \beta , d / x])} \Pi_{(c : \Acc \: \alpha )} C [\alpha / \beta , c / x]
  \]
  because $\prog \: c$ is judgementally equal to $c$. It thus follows that
  \[
  \TI^{\mathsf{tc}} \: u : \Pi_{(\alpha : \bbV)} \Pi_{(c : \Acc \: \alpha)} C [\alpha / \beta , c / x]
  \]
  holds with the judgemental equality
  \[
  \TI^{\mathsf{tc}} \: u \: \alpha = u \: \alpha \: (\lambda x . \TI^{\mathsf{tc}} \: u \: (\pred \: \tc{\alpha} \: x)) : \Pi_{(c : \Acc \: \alpha)} C [\alpha / \beta , c / x] .
  \]

  Define $\rE_{\Acc} := \lambda \alpha ' . \lambda c ' . \lambda g' . \TI^{\mathsf{tc}} \: u \: \alpha ' \: c '$. It is obvious that $\rE_{\Acc} \: \alpha \: t \: g$ is of type $C [\alpha / \beta , t / x]$, hence we have $\Acc$-elimination rule. Moreover, we also obtain the $\Acc$-computation rule because the following equations hold judgementally:
  \begin{align*}
    \rE_{\Acc} \: \alpha \: (\prog \: f) \: g &= \TI^{\mathsf{tc}} \: u[g / g'] \: \alpha \: (\prog \: f) = u[g / g'] \: \alpha \: (\lambda x . \TI^{\mathsf{tc}} \: u[g / g'] \: (\pred \: \tc{\alpha} \: x)) \: (\prog \: f) \\
    &= g \: \alpha \: (\prog \: f) \: (\lambda y . (\lambda x . \TI^{\mathsf{tc}} \: u[g / g'] \: (\pred \: \tc{\alpha} \: x)) \: y \: (\prog \: f \: y)) \\
    &= g \: \alpha \: (\prog \: f) \: (\lambda y . \TI^{\mathsf{tc}} \: u[g / g'] \: (\pred \: \tc{\alpha} \: y) \: (\prog \: f \: y)) \\
    &= g \: \alpha \: f \: (\lambda y . \TI^{\mathsf{tc}} \: u[g / g'] \: (\pred \: \tc{\alpha} \: y) \: (f \: y)) \\
    &= g \: \alpha \: f \: (\lambda y . (\lambda \alpha ' . \lambda c' . \lambda g' . \TI^{\mathsf{tc}} \: u \: \alpha ' \: c') \: (\pred \: \tc{\alpha} \: y) \: (f \: y) \: g) \\
    &= g \: \alpha \: f \: (\lambda y . \rE_{\Acc} \: (\pred \: \tc{\alpha} \: y) \: (f \: y) \: g) ,
  \end{align*}
  where the terms $\prog$ was defined as an identity function.
\end{proof}

\end{document}